  \tikzstyle{vec}=[circle,inner sep=1pt,outer sep=-1pt,fill]
  \tikzstyle{border}=[thick]
  \tikzstyle{favborder}=[border,dotted]
  \tikzstyle{exclborder}=[border,dashed]
  \tikzstyle{vec}=[circle,inner sep=1pt,outer sep=-1pt,fill]
  \tikzstyle{border}=[thick]
  \tikzstyle{favborder}=[border,dotted]
  \tikzstyle{exclborder}=[border,dashed]
\newenvironment{proofof}[1]{\begin{trivlist}\item[\hskip\labelsep{\textsc{Proof
  of {#1}.\ }}]}{\hspace*{\fill} {\qed}\end{trivlist}}
\newenvironment{tbs}{%
   \small\tt
   \begin{enumerate}[$\blacktriangleright$]}{\end{enumerate}}
\newcommand{\btbs}{\begin{tbs}}                                                                      
\newcommand{\etbs}{\end{tbs}}
\newcommand{\Reals}{\mathbb{R}} 
\newcommand{\complex}{\mathbb{C}}
 \DeclareMathOperator{\posi}{posi}
  \DeclareMathOperator{\cl}{cl}
\newcommand{\pspace}{\Omega}
\newcommand{\reals}{\Reals}
\newcommand{\gambles}{\mathcal{L}}
\newcommand{\He}{\mathcal{H}}
\newcommand{\theory}{\mathcal{T}}
\newcommand{\btheory}{\mathcal{T}^\star}
\newcommand{\domain}{\mathcal{K}}
\newcommand{\assess}{\mathcal{G}}
\newcommand{\bdomain}{\mathcal{C}}
\newcommand{\nonnegative}{\gambles^{\geq} }
\newcommand{\bnonnegative}{\Sigma^{\geq}}
\newcommand{\negative}{\gambles^{<} }
\newcommand{\bnegative}{\Sigma^{<} }
\newcommand{\states}{\mathscr{S}}
\newtheorem{lemma}{Lemma}
\newtheorem{theorem}{Theorem}
\newtheorem{definition}{Definition}
\newtheorem{proposition}{Proposition}
\newtheorem{example}{Example}
\title{Computational Complexity and the Nature of Quantum Mechanics \\ (Extended version)}
\author{A. Benavoli, A. Facchini, M. Zaffalon\\
IDSIA, Manno, Switzerland}
\begin{document}

\maketitle

\begin{abstract}
Quantum theory (QT) has been confirmed by numerous experiments, yet we still cannot fully grasp the meaning of the theory. As a consequence, the quantum world appears  to us paradoxical.
Here we shed new light on QT by having it follow from two main postulates
(i) the theory should be logically consistent; (ii) inferences in the theory should be computable in polynomial time.
The first postulate is what we  require to each well-founded mathematical theory. The computation postulate defines the physical component of the theory. %But it is actually stronger than that: it states that computation is more primitive than physics.
We show that the computation postulate is the only true divide between QT, seen as a generalised theory of probability,
and classical probability. 
% As a result, for a subject living inside QT, all is coherent. For us, living in the classical, and somewhat idealised, probabilistic world , QT displays some inconsistencies: precisely those that cannot be fixed in polynomial time.
All quantum paradoxes, and entanglement
in particular, arise from the clash of trying to reconcile a computationally intractable, somewhat idealised,  
theory (classical physics) with a computationally tractable theory (QT) or, in other
words, from regarding physics as fundamental rather than computation.

% Ricordarsi
% \begin{itemize}
%  \item di mettere biblio nel testo compiandola da bbl
%  \item togliere le macro.tex e copiarle nel main file
% \end{itemize}

\end{abstract}
 
\tableofcontents

% \begin{enumerate}
%  \item Inserire Corollario connessione vecchio paper
%  \item discussione misura (partition of identity e partition of unity)
% \end{enumerate}

%%%%%%%%%
%%%%%%%%%
%%SECTION: INTRO
%%%%%%%%%
%%%%%%%%%

\section{Introduction}

% \subsection{What this paper is about}

Quantum theory (QT) is one of the most fundamental, and accurate, mathematical descriptions of our physical world. It dates back to the 1920s, and in spite of nearly one century passed by since its inception, we do not have a clear understanding of such a theory yet. In particular, we cannot fully grasp the \emph{meaning} of the theory: why it is the way it is. As a consequence, we cannot come to terms with the many paradoxes it appears to lead to --- its so-called ``quantum weirdness''.

This paper aims at finally explaining QT while giving a unified reason for its many paradoxes. We pursue this goal by having QT follow from two main postulates:
\begin{description}
\item[(Coherence)] The theory should be logically consistent. % a non trivial deductive structure. 
\item[(Computation)] Inferences in the theory should be computable in polynomial time. 
\end{description}

The first postulate is what we essentially require to each well-founded mathematical theory, be it physical or not: it has to be based on a few axioms and rules from which we can unambiguously derive its mathematical truths. %but no logical contradiction. 
%It essentially boils down in providing a  criterion for a set of assessments to be logically consistent (hence the name coherence postulate). 
The second postulate will turn out to be central. It requires that there should be an efficient way to execute the theory in a computer.
%only polynomial tasks can be ``executed'' by a (non-idealised, self-consistent) physical system.

QT is an abstract theory that can be studied detached from its physical applications. For this reason, people often wonder which part of QT actually pertains to physics. In our representation, the answer to this question shows itself naturally: the computation postulate defines the physical component of the theory. But it is actually stronger than that: it states that computation is more primitive than physics. %This is key to explaining quantum weirdness: we see paradoxes when we try to force a physical interpretation upon QT, whose nature is instead computational.
%this appears to call for embracing a world view based on computation, from which our physical perception of the world is derived. 

Let us recall that QT is widely regarded as a ``generalised'' theory of probability. In this paper we make the adjective ``generalised'' precise. In fact, our coherence postulate leads to a theory of  probability, in the sense that it disallows ``Dutch books'': this means, in gambling terms, that a bettor on a quantum experiment cannot be made a sure loser by exploiting inconsistencies in their probabilistic assessments. But probabilistic  inference is in general NP-hard. By imposing the additional postulate of computation, the  theory becomes one of ``computational rationality'': one that is consistent (or coherent), up to the degree that polynomial computation allows. This weaker, and hence more general, theory of probability is QT.

As a result, for a subject living inside QT, all is coherent. For us, living in the classical, and somewhat idealised, probabilistic world (not restricted by the computation postulate), QT displays some inconsistencies: precisely those that cannot be fixed in polynomial time. All quantum paradoxes, and entanglement in particular, arise from the clash of these two world views: i.e., from trying to reconcile an unrestricted theory (i.e., classical physics) with a  theory of computational rationality (quantum theory). Or, in other words, from regarding physics as fundamental rather than computation.

But there is more to it. We show that the theory is ``generalised'' also in another direction, as QT turns out to be a theory of ``imprecise'' probability: in fact, requiring the computation postulate is similar to defining a probabilistic model using only a finite number of moments; and therefore, implicitly, to defining the model as the set of all probabilities compatible with the given moments. In QT, some of these compatible probabilities can actually be \emph{signed}, that is, they allow for ``negative probabilities''. In our setting, these have no meaning per se, they are just a mathematical consequence of polynomially bounded coherence (or rationality).

\subsection{Relations with the literature}
Since its foundation, there have been two main ways to explain the differences between QT and classical probability. The first one, that goes back to Birkhoff and von Neumann \cite{birkhoff1936logic}, explains this differences  with the premise that,
in QT, the Boolean algebra of events is taken over by the ``quantum logic'' of projection operators on a Hilbert space.
The second one is based on the view that the quantum-classical clash is due to the appearance of negative probabilities \cite{feynman1987negative}.

Recently, there has been a research effort, the so-called ``quantum reconstruction'', which amounts to trying to rebuild the theory from more primitive postulates.
The search for alternative axiomatisations of QT has been approached following different avenues: extending
Boolean  logic \cite{birkhoff1936logic,mackey2013mathematical,jauch1963can}, using operational primitives \cite{hardy2011foliable,hardy2001quantum,barrett2007information,chiribella2010probabilistic},
using  information-theoretic postulates \cite{barrett2007information,barnum2011information,van2005implausible,pawlowski2009information,dakic2009quantum,fuchs2002quantum,brassard2005information,mueller2016information},
building upon the subjective foundation of probability \cite{Caves02,Appleby05a,Appleby05b,Timpson08,longPaper, 
Fuchs&SchackII,mermin2014physics,pitowsky2003betting,Pitowsky2006,benavoli2016quantum,benavoli2017gleason} and starting from the phenomenon of quantum nonlocality \cite{barrett2007information,van2005implausible,pawlowski2009information,popescu1998causality,navascues2010glance}. 

A common trait of all these approaches is that of regarding QT as a generalised theory of probability. But why is probability generalised in such a way, and what does it mean? Our paper appears to be the first to show that the answer to this question rests in the computational intractability of classical probability theory contrasted to the polynomial-time complexity of QT.

Note that there have been previous investigations into the computational nature of QT but they have mostly focused on topics of undecidability and of potential computational advantages of non-standard theories involving modifications of quantum theory
\cite{bacon2004quantum,aaronson2004quantum,aaronson2005quantum,chiribella2013quantum}.\footnote{The undecidability results in QT are usually obtained via a limiting argument, as the number of particles goes to infinity (see, e.g., \cite{cubitt2015undecidability}). These results do not apply to our setting as we rather take the stance that the Universe is a finite physical system.} 
% A computational reason about why macroscopic systems cannot be quantum in nature is provided in  \cite{bolotin2014computational}.
% We also believe that the transition between quantum and classical states has a purely computational nature and we intend to study that in our future work
% (in particular we  will focus on indistinguishable particles).

\subsection{Outline of the paper}
Section~\ref{sec:dg} is concerned with the coherence principle. We recall how Bayesian probability can be derived (via mathematical duality) from a set of logical axioms. Addressing self-consistency (coherence or rationality) in such a setting is a standard task in logic; in practice, it reduces to prove that a certain real-valued bounded function is non-negative. 

Section~\ref{sec:comp} details the computation principle. We consider the problem of verifying the non-negativity of a function as above. This problem is generally undecidable or, when decidable, NP-hard. We make the problem polynomially solvable by redefining the meaning of (non-)negativity. We give our fundamental theorem (Theorem~\ref{th:fundamental}) showing that the redefinition is at the heart of the clash between Bayesian probability and computational rationality.

We show in Section~\ref{sec:coheqm} that QT is a special instance of computational rationality and hence that Theorem~\ref{th:fundamental} is not only the sole difference between quantum and classical probability, but also  the distinctive reason for all  quantum paradoxes; this latter part is discussed in Section~\ref{sec:entan0}. 
In particular, to give further insight about the quantum-classical clash, in Section~\ref{sec:local} we reconsider the question of local realism in the light of computational rationality; in Section~\ref{sec:witness} we show that the witness function, in the fundamental ``entanglement witness theorem'',  is nothing else than a negative function whose negativity cannot be assessed in polynomial time---whence it is not ``negative'' in QT.

Moreover, using Theorem \ref{th:fundamental}, in Section~\ref{sec:ent_not_only} we devise an example of a computationally tractable theory of probability that is unrelated to QT but that admits entangled states. This shows in addition that the ``quantum logic''  and the ``quasi-probability'' foundations of QT are two faces of the same coin, being natural consequences of the computation principle.

We finally discuss the results in Section~\ref{sec:discussions}. The technical proofs of the paper are in Appendix.

 \section{Desirability}
 \label{sec:dg}
 %%%%%%%%
%%%%%%%% RATIO
%%%%%%%%
 
\subsection{Coherence postulate}
De Finetti's subjective foundation of probability \cite{finetti1937}   is  based on the notion   of rationality (self-consistency or coherence). This approach has then been further developed in \citep{williams1975,walley1991}, giving rise to the so-called \emph{theory of desirable gambles} (TDG).\footnote{Contrarily to what it may seem, TDG is not an ``exotic'' theory of probability; loosely speaking, it is just an equivalent reformulation of the well-known Bayesian decision theory (\`a la Anscombe-Aumann \cite{anscombe1963}) once this is extended to deal with incomplete preferences \cite{zaffalon2017a,zaffalon2018a}.} In this setting probability is a derived notion in the sense that it can be inferred via mathematical duality from a set of logical axioms that one can interpret 
as rationality requirement in the way a subject, let us call her Alice, accepts gambles on the results of an uncertain experiment. It goes as follow. 

Let $\pspace$ denote the possibility space of an experiment (e.g., $\{Head, Tail\}$ 
or $\complex^n$ in QT). A gamble $g$ on $\pspace$ is a bounded real-valued function of $\pspace$, interpreted as
an uncertain reward. It plays the traditional role of variables or, using a physical parlance, of \emph{observables}. 
In the context we are considering, accepting a gamble $g$ by an agent is regarded as a
commitment to receive, or pay (depending on the sign), $g(\pspace)$ \emph{utiles}\footnote{Abstract units of utility, indicating the satisfaction derived from an economic transaction; we can approximately identify it with money provided we deal with small amounts of it \cite[Sec.~3.2.5]{finetti1974}.} whenever $\pspace$ occurs.
Given this view, if by $\gambles$ we denote the set of all the gambles on $\pspace$, the subset of all non-negative gambles, that is, of gambles for which Alice is never expected to lose utiles, is given by  
$\nonnegative\coloneqq \{g \in \gambles: \inf g\geq0 \}$. Analogously, 
negative gambles, those gambles for which Alice will certainly lose some utiles, even an epsilon, is defined as $\negative\coloneqq \{g \in \gambles: \sup g < 0 \}$.
In what follows, with $\mathcal{G} \subset \gambles$ we denote a finite\footnote{We will comment on the case when $\mathcal{G}$ may not be finite.} set of gambles that Alice finds desirable: these are the gambles that she is willing to accept and thus commits herself to the corresponding transactions.

The crucial question is now to provide a criterion for a set $\mathcal{G}$ of gambles representing assessments of desirability to be called \emph{rational}. Intuitively Alice is rational if she avoids sure losses: that is, if, by considering the implications of what she finds desirable, she is not forced to find desirable a negative gamble. This postulate of rationality is called ``no arbitrage'' in economics and ``no Dutch book'' in the subjective foundation of probability.
In  TDG we formulate it thorough the  notion of logical coherence which, despite the informal interpretation given above, is a purely syntactical (structural) notion. To show this, we need to define an appropriate logical calculus, that is, the tautologies and the inference rules (characterising the set of gambles that Alice must find desirable as a consequence of having desired $\mathcal{G}$ in the first place), and based on it to characterise the family of consistent sets of assessments..

Given that non-negative gambles
may increase Alice's utility without ever decreasing it, we have that:
\begin{enumerate}[label=\upshape A0.,ref=\upshape A0]
\item\label{eq:taut} $\nonnegative$ should always be desirable.
\end{enumerate}
This defines the tautologies of the calculus. We thus  characterise the
set of gambles that we must find desirable as a consequence of having desired $\mathcal{G}$ in
the first place, that is its the deductive closure of a set $\assess$. Those
gambles are the conical hull of gambles in $\mathcal{G}$. Indeed, whenever $f,g$ are desirable for Alice, then any positive linear combination of them should also be desirable (this amounts to assuming that Alice has a linear utility scale, which is a standard assumption in probability):
\begin{equation}
\label{eq:posi}
\posi(\mathcal{G}):=\left\{h \in \gambles: h=\sum_{i=1}^{\ell} \lambda_i g_i ~~\textit{ with }~~ \lambda_i\geq0, ~g_i\in \mathcal{G}, \ell \in \mathbb{N}\right\}.
\end{equation}
Moreover, we can  assume that if Alice find all gambles  of type $g + \epsilon$ desirable, for any arbitrary small positive $\epsilon$, then she should also find $g$ desirable.
This means that the actual deductive closure we are
after is given by the map $N$  associating to $\mathcal{G}$ the set:

\begin{enumerate}[label=\upshape A1.,ref=\upshape A1]
\item\label{eq:NE} $N(\assess)\coloneqq\cl \posi(\nonnegative\cup \mathcal{G})$.
\end{enumerate}
where $\cl$ is the topological closure operator given the supremum norm topology on $\gambles$. The set $N(\assess)$  is the smallest \textbf{closed convex cone} that includes $\nonnegative\cup \mathcal{G}$, and it is called the \emph{natural extension} of $\assess$, and sometimes is simply denoted by $\domain$.
Note that $\domain = \posi(\nonnegative\cup \mathcal{G})$ whenever $\mathcal{G}$ is finite.

In a betting system, a sure loss for an agent is represented by a negative gamble. Indeed, whenever the outcome of the experiment may be, accepting $g \in \negative$ means to accept to pay some non zero utiles. We therefore say that:

\begin{definition}[Coherence postulate]
\label{def:avs}
 A set $\domain$ of desirable gambles is \emph{coherent} if and only if
 \begin{enumerate}[label=\upshape A2.,ref=\upshape A2]
\item\label{eq:sl} $ \negative \cap \domain=\emptyset$.
\end{enumerate}
\end{definition}
\noindent As simple as it looks, expression \ref{eq:sl} alone captures the coherence postulate as formulate in the introduction
 in case of classical probability theory. This will be make precise in Section \ref{subsec:dual}.
 %By an abuse of language, we  sometimes say that a set of assessment $\assess$ is coherent whenever its natural extension (deductive closure) $\domain$ is coherent.

%Note that $\domain$ is coherent if and only if $-1 \in \domain$ if and only if $\domain$ does not coincide with the whole space of gambles; therefore $-1$ can be regarded as playing the role of the Falsum and \ref{eq:sl} can be reformulated as $-1 \notin \domain$. 
The following result, in addition to providing a necessary and sufficient condition for coherence,  states that $-\mathbbm{1}$ can be regarded as playing the role of the Falsum and \ref{eq:sl} can be reformulated as $-\mathbbm{1} \notin \domain$. 
Note that. we have introduced the symbol $\mathbbm{1}$ to distinguish the unitary function in $\gambles$, i.e.,  $\mathbbm{1}(\omega)=1$ for all $\omega \in  \Omega$,
from the scalar (real number) $1$. This will be convenient later in Section \ref{sec:comp}.
It is an immediate consequence of  Theorem 3.8.5 and Claim 3.7.4 in \cite{Walley91}.
\begin{proposition}\label{prop:coco1}
Let $\assess$ be a set of gambles. The following claims are equivalent
\begin{enumerate}
\item $N(\assess)$ is coherent, 
\item $\posi (\assess) \cap \negative = \emptyset$,
\item $-\mathbbm{1} \notin \posi (\assess \cup \nonnegative) $,
\item $g \notin N(\assess)$, for some gamble $g$.
\end{enumerate}
%If $N(\assess)$ is not coherent, then $N(\assess)= \gambles$.
\end{proposition}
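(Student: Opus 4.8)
The plan is to verify the four equivalences through the chain $1\Rightarrow2$, $2\Leftrightarrow3$, $2\Rightarrow1$, and $1\Leftrightarrow4$, leaning on two elementary observations. First, $\nonnegative$ is itself a convex cone (a positive combination of gambles with nonnegative infimum again has nonnegative infimum), so every $h\in\posi(\nonnegative\cup\assess)$ splits as $h=f+p$ with $f\in\posi(\assess)$ and $p\in\nonnegative$. Second, membership in $\negative$ and $\nonnegative$ is governed entirely by $\sup$ and $\inf$. The only genuinely topological point is the passage to the closure in $2\Rightarrow1$, and I isolate it deliberately.

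The algebraic core is $2\Leftrightarrow3$, which I would prove by contraposition in both directions without touching the closure. For $\neg2\Rightarrow\neg3$, given $f\in\posi(\assess)\cap\negative$ I set $\beta\coloneqq-\sup f>0$; then $f/\beta\leq-\mathbbm{1}$ pointwise, so $p\coloneqq-\mathbbm{1}-f/\beta\geq0$ lies in $\nonnegative$ and $-\mathbbm{1}=f/\beta+p\in\posi(\assess\cup\nonnegative)$. Conversely, for $\neg3\Rightarrow\neg2$, I write $-\mathbbm{1}=f+p$ with $f\in\posi(\assess)$, $p\in\nonnegative$; then $f=-\mathbbm{1}-p$ satisfies $\sup f\leq-1<0$, so $f\in\posi(\assess)\cap\negative$. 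The same splitting-plus-$\sup$ bookkeeping shows $\posi(\nonnegative\cup\assess)\cap\negative=\emptyset$ whenever $\posi(\assess)\cap\negative=\emptyset$, a fact I record for the next step.

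For $1\Rightarrow2$ I use only $\posi(\assess)\subseteq\posi(\nonnegative\cup\assess)\subseteq N(\assess)$, so coherence of $N(\assess)$ immediately forbids $\posi(\assess)$ from meeting $\negative$. The reverse $2\Rightarrow1$ is where the closure enters: having established $\posi(\nonnegative\cup\assess)\cap\negative=\emptyset$, I suppose some $h\in\cl\posi(\nonnegative\cup\assess)$ has $\sup h<0$ and take a sequence $h_n\in\posi(\nonnegative\cup\assess)$ with $\|h_n-h\|_\infty\to0$. Since $h_n\leq h+\|h_n-h\|_\infty\mathbbm{1}$, for $n$ large enough $\|h_n-h\|_\infty<\tfrac12|\sup h|$, whence $\sup h_n\leq\sup h+\|h_n-h\|_\infty<\tfrac12\sup h<0$, placing $h_n\in\negative$ and contradicting the previous step. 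Finally, $1\Leftrightarrow4$: coherence gives $-\mathbbm{1}\in\negative\setminus N(\assess)$ directly, so $4$ holds with witness $-\mathbbm{1}$; and for $\neg1\Rightarrow\neg4$, if $h\in\negative\cap N(\assess)$ then for an arbitrary gamble $f$, with $\lambda\coloneqq\max\{0,-\inf f/|\sup h|\}$ one has $f-\lambda h\in\nonnegative$, whence $f=\lambda h+(f-\lambda h)\in N(\assess)$ since $N(\assess)$ is a convex cone containing both $h$ and $\nonnegative$; thus $N(\assess)=\gambles$ and no gamble lies outside it.

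The main obstacle is precisely the closure step in $2\Rightarrow1$, since $\posi(\nonnegative\cup\assess)$ need not be closed. What makes it go through is that $\negative$ is an \emph{open} set in the supremum norm: the strict inequality $\sup h<0$ leaves room for a uniform perturbation, so membership in $\negative$ survives small sup-norm errors. This is exactly why the sure-loss set is defined with $\sup<0$ rather than $\sup\leq0$, and it is the concrete content behind the cited separation results of Walley.
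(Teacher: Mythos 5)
Your proof is correct. A caveat on the comparison: the paper never writes out a proof of this proposition at all---it is dispatched with a citation to Walley (Theorem~3.8.5 and Claim~3.7.4)---but its Appendix proves the P-coherent analogue (Propositions~\ref{prop:cohe} and~\ref{prop:eqq}) by essentially the chain you chose, so your argument is best read as a correct, self-contained classical instantiation of that template. In detail: your splitting $h=f+p$ with $f\in\posi(\assess)$, $p\in\nonnegative$, together with the normalisation $\beta=-\sup f$ giving $-\mathbbm{1}=f/\beta+p$, is the same manipulation the appendix performs when it extracts $-\mathbbm{1}_R=(g+f)/\delta$ from a negative element of the cone; your sequence argument for the step $2\Rightarrow1$, resting on the openness of $\negative$ in the supremum norm, is exactly the classical content of Lemma~\ref{lem:closure} (there phrased as $f<_B 0\Rightarrow f+\delta<_B 0$ for some $\delta>0$, which in the abstract setting requires the hypothesis that $\bnonnegative$ be closed, automatic here since $\nonnegative$ is closed in sup norm); and your treatment of item~4, absorbing an arbitrary gamble $f$ via $f=\lambda h+(f-\lambda h)$ with $\lambda=\max\{0,-\inf f/\abs{\sup h}\}$ to conclude $N(\assess)=\gambles$, is the classical case of Proposition~\ref{prop:eqq}, whose condition (*) holds trivially in $\gambles$ because every bounded $f$ satisfies $f+\epsilon\in\nonnegative$ for $\epsilon\geq-\inf f$. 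What your write-up buys is that the single genuinely topological ingredient (openness of the strict-negativity cone) is isolated and verified concretely rather than imported; what the paper's abstract route buys is that the identical chain then runs verbatim for any P-coherent theory $\btheory$, which is what the appendix actually needs downstream. I find no gap: the edge cases (empty $\assess$, or a decomposition with $f=0$, which would force $-\mathbbm{1}\in\nonnegative$) all collapse harmlessly, and sequences suffice for the closure since the sup-norm topology is metric.
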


Postulate \ref{eq:sl}, which presupposes postulates \ref{eq:taut} and \ref{eq:NE},  provides the normative definition of TDG, referred to by $\theory$. Based on it, in Subsection \ref{subsec:dual} we derive the axioms of classical, Bayesian, probability theory. 
This is simply based on the fact that, geometrically,  $\domain$ is a closed convex cone.  It is thence clear from the above definition that $\nonnegative$ is the minimal coherent set of desirable gambles. It characterises a state of full ignorance -- a subject
without knowledge about $\omega$ should only  accept nonnegative gambles.
Conversely, a coherent set of desirable gambles is called \emph{maximal} if there is no other coherent set of desirable gambles including it. 
In terms of rationality, a maximal coherent set of desirable gambles is a set of gambles that Alice cannot extend  by accepting
other gambles while keeping at the same time rationality. It also represents a situation in
which Alice is sure about the state of the system, as we will show in the next examples and section.

\begin{example}
Let us consider  the toss of a fair coin  $\Omega=\{Head,Tail\}$.
A gamble $g$ in this case has two components $g(Head)=g_1$ and $g(Tail)=g_2$.
 If Alice accepts $g$ then  she commits herself to receive/pay $g_1$ if  the outcome is Heads and   $g_2$ if Tails.
Since a gamble is in this case an element of $\mathbb{R}^2$,  $g=(g_1,g_2)$, we can plot the gambles  Alice accepts  in a 2D coordinate system with coordinate  $g_1$ and $g_2$, see Figure \ref{fig:coin0}.
\ref{eq:taut} says that Alice is willing to accept any gamble $g=(g_1,g_2)$ that, no matter the result of the experiment, may increase her wealth without ever decreasing it, that is with $g_i \geq 0$ -- Alice always accepts the first quadrant, Figure~\ref{fig:coin0}(a). 
% In fact, no matter the result of the experiment, these gambles always increase her wealth without ever decreasing it.  
Similarly.  Alice does not accept  any gamble $g=(g_1,g_2)$ that will surely decrease her wealth, that is with  $g_i<0$ (this follows by \ref{eq:sl}). In other words, Alice always does not accept  the interior of the third quadrant, Figure~\ref{fig:coin0}(b).   Then we ask Alice about $g=(-0.1,1)$ -- she loses $0.1$ if Heads and wins $1$  if Tails.
Since Alice knows that the coin is fair, she accepts this gamble as well as all the gambles of the form  $\lambda g$ with  $\lambda>0$, because this is just a ``change of currency'' (scaling).  Similarly, she accepts all the gambles $ g + h$ for any  $h \in \gambles^+$, since these gambles are even more favourable for her (additivity). 
Scaling and additivity follow by \ref{eq:NE}.

Now, we can ask Alice about $g=(1,-0.1)$ and the argument is symmetric to the above case.
We therefore obtain the following set of desirable gambles (see Figure~\ref{fig:coin0}(c)):
$\domain_2=\{g \in \mathbb{R}^2 \mid 10g_1+g_2\geq 0 \text{ and } g_1+10g_2\geq 0\}$.
Finally, we can ask Alice about $g=(-1,1)$ -- she loses $1$ if Heads and wins $1$  if Tails.
Since the coin is fair, Alice  accepts  this gamble.  A similar conclusion can be derived for the symmetric gamble $g=(1,-1)$. Figure~\ref{fig:coin0}(d) is her final set of desirable gambles about the experiment concerned with the toss of a fair coin, which in a formula becomes
$\domain_3=\{g \in \mathbb{R}^2 \mid g_1+g_2\geq 0\}$. The resulting closed convex cone is maximal.
Alice does not accept any other gamble. In fact, if Alice  would also accept for instance $h=(-2,0.5)$ then, 
since she has also accepted $g=(1.5,-1)$, i.e., $g\in \domain_3$, she must also accept $g+h$ (because of \ref{eq:NE}).
However, $g+h=(-0.5,-0.5)$ is always negative, Alice always loses utiles in this case. In other words, by accepting $h=(-2,0.5)$ Alice incurs a sure loss -- she is irrational (\ref{eq:sl} does not hold).
\end{example}
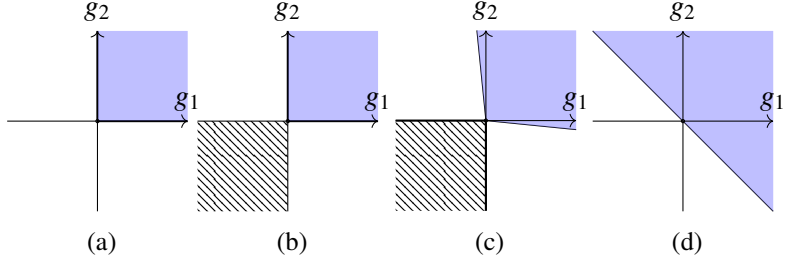
\begin{figure*}[t!]
    \centering
    \begin{subfigure}[t]{0.2\textwidth}
        \centering
\begin{tikzpicture}[scale=1.2]
    \draw[->] (-1,0) coordinate (xl) -- (1,0) coordinate (xu) node[above] {$g_1$};
    \draw[->] (0,-1) coordinate (yl) -- (0,1) coordinate (yu) node[above] {$g_2$};
        \draw (0,0) circle (0.5pt);
    \begin{pgfonlayer}{background}
      \draw[border] (0,0) -- (0,1) coordinate (a1away);
      \draw[border] (0,0) -- (1,0) coordinate (a2away);
      \fill[blue!50,  opacity=0.5] (0,0) -- (1,0) -| (1,1) --(0,1);
    \end{pgfonlayer}
  \end{tikzpicture}
        \caption{}
    \end{subfigure}%
    \begin{subfigure}[t]{0.2\textwidth}
        \centering
\begin{tikzpicture}[scale=1.2]
    \draw[->] (-1,0) coordinate (xl) -- (1,0) coordinate (xu) node[above] {$g_1$};
    \draw[->] (0,-1) coordinate (yl) -- (0,1) coordinate (yu) node[above] {$g_2$};
    \draw (0,0) circle (0.5pt);
        \begin{pgfonlayer}{background}
      \draw[border] (0,0) -- (0,1) coordinate (a1away);
      \draw[border] (0,0) -- (1,0) coordinate (a2away);
      \fill[blue!50,  opacity=0.5] (0,0) -- (1,0) -| (1,1) --(0,1);
    \end{pgfonlayer}
    \begin{pgfonlayer}{background}
      \fill[pattern=north west lines, pattern color=black] (0,0) -- (-1,0) -| (-1,-1) --(0,-1);
    \end{pgfonlayer}
  \end{tikzpicture}
        \caption{}
    \end{subfigure}
        \begin{subfigure}[t]{0.2\textwidth}
        \centering
 \begin{tikzpicture}[scale=1.2]
    \draw[->] (-1,0) coordinate (xl) -- (1,0) coordinate (xu) node[above] {$g_1$};
    \draw[->] (0,-1) coordinate (yl) -- (0,1) coordinate (yu) node[above] {$g_2$};
    \draw (0,0) circle (0.5pt);
    \begin{pgfonlayer}{background}
      \draw[border] (0,0) -- (0,-1) coordinate (a1away);
      \draw[border] (0,0) -- (-1,0) coordinate (a2away);
      \fill[pattern=north west lines, pattern color=black] (0,0) -- (-1,0) -| (-1,-1) --(0,-1);
    \end{pgfonlayer}
    \begin{pgfonlayer}{background}
      \draw[-] (0,0) -- (-0.1,1) coordinate (a1away);
      \draw[-] (0,0) -- (1,-0.1) coordinate (a2away);
      \fill[blue!50,  opacity=0.5] (0,0) -- (a1away) -| (a2away) --(0,0);
    \end{pgfonlayer}
  \end{tikzpicture}
          \caption{}
    \end{subfigure}
        \begin{subfigure}[t]{0.2\textwidth}
        \centering
  \begin{tikzpicture}[scale=1.2]
    \draw[->] (-1,0) coordinate (xl) -- (1,0) coordinate (xu) node[above] {$g_1$};
    \draw[->] (0,-1) coordinate (yl) -- (0,1) coordinate (yu) node[above] {$g_2$};
    \draw (0,0) circle (0.5pt);
    \begin{pgfonlayer}{background}
      \draw[-] (0,0) -- (-1,1) coordinate (a1away);
      \draw[-] (0,0) -- (1,-1) coordinate (a2away);
      \fill[blue!50,  opacity=0.5] (0,0) -- (a1away) -| (a2away) --(0,0);
    \end{pgfonlayer}
  \end{tikzpicture}
        \caption{}
    \end{subfigure}
    \caption{Alice's sets of coherent set of desirable gambles for the experiment of tossing a fair coin.}
    \label{fig:coin0}
\end{figure*}

%Postulates \ref{eq:taut}--\ref{eq:sl}  provide the normative definition of TDG, which we refer to by $\theory$. Based on them we derive the axioms of classical, Bayesian, probability theory. 
%Assume $\domain$ is coherent. 
%We give $\domain$ a probabilistic interpretation by observing that the mathematical dual of $\domain$ is a closed convex set of probabilities:
%\begin{equation}
%\label{eq:dual}
%\begin{aligned}
% \mathcal{P}=\left\{\mu  \in \mathsf{S} \Big| \int_{\pspace} g(\pspace) d\mu(\pspace)\geq0, ~\forall  g\in \mathcal{G}\right\},\\
%\end{aligned}
%\end{equation}
%where $\mathsf{S}=\{ \mu \in \mathcal{M} \mid \inf \mu \geq0,~\int_{ \pspace}  d\mu(\pspace)=1\}$  is the set of all probabilities in $\pspace$, and $ \mathcal{M}$ the set of all charges on $\pspace$ *** and $\int$ is the Dunford integral? *** We have derived the axioms of probability---a non-negative function that integrates to one---from the postulates \ref{eq:taut}--\ref{eq:sl}. 
%Hence, whenever an agent is coherent, Equation \eqref{eq:dual} states that desirability corresponds to non-negative expectation (for all probabilities in $\mathcal{P}$). When $\domain$ is incoherent, then $\mathcal{P}$ turns out to be empty---there is no probability compatible with the assessments in $\domain$.

%%%%%%%%
%%%%%%%% INFERENCE
%%%%%%%%

\subsection{Inference}
\label{sec:primalinference}
In the operational interpretation of $\theory$, agents can buy/sell gambles from/to each other.
Therefore, an agent must be able to determine the selling/buying prices for gambles.
This can be formulated as an inference procedure on  $\domain$. For simplicity, we consider finite sets of assessments, and denote by $| A|$  the cardinality of a finite set $A$.

\begin{definition}
 Let $\assess$ be a finite set of assessments of desirability, and $N(\assess)$ be a coherent set of
desirable gambles. Given  $f \in \gambles$, we denote with
\begin{equation}
\label{eq:lp}
\begin{aligned}
 \underline{E}(f):=&\sup_{\gamma_0\in \reals, \lambda_i \in \reals^+} \gamma_0\\
 &s.t:\\
 &f(\omega) - \gamma_0  -\sum\limits_{i=1}^{|\mathcal{G}|} \lambda_i g_i(\omega) \geq0~~~~\forall \omega \in \Omega,
\end{aligned}
\end{equation}
the lower prevision of $f$. The upper prevision of $f$ is equal to $\overline{E}(f)=-\underline{E}(-f)$.
% \begin{equation}
% \label{eq:up}
% \begin{aligned}
%  \overline{E}(f):=&\inf_{\gamma_0\in \reals, \lambda_i \in \reals^+} \gamma_0\\
%  &s.t:\\
%  & \gamma_0 -f(\omega)  -\sum\limits_{i=1}^{|\mathcal{G}|} \lambda_i g_i(\omega) \geq0~~~~\forall \omega \in \Omega,
% \end{aligned}
% \end{equation}
% and, it holds that $\overline{E}(f)=-\underline{E}(-f)$.
\end{definition}

The lower prevision of a gamble is Alice's supremum buying price for $f$, i.e., how much she should pay to buy
the gamble $f$. The upper prevision is Alice's infimum selling price for $f$, i.e., how much she should ask to sell 
the gamble $f$. We will show in Section \ref{subsec:dual} that the lower and upper prevision are just 
 the lower and upper expectation for the gamble $f$.
By exploiting \eqref{eq:posi}--\eqref{eq:NE}, we can equivalently rewrite \eqref{eq:lp} as:
\begin{equation}
\label{eq:lpne0}
\begin{aligned}
 \underline{E}(f)=&\sup_{\gamma_0\in \reals, \lambda_i \in \reals^+} \gamma_0\\
 &s.t:\\
 &f - \gamma_0\mathbbm{1}  -\sum\limits_{i=1}^{|\mathcal{G}|} \lambda_i g_i \in \nonnegative,
\end{aligned}
\end{equation}
or equivalently,
\begin{equation}
\label{eq:lpne1}
\begin{aligned}
 \underline{E}(f)=&\sup_{\gamma_0\in \reals} \gamma_0 ~~~s.t.~~~ f- \gamma_0\mathbbm{1}\in N(\assess).
\end{aligned}
\end{equation}
% where $\mathbbm{1}$ denotes the unitary gamble in $\gambles$, i.e.,  $\mathbbm{1}(\omega)=1$ for all $\omega \in  \Omega$.
In other words, we have expressed the constraint in the above optimisation problems as a membership.
\begin{example}
 Let us consider again the coin example and the set  of  assessments $\assess=\{g_1=(-1,1),g_2=(1,-1)\}$.
 It can be verified that $N(\assess)$ coincides with the maximal closed convex cone in Figure~\ref{fig:coin0}(d).
In this case, the lower prevision for the gamble $f=(0,1)$ is $\underline{E}(f)=\frac{1}{2}$ and it is equal to the upper prevision.
For maximal coherent set of desirable gambles, lower and upper previsions always coincide.
If Alice had accepted only the gambles $\assess=\{g_1=(-0.1,1),g_2=(1,-0.1)\}$ resulting in the closed convex cone of Figure~\ref{fig:coin0}(c), then the lower prevision for the gamble $f$ would be $\underline{E}(f)\approx 0.09$ and the upper prevision $\overline{E}(f)\approx 0.91$.
\end{example}

Having defined  lower and upper previsions, we can better understand \ref{eq:sl}.
\ref{eq:sl} can be formulated as the following decision problem
%  \begin{proposition}
%  Let us consider   $\domain=\posi(G \cup \gambles^+)$ and the following problem:
%   \begin{equation}
% \label{eq:feasibil0proof0}
% \begin{array}{l}
% \sup\limits_{0\leq \gamma_0\leq1,~\lambda_j\geq0} \gamma_0, ~~~~s.t.~~~ -\gamma_0- \sum\limits_{j=1}^{|G|}\lambda_jg_j\geq0.
% \end{array}
% \end{equation} 
%  $\domain$ incurs a sure loss iff the above problem has solution $\gamma_0^*=1$ and avoids sure loss iff $\lambda^*_0=0$.
%  \end{proposition}
% \begin{proof}
% We briefly sketch the proof (see \citealt{walley2004direct}).

  \begin{equation}
\label{eq:feasibilBob}
\begin{array}{l}
\exists \lambda_i\geq0  ~s.t.~ \sum\limits_{i=1}^{|G|}\lambda_ig_i(\omega) < 0, ~~~~\forall \omega \in \Omega,
\end{array}
\end{equation} 
there exists a combination of Alice's desirable gambles that is negative.
Let us assume such $\lambda^*_i$ exist, that is $\sum_{i=1}^{|G|}\lambda^*_ig_i<0$. 
Then another agent, Bob, could sell to Alice the gambles $\lambda^*_ig_i$ and she would accept them because
$g_i$ is desirable to her and so $\lambda^*_ig_i$ (by \ref{eq:NE}).
Overall  Bob would give away $-\sum_{i=1}^{|G|}\lambda^*_ig_i$.
However, since $-\sum_{i=1}^{|G|}\lambda^*_ig_i>0$ $\forall \omega \in \Omega$, he  actually 
gains utiles no matter the result of the experiment. 
Bob's gain is equivalent to Alice's loss ($\sum_{i=1}^{|G|}\lambda^*_ig_i$), hence Alice 
can be used as a money pump.\footnote{By \ref{eq:NE}, Alice would also accept the gambles $\gamma\lambda^*_ig_i$
for  $\gamma>1$ allowing Bob to multiply his gain of $\gamma$.}
In Economics, such situation is called an \textit{arbitrage}, while in the subjective definition of probability  is called
a \textit{Dutch book}.

Hence, finally we notice that, by Equation \eqref{eq:lpne0} and Proposition \ref{prop:coco1}, the problem of checking whether $\domain$ is coherent (the coherence problem) can be formulated as the following decision problem:
\begin{equation}
\label{eq:dec}
\begin{aligned}
\exists \lambda_i\geq0:-\mathbbm{1}-\sum\limits_{i=1}^{|\mathcal{G}|} \lambda_i g_i \in \gambles^{\geq}.
\end{aligned}
\end{equation}
If the answer is ``yes'', then the gamble $-\mathbbm{1}$ belongs to $\domain$, proving $\domain$'s incoherence.  The coherence problem  therefore also reduces to the problem of evaluating the nonnegativity of a function in the considered space (let us call this problem the ``nonnegativity decision problem'').

%%%%%%
%%%%%% DUALITY
%%%%%% 

\subsection{Probabilistic interpretation thorough duality}\label{subsec:dual}
The aim of this Section is to  provide a natural probabilistic interpretation to the theory of desirable gambles $\theory$.
This is done by  showing a stronger result, namely that the \textit{dual} of a coherent set of desirable gambles is a closed convex set
of probability charges:
\begin{equation}
\mathcal{P}=\left\{\mu\in \mathcal{M}^{\geq}: \int \mathbbm{1} d\mu=1,~~\int gd\mu \geq0,~ ~\forall g \in \domain \right\},
\end{equation}
where $\mathcal{M}^{\geq}$ is the set of nonnegative charges. Observe that the term ``charge'' is used in Analysis to denote a finitely additive set function \citep[Ch.11]{aliprantisborder}.
Conversely a measure is a countably additive set function. In this paper we use charges to be more general, but this does not really matter
for the results about QT that we are going to present later on.

The key point in the duality proof  is that $\gambles^{\geq}$ (the set of all nonegative gambles (real-valued bounded function) on $\pspace$)
includes indicator functions.\footnote{An indicator function defined on a subset $A\subseteq \pspace$ is a function that is equal to one 
for all elements in $A$ and $0$ for all elements outside $A$.}
This is crucial to prove that the dual of $\domain$ is always included in $\mathcal{M}^{\geq}$. We will see in the next sections that when this is not the case,
the dual of a coherent set of desirable gambles is not anymore a convex set of probabilities.

% In what follows,  we always assume $\Omega$ to be a closed subspace of $\complex^n$, for some $n>0$ and gambles to be measurable.
Note that, equipped with the supremum norm, $\gambles$  constitutes a Banach space, and its topological dual $\mathcal{L}^*$ is the space  of all bounded  functionals on it. We assume the weak${}^*$ topology on $\mathcal{L}^*$.

Let $\mathcal{A}$ be the algebra of subsets of $\Omega$  and $\mu:\mathcal{A} \rightarrow [-\infty,\infty]$ 
denotes a charge: that is $\mu$ is a  finitely additive set function of $\mathcal{A}$  \citep[Ch.11]{aliprantisborder}, \citep{bhaskara1983},
that can take positive and negative values.
%Let $\mathcal{A}_{\reals}$ denote the algebra generated in $\reals$ by the collection of all half open intervals
We have that 
%\begin{theorem}[\citep[Th.11.8]{aliprantisborder}]
%\label{th:boundedint}
 every gamble on $\mathcal{L}$ %bounded $(\mathcal{A},\mathcal{A}_{\reals})$-measurable function 
 is integrable with respect to any finite charge \citep[Th.11.8]{aliprantisborder}. 
%\end{theorem}
Therefore, for any gamble $g$ and finite charge $\mu$ we can define  $\int gd\mu$, which we can interpret as
a linear functional  $L(\cdot):=\langle \cdot, \mu\rangle$ on  %bounded $(\mathcal{A},\mathcal{A}_{\reals})$-measurable gambles 
$g$. % (with some abuse of notation, we denote the set of  bounded $(\mathcal{A},\mathcal{A}_{\reals})$-measurable functions with $\gambles$).
We denote by $\mathcal{M}$ the set of all finite charges on $\mathcal{L} $ and  by $\mathcal{M}^{\geq}$ the set of nonnegative charges. 
 $\mathcal{M}$ is isometrically isomorphic to $\mathcal{L}^*$. The duality bracket between  $\mathcal{L} $ and $\mathcal{M}$ is given by $\langle f, \mu \rangle := \int f d\mu$, with $f \in \mathcal{L} $ and $\mu \in \mathcal{M}$. 

A linear functional $L$ of gambles is said to be \emph{nonnegative} whenever it  satisfies : $L(g) \geq 0$, for $g \in \nonnegative$. 
A nonnegative linear functional is called a \emph{state} if moreover it preserves the unitary constant gamble. 
In our context, this means $L(\mathbbm{1}) = \langle \mathbbm{1}, \mu\rangle=\int \mathbbm{1} d\mu=1$, i.e., the linear functional is scale preserving. Hence,  the set of states $\mathscr{S}$ corresponds to the closed convex set of  all probability charges.

We define the \emph{dual} of a subset $\domain$ of $\mathcal{L}$ as:\\
\begin{equation}
\domain^\bullet=\left\{\mu\in \mathcal{M}: \int gd\mu \geq0, ~\forall g \in \domain \right\}.
\end{equation}

\begin{proposition}
 The dual of $\gambles$ coincides with $\{0\}$, whereas the dual of $\nonnegative$ is  the set of nonnegative charges $\mathcal{M}^{\geq}$.
\end{proposition}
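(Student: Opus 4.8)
The plan is to prove the two claims separately, each by establishing containment in both directions.

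\medskip

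\noindent\textbf{The dual of $\gambles$ is $\{0\}$.} First I would show $0 \in \gambles^\bullet$, which is immediate since $\int g\, d\,0 = 0 \geq 0$ for every $g$. For the reverse containment, suppose $\mu \in \gambles^\bullet$, so that $\int g\, d\mu \geq 0$ for \emph{all} $g \in \gambles$, not merely the nonnegative ones. The key observation is that $\gambles$ is a linear space: if $g \in \gambles$ then $-g \in \gambles$ as well. Applying the defining inequality to both $g$ and $-g$ gives $\int g\, d\mu \geq 0$ and $\int (-g)\, d\mu = -\int g\, d\mu \geq 0$, forcing $\int g\, d\mu = 0$ for every gamble $g$. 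Since $\mathcal{M}$ is isometrically isomorphic to $\mathcal{L}^*$ via the duality bracket $\langle f,\mu\rangle = \int f\, d\mu$, a charge annihilating every gamble corresponds to the zero functional, hence $\mu = 0$. This yields $\gambles^\bullet \subseteq \{0\}$, completing the first part.

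\medskip

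\noindent\textbf{The dual of $\nonnegative$ is $\mathcal{M}^{\geq}$.} Here I would again argue both inclusions. For $\mathcal{M}^{\geq} \subseteq (\nonnegative)^\bullet$: by the definition of the integral against a nonnegative charge, if $\mu \geq 0$ and $g \in \nonnegative$ (so $\inf g \geq 0$, i.e.\ $g \geq 0$ pointwise), then $\int g\, d\mu \geq 0$; thus every nonnegative charge lies in the dual. The substantive direction is $(\nonnegative)^\bullet \subseteq \mathcal{M}^{\geq}$. Suppose $\mu \in (\nonnegative)^\bullet$, so $\int g\, d\mu \geq 0$ for all $g \in \nonnegative$. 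I would use the fact, emphasised in the excerpt, that $\nonnegative$ contains all indicator functions $\mathbbm{1}_A$ for $A \in \mathcal{A}$, since each such indicator is a nonnegative bounded function. Testing the inequality against $g = \mathbbm{1}_A$ gives $\int \mathbbm{1}_A\, d\mu = \mu(A) \geq 0$ for every measurable $A$, which is precisely the statement that $\mu$ is a nonnegative charge, i.e.\ $\mu \in \mathcal{M}^{\geq}$.

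\medskip

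\noindent The main obstacle, such as it is, lies in the second containment of the second claim: one must be careful that the inequality $\int g\, d\mu \geq 0$ holding for all nonnegative \emph{gambles} is genuinely enough to conclude nonnegativity of the set function $\mu$ on \emph{all} of $\mathcal{A}$. This is exactly where the structural richness of $\nonnegative$ is essential---it is not merely a convex cone but one large enough to separate charges via indicators. The identification $\int \mathbbm{1}_A\, d\mu = \mu(A)$ relies on the integrability of every gamble against any finite charge (cited as \cite[Th.11.8]{aliprantisborder}), and on the finite additivity of $\mu$; I would note that for a finitely additive charge this identity holds for indicators of measurable sets directly from the definition of the integral, so no countable-additivity subtlety intervenes. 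Everything else is routine linearity and sign manipulation.
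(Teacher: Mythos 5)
Your proposal is correct. For the second claim it coincides with the paper's own argument: the appendix proof of Proposition~\ref{prop:czesc} establishes $(\nonnegative)^\bullet \subseteq \mathcal{M}^{\geq}$ in exactly your way, by testing against indicator gambles $I_B$ for $B \in \mathcal{A}$ and noting $\int I_B\, d\mu = \mu(B)$, with the converse inclusion declared clear. For the first claim, however, you take a genuinely different route. The paper proves the \emph{companion} identity $\mathcal{L} = (\{0\})^\bullet$ (which is immediate) and then recovers $(\mathcal{L})^\bullet = \{0\}$ by invoking the bipolar-type fact $(\cdot)^{\bullet\bullet} = (\cdot)$ for closed convex cones \citep[Th.5.103]{aliprantisborder} --- machinery that rests on a separation theorem, but which the paper needs anyway since the appendix version of the proposition also asserts the bidual identities $\mathcal{L} = (\{0\})^\bullet$ and $\nonnegative = (\mathcal{M}^{\geq})^\bullet$ used later (e.g., in Proposition~\ref{prop:czesc2} and Theorem~\ref{prop:dualcharges0}). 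Your argument is instead elementary and self-contained: since $\gambles$ is a linear space, $\mu \in \gambles^\bullet$ forces $\int g\, d\mu = 0$ for every $g$ (apply the inequality to $g$ and $-g$), and injectivity of the isometric isomorphism $\mathcal{M} \cong \mathcal{L}^*$ gives $\mu = 0$; one could even bypass the isomorphism by testing against indicators to get $\mu(A) = 0$ for all $A$. So your proof buys independence from the bipolar theorem for the statement as given, while the paper's approach buys the stronger two-sided dual pairings in a single stroke. Your closing remarks on why indicators suffice and why no countable-additivity issue arises are accurate and, if anything, more careful than the paper's terse treatment.
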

 Since $(\cdot)^\bullet$ is an anti-monotonic operation on the complete lattice of subsets of $\mathcal{L}$, the dual of any coherent set of desirable gambles is a closed convex cone in between those two extremes. Can they be characterised in some way?
It actually turns out that the dual of a coherent set of desirable gambles can  be completely described in terms of a (closed convex) set of states (probability charges). 
More precisely, we have that:\footnote{All proofs can be found in the Appendix}

\begin{theorem}
\label{prop:dualcharges0}
The map
\[\domain \mapsto  \mathcal{P}:=\domain^\bullet \cap \mathscr{S} \]
establishes a bijection between coherent sets of desirable gambles and non-empty closed convex sets of states.
\end{theorem}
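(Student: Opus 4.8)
The plan is to exhibit an explicit candidate inverse map and then verify that the two maps are mutually inverse, the whole argument being powered by the bipolar theorem for the dual pair $(\gambles,\mathcal{M})$ under the pairing $\langle g,\mu\rangle=\int g\,d\mu$. For a set of charges $\mathcal{Q}\subseteq\mathcal{M}$ I would introduce the \emph{pre-dual} $\mathcal{Q}^{\bullet}:=\{g\in\gambles:\int g\,d\mu\ge 0,\ \forall\mu\in\mathcal{Q}\}$, mirroring the operation $(\cdot)^{\bullet}$ already defined on subsets of $\gambles$. The goal is then to prove that $\domain\mapsto\domain^{\bullet}\cap\states$ and $\mathcal{P}\mapsto\mathcal{P}^{\bullet}$ are inverse bijections between coherent sets of desirable gambles and non-empty closed convex sets of states.

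First I would check that the map is well defined, i.e.\ that $\mathcal{P}:=\domain^{\bullet}\cap\states$ is a non-empty closed convex set of states. Closedness and convexity are immediate, since $\domain^{\bullet}$ is a weak${}^{*}$-closed convex cone and $\states$ is weak${}^{*}$-closed and convex, and $\mathcal{P}\subseteq\states$ consists of states by construction. The real content is non-emptiness, and this is exactly where coherence enters. Because $\nonnegative\subseteq\domain$ and $\mathbbm{1}$ lies in the sup-norm interior of $\nonnegative$ (the open unit ball around $\mathbbm{1}$ is contained in $\nonnegative$), the point $\mathbbm{1}$ is interior to the closed convex cone $\domain$, whereas coherence gives $-\mathbbm{1}\notin\domain$ by Proposition~\ref{prop:coco1}. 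A Hahn--Banach separation of the point $-\mathbbm{1}$ from the closed convex set $\domain$ then yields a nonzero continuous functional $L=\langle\cdot,\mu\rangle$ with $L(-\mathbbm{1})<0$ and a lower bound on $L$ over $\domain$; the cone property (using $0\in\domain$ and $\lambda g\in\domain$ for $\lambda>0$) forces $L(g)\ge 0$ for all $g\in\domain$ and $L(\mathbbm{1})>0$. After normalising $L(\mathbbm{1})=1$ we obtain a state (nonnegative because $\nonnegative\subseteq\domain$) lying in $\domain^{\bullet}\cap\states$, so $\mathcal{P}\neq\emptyset$.

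For injectivity I would show $\domain$ is recovered from $\mathcal{P}$ via $\domain=\mathcal{P}^{\bullet}$. The key observation is that $\domain^{\bullet}$ is entirely determined by its slice $\mathcal{P}$: since $\nonnegative\subseteq\domain$ we have $\domain^{\bullet}\subseteq\mathcal{M}^{\ge}$, and for a nonnegative $\mu\in\domain^{\bullet}$ either $\mu(\mathbbm{1})>0$, whence $\mu/\mu(\mathbbm{1})\in\mathcal{P}$ and $\mu\in\operatorname{cone}(\mathcal{P})$, or $\mu(\mathbbm{1})=0$, which for a nonnegative charge forces $\mu=0$ by monotonicity. Hence $\domain^{\bullet}=\operatorname{cone}(\mathcal{P})$, so $\mathcal{P}^{\bullet}=(\domain^{\bullet})^{\bullet}$, and since $\domain$ is a norm-closed (hence weakly closed) convex cone the bipolar theorem gives $(\domain^{\bullet})^{\bullet}=\domain$. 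Thus $\mathcal{P}_{1}=\mathcal{P}_{2}$ implies $\domain_{1}=\mathcal{P}_{1}^{\bullet}=\mathcal{P}_{2}^{\bullet}=\domain_{2}$. For surjectivity, given a non-empty closed convex $\mathcal{P}\subseteq\states$, coherence of $\mathcal{P}^{\bullet}$ is easy ($\mathcal{P}^{\bullet}$ is a closed convex cone containing $\nonnegative$, and any $\mu\in\mathcal{P}$ gives $\int(-\mathbbm{1})\,d\mu=-1<0$, so $-\mathbbm{1}\notin\mathcal{P}^{\bullet}$); the inclusion $\mathcal{P}\subseteq(\mathcal{P}^{\bullet})^{\bullet}\cap\states$ holds by definition, and the reverse follows from the bipolar theorem applied to $\widehat{\mathcal{P}}:=\operatorname{cone}(\mathcal{P})$, intersected with the slice $\{L(\mathbbm{1})=1\}$.

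The main obstacle is precisely this last bipolar step: the bipolar theorem only gives $(\mathcal{P}^{\bullet})^{\bullet}$ equal to the \emph{weak${}^{*}$-closed} conic hull of $\mathcal{P}$, so to conclude $(\mathcal{P}^{\bullet})^{\bullet}\cap\states=\widehat{\mathcal{P}}\cap\states=\mathcal{P}$ I must verify that $\widehat{\mathcal{P}}$ is already weak${}^{*}$-closed. I would obtain this from weak${}^{*}$-compactness of $\states$ (any state has norm $1$, so $\states$ is a weak${}^{*}$-closed subset of the unit ball and Banach--Alaoglu applies), together with the separating hyperplane $\{L(\mathbbm{1})=1\}$: a net $\lambda_{\alpha}\mu_{\alpha}\to\nu$ with $\mu_{\alpha}\in\mathcal{P}$ has $\lambda_{\alpha}=(\lambda_{\alpha}\mu_{\alpha})(\mathbbm{1})\to\nu(\mathbbm{1})=:\lambda$, and either $\lambda>0$, forcing $\mu_{\alpha}\to\nu/\lambda\in\mathcal{P}$ by compactness and hence $\nu\in\widehat{\mathcal{P}}$, or $\lambda=0$, forcing $\nu=0$ since the $\mu_{\alpha}$ are norm-bounded. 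The slicing identity $\widehat{\mathcal{P}}\cap\states=\mathcal{P}$ is then routine, because $(\lambda\mu)(\mathbbm{1})=\lambda$ equals $1$ only for $\lambda=1$. This closedness argument, and the separation argument underlying non-emptiness, are the two places where the analysis (rather than pure duality bookkeeping) does the work.
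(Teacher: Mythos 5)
Your proof is correct and follows essentially the same route as the paper's: both arguments hinge on identifying $\domain^\bullet$ with $\mathbb{R}_+\mathcal{P}$ by normalising the nonzero nonnegative charges in the dual cone, and then applying the bipolar theorem $(\cdot)^{\bullet\bullet}=(\cdot)$ in both directions (for injectivity via $\domain=(\mathbb{R}_+\mathcal{P})^\bullet$, for surjectivity via $(\mathbb{R}_+\mathcal{P})^{\bullet\bullet}\cap\states=\mathcal{P}$). Where you differ, it is refinement rather than a new strategy: you unpack the paper's duality bookkeeping (its auxiliary propositions stating $(\nonnegative)^\bullet=\mathcal{M}^{\geq}$ and that coherence is equivalent to $\nonnegative\subseteq\domain\neq\gambles$, i.e.\ $\domain^\bullet\subseteq\mathcal{M}^{\geq}$ with $\domain^\bullet\neq\{0\}$) into an explicit Hahn--Banach separation of $-\mathbbm{1}$ from the closed convex cone $\domain$ to obtain non-emptiness of $\mathcal{P}$, and you explicitly verify the weak${}^*$-closedness of $\mathbb{R}_+\mathcal{P}$ using the norm-bounded base $\states$ and the slice functional $\langle\mathbbm{1},\cdot\rangle$---a step the paper's surjectivity argument simply asserts (``It holds that $\mathbb{R}_+\mathcal{P}$ is a closed convex cone'') without justification, and which is genuinely needed since the conic hull of a closed convex set is not closed in general; on this point your write-up is more complete than the paper's own proof.
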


This means that we can write the dual of $\domain$ as the set
\begin{equation}
\label{eq:ddual}
 \mathcal{P}=\left\{\mu\in \states: L_\mu(g) \geq0,~\forall g \in \domain\right\},
\end{equation} 
which is a closed convex-set of probability charges.  We have derived the axioms of probability---a non-negative function that integrates to one---from the the coherence postulate \ref{eq:sl}. 
Hence, as we are going to see at the end of this subsection, whenever an agent is coherent, Equation \eqref{eq:ddual} states that desirability corresponds to non-negative expectation (for all probabilities in $\mathcal{P}$). When $\domain$ is incoherent,  $\mathcal{P}$ turns out to be empty---there is no probability compatible with the assessments in $\domain$.
It is thus form this perspective that it has to be understood the claim that expression \ref{eq:sl} alone captures the coherence postulate as formulate in the introduction in case of classical probability theory, and thus that the latter follows from it. 

As an immediate corollary of the Theorem \ref{prop:dualcharges0}, to say that a closed convex cone is coherent is equivalent to say that its dual is a closed convex subset of states. 
 
As already mentioned, once we have defined the duality between TDG and probability theory, we can immediately reformulate the lower and upper previsions by means of probabilities.
Indeed, let $\assess$ be a finite set of assessments, and $\domain:= N(\assess)$ be a coherent set of
desirable gambles. Given  $f \in \gambles$, the lower prevision of $f$ defined in Equation \eqref{eq:lp} can also be computed as:
\begin{equation}
\label{eq:lpdual}
\begin{aligned}
 \underline{E}(f):=&\inf_{\mu \in \mathscr{S}} \int_{\Omega} f(\omega) d\mu\\
 &s.t:\\
 %&\sum_{\omega \in \Omega}  \mu=1,\\
  &\int_{\Omega} g_i(\omega) d\mu\geq0, \forall ~ g_i\in \mathcal{G}.\\
\end{aligned}
\end{equation}
which is equivalent to
\begin{equation}
\label{eq:lpdual1}
\begin{aligned}
 \underline{E}(f):=&\inf_{\mu \in \mathcal{P}} \int_{\Omega} f(\omega) d\mu.
\end{aligned}
\end{equation}
 The upper prevision of $f$ is defined $\overline{E}(f)=-\underline{E}(-f)$.

Hence, the lower and upper prevision of $f$ w.r.t.\ $\domain$ are just the \textit{lower and upper expectation} of $f$
w.r.t.\ $\mathcal{P}$.
In case $\domain$ is maximal, then $\mathcal{P}$ includes only a single probability and, therefore, in this case:
$$
\underline{E}(f)=\overline{E}(f)=E(f).
$$
That is, the solution of \eqref{eq:lpdual} coincides with the expectation of $f$.
We have considered the more general case $\underline{E}(f)\leq \overline{E}(f)$  because, as we will discuss in Section \ref{sec:momentmat}, QT is a theory of ``imprecise'' probability \cite{walley1991}.\footnote{The term ``imprecise'' refers to the fact that the closed convex set $\mathcal{P}$ may not be a singleton, that is the probability may not be ``precisely'' specified. Imprecise probability theory  is also referred as robust Bayesian.}

\section{Taking computational complexity seriously}
\label{sec:comp}
We have seen in Section \ref{sec:primalinference} that the problem of checking whether or not $\domain$ is coherent can be formulated as the following decision problem:
\begin{equation}
\label{eq:dec}
\begin{aligned}
 \exists\lambda_i\geq0:-\mathbbm{1}-\sum\limits_{i=1}^{|\mathcal{G}|} \lambda_i g_i \in \gambles^{\geq}.
\end{aligned}
\end{equation}
% where $\mathbbm{1}$ denotes the unitary gamble in $\gambles$, i.e.,  $\mathbbm{1}(\omega)=1$ for all $\omega \in  \Omega$.
%  where $g_0(\omega)=\mathbbm{1}(\omega)-1$ and $\mathbbm{1}(\omega)\in \bnonnegative$ denotes the unit constant gamble, i.e.,  $\mathbbm{1}(\omega)=1$ for all $\omega \in  \Omega$.}
% where $g_0(\omega)=\mathbbm{1}_{\pspace}(\omega)-1$ and $\mathbbm{1}_{\pspace}(\omega)\in \gambles^{\geq}$ denotes the constant gamble:  $\mathbbm{1}_{\pspace}(\omega)=1$ for all $\omega \in  \Omega$.
% where $g_0(\omega)=\mathbbm{1}_{\pspace}(\omega)-1$ and $\mathbbm{1}_{\pspace}(\omega)=1$ for all $\omega \in  \pspace$.
If the answer is ``yes'', then the gamble $-\mathbbm{1}$ belongs to $\domain$, proving $\domain$'s incoherence. Moreover, any inference task can ultimately be reduced to a problem of the form~\eqref{eq:dec}, see Section \ref{sec:primalinference}. % and thus that Alice's would .
Hence, the above decision problem  unveils  a crucial fact: the hardness of inference in classical probability corresponds to the hardness of evaluating the non-negativity of a function in the considered space (let us call this the ``non-negativity decision problem'').

When $\pspace$ is infinite (in this paper we consider the case  $\pspace \subset\complex^n$) and for generic functions, the non-negativity decision problem is undecidable. To avoid such an issue, we may impose  restrictions on the class of allowed gambles and thus define $\theory$  on a appropriate subspace $\gambles_R$ of $\gambles$.\footnote{The point is that we want $\theory$ defined on $\gambles_R$ to coincide with the restriction to $\gambles_R$ of $\theory$ when defined on $\gambles$. Given this property, we are then assured that the dual of a coherent set $\domain_R$ in $\gambles_R$ can be identified with the dual of its deductive closure in $\gambles$, i.e. with the closed convex set of probability charges $(N(\domain_R))^\bullet$. In Appendix \ref{app:comp} we make the construction and claims precise.} For instance, instead of $\gambles$, we may consider $\gambles_R$: the class of multivariate polynomials of degree at most $d$ (we denote by $\gambles_R^{\geq}\subset\gambles_R$ the subset of non-negative polynomials and by $\gambles_R^{<}\subset\gambles_R$ the negative ones).  In  doing so, by Tarski-Seidenberg quantifier elimination theory \citep{tarski1951decision,seidenberg1954new}, the decision problem becomes decidable, but still intractable, being in general NP-hard. If we  accept the so-called ``Exponential Time Hypothesis'' (that P$\neq$NP) and we require that inference should be tractable  (in P), we are stuck. 
What to do? A solution is to change the meaning of ``being non-negative'' for a  function by considering a subset $\bnonnegative \subsetneq \nonnegative_R$ for which the membership problem in \eqref{eq:dec} is in P.

In other words, a computationally efficient TDG, which we denote by $\btheory$, 
should be based on a logical redefinition of the tautologies, i.e., by stating that
%\begin{definition}[Computation postulate]
%\label{def:comp}
%A set $\domain$ of desirable gambles satisfies the computation postulate if and only if the tautologies are given by $\bnonnegative$, that is:
\begin{enumerate}[label=\upshape B0.,ref=\upshape B0]
\item\label{eq:btaut} $\bnonnegative$ should always be desirable,
\end{enumerate}
%\end{definition}
in the place of~\ref{eq:taut}. The rest of the theory can develop following the footprints of the original theory. In particular, the deductive closure for is $\btheory$ is defined by:
\begin{enumerate}[label=\upshape B1.,ref=\upshape B1]
\item\label{eq:b1} $N_B(\assess)\coloneqq\cl \posi(\bnonnegative \cup \mathcal{G})$.
\end{enumerate}
And sometimes we denote $N_B(\assess)$ by $\bdomain$. Again, $\bdomain = \posi(\bnonnegative \cup \mathcal{G})$ for finite $\assess$.

Finally, the coherence postulate, which now naturally encompasses the computation postulate, states that:
\begin{definition}[P-coherence]
\label{def:bavs}
A set $\bdomain$ of desirable gambles is \emph{P-coherent}  if and only if\
\begin{enumerate}[label=\upshape B2.,ref=\upshape B2]
\item\label{eq:b2} $\bnegative \cap  \bdomain=\emptyset$.
\end{enumerate}
\end{definition}
Above we called P-coherent a set $\bdomain$ that satisfies \ref{eq:b2} since, whenever $\bnonnegative$ contains all positive constant gambles, its incoherence
can be verified in polynomial time by solving:\footnote{For a justification of the non computational part of this claim, see Proposition \ref{prop:cohe}.}
\begin{equation}
\label{eq:bdec}
\begin{aligned}
\exists\lambda_i\geq0 ~~\text{ such that }~~ -\mathbbm{1}_R-\sum\limits_{i=1}^{|\mathcal{G}|} \lambda_i g_i \in \bnonnegative,
\end{aligned}
\end{equation}
where $\mathbbm{1}_R$ denotes the unitary gamble in $\gambles_R$, i.e.,  $\mathbbm{1}_R(\omega)=1$ for all $\omega \in  \Omega$.
% where $g_0(\omega)=\mathbbm{1}_{\pspace}(\omega)-1$ and $\mathbbm{1}_{\pspace}(\omega)=1$ for all $\omega \in  \pspace$.
%  where $g_0(\omega)=\mathbbm{1}(\omega)-1$ and $\mathbbm{1}(\omega)\in \gambles_R$ denotes the unitary gamble, i.e.,  $\mathbbm{1}(\omega)=1$ for all $\omega \in  \Omega$.
% This term allows us to make explicit which are the constant functions in $\bnonnegative$  by stating that the gambles $\mathbbm{1}(\omega)-1$ and $1-\mathbbm{1}(\omega)$ are desirable.\footnote{Note that $\text{posi}(\mathcal{G}\cup\{\mathbbm{1}(\omega)-1,1-\mathbbm{1}(\omega)\})=\lambda_{01}(\mathbbm{1}(\omega)-1)+\lambda_{02}(-\mathbbm{1}(\omega)+1)+\sum_{i=1}^{|\mathcal{G}|} \lambda_i g_i=\lambda_{0}(\mathbbm{1}(\omega)-1)+\sum_{i=1}^{|\mathcal{G}|} \lambda_i g_i$ with $\lambda_{01},\lambda_{02},\lambda_{i}\geq0$ and $\lambda_0=\lambda_{01}-\lambda_{02}\in \reals$.} The inclusion of the term $g_0$, that is zero in $\Omega$,  is totally redundant in \eqref{eq:dec},
% since the unitary gambles.} 
 Hence, $\btheory$ and $\theory$ (defined over $\gambles_R$) have  the same deductive apparatus; they just possibly differ in the considered set of tautologies, and thus in their (in)consistencies, as we only ask Alice to always accept gambles for which she can efficiently determine the nonnegativity (P-nonnegative gambles)  and to never accept gambles for which she can efficiently determine the negativity (P-negative gambles).

%%%%%%%%%
%%%%%%%%%
%%SECTION: EFFICIENCY
%%%%%%%%%
%%%%%%%%%

\subsection{Computationally efficient coherence and its consequences}\label{sec:complex}

Interestingly, we can associate a ``probabilistic'' interpretation as before to the calculus defined by \ref{eq:btaut}--\ref{eq:b2} by computing the dual of a P-coherent set.

%Formally, we have the following:

%In this section, we derive the dual of $\bdomain$. Differently from Section \ref{sec:dual}, we define the dual in a more abstract way.
Since $\gambles_R$ is a topological vector space, we can consider its dual space $\gambles_R^*$ of all bounded linear functionals $L_B: \gambles_R \rightarrow \reals$. Hence, with the additional condition that linear functional preserves the unitary gamble ($L_B(\mathbbm{1}_R)=1)$,  the dual cone of a P-coherent $\bdomain\subset \gambles_R$  is given by
%\begin{equation}
%\bdomain^\circ=\left\{L \in \gambles_R^\circ \mid  L(g) \geq0, ~\forall g \in \bdomain\right\}.
%\end{equation}
%We  impose to $L$ the additional assumption:
%$$
%L(1) = 1,
%$$
%and, finally, get:
\begin{equation}
\label{eq:dualL}
 \bdomain^\circ=\left\{L_B \in \gambles_R^* \mid L_B(g)\geq0, ~~L_B(\mathbbm{1}_R)=1,~\forall g \in \bdomain\right\}.
\end{equation}
Analogously to the previous cases, we call \textit{states} the elements of the following closed convex set of linear functionals:
\begin{equation}
 \mathscr{S}_B=\left\{L_B \in \gambles_R^* \mid L_B(g)\geq0, ~~L_B(\mathbbm{1}_R)=1,~\forall g \in \bnonnegative\right\}.
\end{equation}
Hence, we can rewrite the dual $\bdomain^\circ$ as
\begin{equation}
 \bdomain^\circ=\left\{L_B \in \mathscr{S}_B \mid L_B(g)\geq0, ~\forall g \in \bdomain\right\}.
\end{equation}

%Since $\gambles_R$ is a generic (sub)vector space
To $\bdomain^\circ$ we can then associate its extension  $ \bdomain^\bullet$ in $\mathcal{M}$, that is, the set of all  charges
 %*** misura normalmente si riferisce a cose sigma-addittive; altrimenti meglio probabilita' e se non sono probabilita' forse signed charges/signed probabilities/funzioni addittive? Oppure semplicemente quasi-probabilities? Basta essere uniformi e chiari sul fatto di cosa sono e che tutte queste cose sono la stessa cosa. *** 
 on $\pspace$ extending an element in $\bdomain^\circ$. In general however this set does not yield a classical probabilistic interpretation to $\btheory$. This is because, whenever
$\bnegative \subsetneq \negative_R$, there are negative gambles that cannot be proved to be negative in polynomial time. 

  \begin{theorem}\label{th:fundamental}
  Assume that $\bnonnegative$ includes all positive constant gambles and that it is closed (in $\gambles_R$). 
Let $\bdomain \subseteq \gambles_R$ be a P-coherent set of  desirable gambles. The following statements are equivalent:
\begin{enumerate}
   \item $\bdomain$ includes a negative gamble that is not in $\bnegative$.
\item $\posi(\nonnegative\cup \mathcal{G})$ is incoherent, and thus $\mathcal{P}$ is empty
    \item $\bdomain^{\circ}$ is not (the restriction to  $\gambles_R$ of) a closed convex set of mixtures of classical evaluation functionals. 
    \item The extension   $ \bdomain^\bullet$ of $\bdomain^{\circ}$ in the space $\mathcal{M}$ of all charges in $\pspace$ includes only signed ones (negative-probabilities).
\end{enumerate}
 \end{theorem}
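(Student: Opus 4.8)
The plan is to prove $1\Leftrightarrow 2$ by an elementary cone argument and then obtain $2\Leftrightarrow 4$ and $2\Leftrightarrow 3$ by dualising, using throughout the inclusions $\bnonnegative\subseteq\nonnegative_R\subseteq\nonnegative$, which give $\bdomain=\posi(\bnonnegative\cup\assess)\subseteq\posi(\nonnegative\cup\assess)$, and the fact (recorded after Theorem~\ref{prop:dualcharges0}) that $\posi(\nonnegative\cup\assess)$ is incoherent iff $\mathcal{P}=\emptyset$.

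For $1\Rightarrow 2$: if some $g\in\bdomain$ is genuinely negative, then $g\in\posi(\nonnegative\cup\assess)\cap\negative$, so $\posi(\nonnegative\cup\assess)$ contains a negative gamble and is incoherent; by Proposition~\ref{prop:coco1} this forces $\mathcal{P}=\emptyset$. For $2\Rightarrow 1$: incoherence of $\posi(\nonnegative\cup\assess)$ yields, by Proposition~\ref{prop:coco1}, $-\mathbbm{1}\in\posi(\nonnegative\cup\assess)$; writing $-\mathbbm{1}=\sum_i\lambda_i g_i+h$ with $\lambda_i\ge0$, $g_i\in\assess$ and $h\in\nonnegative$, the gamble $g^\star:=\sum_i\lambda_i g_i=-\mathbbm{1}-h\le-\mathbbm{1}$ is genuinely negative, i.e.\ $g^\star\in\negative_R$, and lies in $\posi(\assess)\subseteq\bdomain$. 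The decisive point is that P-coherence (\ref{eq:b2}) forbids $\bnegative\cap\bdomain$, so $g^\star\notin\bnegative$; hence $g^\star$ is exactly the witness required by $1$. This is where the computation postulate bites: a genuinely negative conical combination of desirable gambles survives inside $\bdomain$ precisely because its negativity is not P-certifiable.

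Next I would dualise. The key computation is that the nonnegative members of $\bdomain^\bullet$ are exactly $\mathcal{P}$: a normalised charge $\mu\ge0$ lies in $\bdomain^\bullet$ iff $\int g\,d\mu\ge0$ for all $g\in\bdomain$, and since $\mu\ge0$ integrates every $b\in\bnonnegative\subseteq\nonnegative$ nonnegatively, this collapses to $\int g_i\,d\mu\ge0$ for $g_i\in\assess$, i.e.\ $\mu\in\mathcal{P}$. Thus ``$\bdomain^\bullet$ contains a nonnegative charge'' is equivalent to $\mathcal{P}\neq\emptyset$, giving $\neg2\Leftrightarrow\neg4$. The same computation shows $\bdomain^\circ\cap\states|_{\gambles_R}=\mathcal{P}|_{\gambles_R}$, so $\bdomain^\circ$ meets the restrictions of genuine probability charges iff $\mathcal{P}\neq\emptyset$; identifying the normalised nonnegative charges with the weak${}^*$-closed convex hull of the classical evaluation functionals $\delta_\omega$ then turns this into $\neg2\Leftrightarrow\neg3$.

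The main obstacle will be the rigorous handling of $3$ and $4$, not $1\Leftrightarrow 2$. Two points need care. First, one must know $\bdomain^\circ\neq\emptyset$, so that the statements are non-vacuous: this follows from P-coherence by a separating-hyperplane argument, and each functional in $\bdomain^\circ$ must then be lifted to a charge in $\mathcal{M}$ by a Riesz (Hahn--Banach for cones) extension, so that $\bdomain^\bullet$ is well defined. Second, one must fix the precise reading of ``$\bdomain^\circ$ is the restriction of a closed convex set of mixtures of classical evaluation functionals'' and match it against the charge-level statement $4$; here the extension construction of Appendix~\ref{app:comp} does the work, by guaranteeing that the dual computed inside $\gambles_R$ agrees with the dual of the closure in $\gambles$, i.e.\ with genuine probability charges. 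The delicate issue is that the gap $\bnonnegative\subsetneq\nonnegative_R$ admits \emph{spurious} states in $\bdomain^\circ$ whose charge-extensions are signed; the whole content of the equivalence is that such signed extensions are the only ones available exactly when $\mathcal{P}=\emptyset$, which is what ties $3$ and $4$ back to $2$.
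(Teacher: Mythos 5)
Correct, and essentially the paper's own argument: your witness $g^\star=-\mathbbm{1}-h\in\posi(\assess)\subseteq\bdomain$ is exactly the paper's decomposition step for $1\Leftrightarrow 2$ (every negative $f\in\posi(\nonnegative\cup\bdomain)$ dominates some $g\in\bdomain$, which P-coherence keeps out of $\bnegative$), and your identification of the nonnegative normalised elements of $\bdomain^\bullet$ --- equivalently of the probability-charge restrictions meeting $\bdomain^\circ$ --- with $\mathcal{P}$ is precisely the content of Theorem~\ref{cor:noncoherent} together with the restriction--extension identification of Appendix~\ref{app:comp}, by which the paper settles $2\Leftrightarrow 4$ and $3\Leftrightarrow 4$. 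The two cautions you flag (nonemptiness of $\bdomain^\circ$ via separation, and fixing the disjointness reading of item~3 so that it matches the charge-level statement~4) are points the paper leaves implicit, but they refine rather than change the route.
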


Theorem \ref{th:fundamental} is the central result of this paper. It states that whenever $\bdomain$ includes a negative gamble (item 1), there is no classical probabilistic interpretation for it (item 2). The other points suggest alternatives solutions to overcome this deadlock: either to change the notion of evaluation functional (item 3) or to use negative-probabilities as a means for interpreting $\btheory$ (item 4).
  
  Let us clarify item 3 above. In doing so, we introduce some terminology. 
Recall that $\mathscr{S}$ is the collection of states (probability charges), that is  of all nonnegative linear functionals on $\gambles$ that preserve the unitary constant gamble.
The extremes of $\mathscr{S}$ are the so-called atomic charges (Dirac's delta), that is the functional $
  e_{\tilde{\omega}} $ assigning $1$ to some given $\tilde{\omega}$ and 0 elsewhere. 
  The linear functional $L(g)= \langle g, e_{\tilde{\omega}}\rangle = g(\tilde{\omega})$ defined by  an atomic charge $e_{\tilde{\omega}}$ is a \emph{classical} evaluation function -- it evaluates the function $g$ at $\tilde{\omega}$.
By Krein-Milman Theorem each state $\mu \in \mathscr{S}$  is a convex combination of atomic charges, or (when the space is not finite) a limit of such combinations. 
 %that is  $\mu = \sum_{\tilde{\omega}} a_{\tilde{\omega}} e_{\tilde{\omega}}$, with $a_{\tilde{\omega}} \in [0,1]$ and $\sum_{\tilde{\omega}} a_{\tilde{\omega}} = 1$. 
 Hence, the linear functional induced by a state $\mu$ is a convex combination (mixture) of classical evaluation functions, or a limit of such combinations.
Recall that any positive functional on $\gambles_R$ can be extended (possibly non uniquely) to a positive functional on $\gambles$ and that the restriction to  $\gambles_R$ of a positive functional on $\gambles$ is a positive functional on $\gambles_R$. 
Hence, since  $ \bdomain^\bullet$  includes charges (that are only affine combinations of classical evaluation functions), its
restriction to $\gambles_R$ cannot be a closed convex set of mixtures of classical evaluation functions.
This is the last statement of the last result.

Implicitly, Theorem \ref{th:fundamental} is also informing us on the properties of $\mathscr{S}_B$. Indeed, the fact that there is a P-coherent set  $\bdomain$ that includes a negative gamble that is not P-negative, yields that $\bnonnegative \subsetneq \nonnegative$, and therefore $(\bnonnegative)^\bullet \supsetneq (\nonnegative)^\bullet$. As a consequence, the extremes of  $\mathscr{S}_B$ are in general only affine combinations of classical evaluation functions.

In the next section, we show that Quantum Theory is a paradigmatic instance of $\btheory$. Given this, Theorem \ref{th:fundamental} applies and it turns out to be
the key  to explains the weirdness of the microscopic world from the perspective of an external classic observer.
However, before doing that, we briefly discuss inference in P-coherent theories.

%%%%%%
%%%%%% INFERENCE
%%%%%% 

\subsection{Inference in P-coherent theories}
\label{sec:polyinference}
 In this subsection, we compare inference in theory $\theory^*$ with inference in the classical theory $\theory$.
 
 Let $\assess$ be a finite set of assessment in $\gambles_R$, and $ \bdomain=N_B(\assess)=\posi(\assess \cup \bnonnegative)$ be the corresponding P-coherent set of
desirable gambles in $\theory^*$. The lower prevision of a gamble $f \in \gambles_R$  is defined  as
\begin{equation}
\label{eq:lpber}
\begin{aligned}
 \underline{E}_B(f):=&\sup_{\gamma_0\in \reals, \lambda_i \in \reals^+} \gamma_0\\
 &s.t:\\
 &f - \gamma_0\mathbbm{1}_R  -\sum\limits_{i=1}^{|\mathcal{G}|} \lambda_i g_i \in \bnonnegative.
\end{aligned}
\end{equation}
% where $g_0(\omega)=\mathbbm{1}(\omega)-1$, with $\mathbbm{1}(\omega)\in \gambles_R$ being the unitary gamble.
The upper prevision as $ \overline{E}_B(f)=-\underline{E}_B(-f)$. Comparing \eqref{eq:lpne0} and \eqref{eq:lpber}, the reader can notice that
 $f - \gamma_0\mathbbm{1}  -\sum_{i=1}^{|\mathcal{G}|} \lambda_i g_i \in \nonnegative$ in  \eqref{eq:lpne0}
becomes $f - \gamma_0\mathbbm{1}_R  -\sum_{i=1}^{|\mathcal{G}|} \lambda_i g_i \in \bnonnegative$ in \eqref{eq:lpber}. 
% \textcolor{red}{Note that, we have
% added $g_0(\omega)=\mathbbm{1}(\omega)-1$ to both the expressions for \eqref{eq:lpne0} and \eqref{eq:lpber} for a more direct comparison. Sine this term is always zero in 
% $\Omega$, it does not change the lower/upper previsions.}. 

Note that, by definition of $\bnonnegative$,  the membership of $f - \gamma_0\mathbbm{1}_R  -\sum_{i=1}^{|\mathcal{G}|} \lambda_i g_i$ to $\bnonnegative$
can be evaluated in \textbf{polynomial-time} (its complexity class is P). 

Since  $\bnonnegative \neq  \nonnegative \cap \gambles_R$,  $\bdomain$ may not be coherent (in $\theory$), we therefore have that, for every $f \in \gambles_R$
\begin{equation}
\label{eq:belltype}
\stackrel{\substack{\theory^*\\~}}{\underline{E}_B(f)} ~~\leq~~ \stackrel{\substack{\theory\\~}}{\underline{E}(f)} ~~\leq ~~  \stackrel{\substack{\theory\\~}}{\overline{E}(f)}  ~~\leq~~ \stackrel{\substack{\theory^*\\~}}{\overline{E}_B(f)},
\end{equation}
meaning that $\underline{E}_B(f)$ cannot always be interpreted as a lower expectation. 
We claim that \eqref{eq:belltype} is just a general formulation of so-called \textit{Bell-type inequalities} in QT.

%%%%%%%%%
%%%%%%%%%
%%SECTION: RELATIVE COHERENCE FOR QT
%%%%%%%%%
%%%%%%%%%

\section{Coherence model for a quantum experiment}\label{sec:coheqm}
The aim of this section is to write down the  gambling system for a quantum mechanics experiment.
Since in QT any real-valued quantum observable  is described by a Hermitian operator,
this naturally defines a vector subspace of gambles  $\gambles_R$.
We will then %enforce rationality on Alice's choices, i.e., postulates \ref{eq:tautR}--\ref{eq:slR}, and 
show that evaluating the nonnegativity of a gamble in $\gambles_R$  is not computationally efficient.
This will  lead us to define a P-coherence postulate in $\gambles_R$ and, thus, via duality, to derive the first postulate of QT 
 \begin{quote}
\textit{  Associated to any isolated physical system is a complex vector space
with inner product (that is, a Hilbert space) known as the state space of the
system. The system is completely described by its density operator, which is a
positive operator $\rho$ with trace one, acting on the state space of the system.}
 \end{quote}

In the last subsection we thus briefly discuss the case of all other axioms, and how to derive them.

%%%%%%
%%%%%% SPACE QT
%%%%%% 

\subsection{Space of gambles in QT}\label{sub:space}
Consider first a single particle system with $n$-degrees of freedom and  let $\overline{\complex}^{n}$ be the $n$-dimensional complex unit-sphere, i.e.,:
$$
\overline{\complex}^{n}\coloneqq\{ x\in \complex^{n}: ~x^{\dagger}x=1\}.
$$
We can interpret an element $\tilde{x} \in \overline{\complex}^{n}$ as ``input data'' for some classical preparation procedure.
For instance,  in the case of the spin-$1/2$ particle ($n = 2$),  if $\theta = [\theta_1 , \theta_2 , \theta_3 ]$ is the direction of a filter
in the Stern-Gerlarch experiment, then $\tilde{x}$ is its one-to-one mapping into $\overline{\complex}^{2}$ (apart from a phase term).
For spin greater than $1/2$, the variable $\tilde{x} \in \overline{\complex}^{n}$ associated to the preparation procedure
 cannot directly be interpreted in terms only of ``filter direction''.
Nevertheless,  at least on the formal level, $\tilde{x}$ plays the role of  a ``hidden variable'' in our model.
Two vectors $\tilde{x},\tilde{x}'$ correspond to the same preparation procedure if
$\tilde{x}'=\delta \tilde{x}$ with $\delta \in \complex$  and $|\delta| = 1$ (phase).
$\overline{\complex}^{n}$ will therefore plays the role of the phase space (the possibility space $\pspace$) and $\tilde{x}$ of the 
``hidden-variable''. This hidden-variable model for QT is also discussed in \cite[Sec.~1.7]{holevo2011probabilistic}, where the author  explains
why this model does not contradict the existing ``no-go'' theorems for hidden-variables, see also Section \ref{sec:hidden1d}.

In QT any real-valued observable  is described by a Hermitian operator.
This naturally imposes restrictions on the type of functions $g$ in \eqref{eq:dec}:
$$
g(x)=x^\dagger G x,
$$
where  $x \in \pspace$ and $G\in \He^{n\times n}$, with $\He^{n\times n}$ being the set of Hermitian matrices of dimension $n \times n$.
Since $G$ is Hermitian and  $x$ is bounded ($x^{\dagger}x=1$), $g$ is a real-valued bounded function (a gamble).
By using the bra-ket notation, a gamble is thus $g(x)=\expval{G}{x}$.

% The payoff for this gamble is computed by measuring the quantum system along the direction of the preparation procedure.
As before, Alice's acceptance of a gamble  depends on her beliefs (uncertainty) about the preparation procedure.
% that is a a projection-valued measurement is performed along an orthogonal basis $\{v_1,\dots,v_n\}$ with $v_j \in \overline{\complex}^{n}$.
% If the outcome of the experiment is ``detection along the $j$-th direction'', then Alice receives/pays $v_j^\dagger G_i v_j $.

More generally, we can consider composite systems of $m$ particles each one with $n_j$ degrees of freedom.
The corresponding possibility space is the cartesian product of the systems
$$
\pspace=\prod_{j=1}^m \overline{\complex}^{n_j},
$$
%where $\times$ denotes the Cartesian product 
whereas  gambles are of the form
\begin{equation}
 \label{eq:gamble_many}
 g(x_1,\dots,x_m)=(\otimes_{j=1}^m x_j)^\dagger G (\otimes_{j=1}^m x_j),
\end{equation}
with $G \in \He^{n \times n}$, $n=\prod_{j=1}^m n_j$ and where $\otimes$ denotes the tensor product between vectors, seen as column matrices.

The justification for the use of the tensor product in composite systems is the following (for a more in depth discussed see  Section \ref{sec:tensorproduct}).
In the theory of desirable gambles, structural judgements such as independence, corresponds to the product of gambles on the single variables. In the specific case we are considering, 
they have the form $\prod_{j=1}^m x_j^{\dagger}  G_j x_j$.
Now, it is not difficult to verify that such product is mathematically  the same as  $(\otimes_{j=1}^m x_j)^\dagger (\otimes_{j=1}^m G_j) (\otimes_{j=1}^m x_j)$. 
By closing the set of product gambles  under the operations of addition and scalar (real number) multiplication, 
we get the vector space whose domain coincide with the collection of gambles of the form as in  \eqref{eq:gamble_many}. 
Hence,  in the setting under consideration, the tensor product is ultimately a derived notion, not a primitive one.

For $m=1$ (a single particle), evaluating the non-negativity of the quadratic form $x^\dagger G x$ boils down to checking whether the matrix $G$ is Positive Semi-Definite (PSD) and therefore can be performed in polynomial time.
This is no longer true for $m\geq 2$: indeed,  in this case there exist polynomials of type \eqref{eq:gamble_many} that are non-negative, but whose matrix $G$ is indefinite (it has at least one negative eigenvalue).
Moreover, it turns out that problem \eqref{eq:dec} is not \emph{tractable}:

\begin{proposition}
The problem of checking the nonnegativity of $g(x_1,\dots,x_m)$ in \eqref{eq:gamble_many} is NP-hard for $m\geq2$. 
\end{proposition}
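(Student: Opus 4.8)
The plan is to prove hardness for $m=2$ and then lift it to all $m\ge 2$. For the lifting I would set $n_1=n_2=n$ and $n_3=\dots=n_m=1$: since each $x_j\in\overline{\complex}^{1}$ is a phase with $|x_j|=1$, the factors $j\ge 3$ contribute only $|x_j|^2=1$, and the form $g(x_1,\dots,x_m)$ collapses to $(x_1\otimes x_2)^\dagger G(x_1\otimes x_2)$. Hence every instance of the bipartite problem embeds verbatim into the $m$-particle one, and it suffices to show that deciding whether $(x_1\otimes x_2)^\dagger G(x_1\otimes x_2)\ge 0$ for all $x_1,x_2\in\overline{\complex}^{n}$ is intractable. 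Writing the form as $\mathrm{Tr}\big(G(|x_1\rangle\langle x_1|\otimes|x_2\rangle\langle x_2|)\big)$, the question is exactly whether $G$ is nonnegative on all product (rank-one) states, i.e. whether $G$ is \emph{block positive}---the defining property of an entanglement witness. I would establish hardness by a polynomial many-one reduction from a canonical NP-complete problem.

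Concretely, I would reduce from CLIQUE via the Motzkin--Straus identity. Given a graph $H$ on $n$ vertices with adjacency matrix $A$ and an integer $k$, Motzkin--Straus gives $\max_{p\in\Delta}p^{T}Ap = 1-1/\kappa(H)$, where $\Delta\subset\Reals^{n}$ is the probability simplex and $\kappa(H)$ the clique number; so deciding $\kappa(H)\ge k$ amounts to deciding whether $\max_{p\in\Delta}p^{T}Ap>\theta$ for a threshold $\theta$ placed strictly between the two achievable values $1-1/k$ and $1-1/(k-1)$. I would then build the Hermitian matrix $G=(\theta+\beta)\,I\otimes I-\tilde A-\beta S\in\He^{n^2\times n^2}$, where $\tilde A=\sum_{i,j}A_{ij}\,|i\rangle\langle i|\otimes|j\rangle\langle j|$ is diagonal, $S$ is the swap operator $S|a\rangle|b\rangle=|b\rangle|a\rangle$, and $\beta>0$ is a penalty weight fixed below. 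Evaluating on a product state with $p_i=|(x_1)_i|^2$, $q_j=|(x_2)_j|^2$ yields
\[
(x_1\otimes x_2)^\dagger G(x_1\otimes x_2)=\theta-p^{T}Aq+\beta\big(1-|\langle x_1|x_2\rangle|^2\big).
\]
Two structural facts drive the construction: every $p\in\Delta$ is realised by a unit vector (take $(x_1)_i=\sqrt{p_i}$), and the swap term is always nonnegative, vanishing exactly when $x_1=e^{i\phi}x_2$, in which case $p=q$.

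For correctness, on aligned states ($x_1=x_2$) the value is $\theta-p^{T}Ap$, so if $\kappa(H)\ge k$ some $p$ makes it negative and $G$ fails block positivity, whereas if $\kappa(H)<k$ this expression is bounded below by the Motzkin--Straus gap on aligned states. The delicate point, and the main obstacle, is the off-aligned regime: $p^{T}Aq$ can exceed $\max_r r^{T}Ar$ when $p\neq q$ (already for $A$ a single edge), so the whole reduction hinges on showing the swap penalty dominates this excess. I would control it quantitatively: from $|\langle x_1|x_2\rangle|\le\sum_i\sqrt{p_iq_i}$ together with the distribution inequality $\tfrac14\|p-q\|_1^2\le 1-(\sum_i\sqrt{p_iq_i})^2$ one gets $\|p-q\|_1\le 2\sqrt{1-|\langle x_1|x_2\rangle|^2}$, whence $|p^{T}Aq-p^{T}Ap|\le 2\sqrt{1-|\langle x_1|x_2\rangle|^2}$ (using $0\le A_{ij}\le 1$). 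Splitting on whether $1-|\langle x_1|x_2\rangle|^2$ is below or above a threshold $\delta=\Theta(\mathrm{gap}^2)=\Theta(1/k^4)$ then shows that $\beta=(1-\theta)/\delta=O(k^4)$---polynomial in the input---keeps the form nonnegative everywhere when $\kappa(H)<k$, while leaving the aligned negativity intact when $\kappa(H)\ge k$. Thus $G$ is block positive iff $\kappa(H)<k$.

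This is a polynomial-time many-one reduction showing that detecting a product state on which $g$ is negative is NP-hard, so the nonnegativity (Dutch-book) decision problem is (co-)NP-hard, which is the intractability asserted. Beyond the gap-and-continuity bookkeeping above, the only remaining check is routine: that $\theta$, $\beta$, and the entries of $G$ have polynomially bounded bit-size, so that $G$ is genuinely constructible in polynomial time.
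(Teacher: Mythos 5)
Your proof is correct, but it takes a genuinely different route from the paper, which in fact offers no argument at all for this proposition: it simply cites Gurvits (2003) for the Hermitian biquadratic case and Ling--Nie--Qi--Ye (2009) for the real case. Your reduction is self-contained, and although it shares with Ling et al.\ the starting point (CLIQUE via Motzkin--Straus), the mechanism differs in an interesting way. Ling et al.\ sidestep the off-aligned regime that you correctly identify as the delicate point by using the coupled form $\sum_{(i,j)\in E} x_i y_i x_j y_j$: setting $z_i = x_i y_i$, Cauchy--Schwarz gives $\sum_i |z_i|\le 1$ automatically, so the simplex constraint of Motzkin--Straus is enforced for free and one gets the exact identity $\max = \tfrac12\left(1-1/\kappa(H)\right)$ with no penalty parameter and no continuity analysis. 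You instead keep the decoupled diagonal term $p^{T}Aq$ and restore alignment with the swap penalty $\beta S$, which forces the quantitative bookkeeping you carry out --- and that bookkeeping checks out: $|\langle x_1|x_2\rangle|\le\sum_i\sqrt{p_i q_i}$, the fidelity/total-variation bound $\|p-q\|_1\le 2\sqrt{1-|\langle x_1|x_2\rangle|^2}$, and $|p^{T}Aq-p^{T}Ap|\le\|p-q\|_1$ (using $0\le A_{ij}\le 1$) are all valid, and your split on $s=1-|\langle x_1|x_2\rangle|^2$ gives $\beta=O(k^4)$; in fact minimising $-2\sqrt{s}+\beta s$ directly shows $\beta=\Theta(k^2)$ already suffices, since the Motzkin--Straus gap is $\Theta(1/k^2)$. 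What your route buys: it works directly over complex unit vectors, so it yields the Hermitian (block-positivity, i.e.\ entanglement-witness) statement needed in this paper without detouring through Gurvits' weak-membership machinery, and the swap operator is a natural quantum-information gadget; what Ling et al.'s route buys is a shorter exact argument with no parameter tuning. Two minor remarks: strictly, your many-one reduction shows the nonnegativity problem is co-NP-hard (NP-hard under Turing reductions), which matches the loose usage in the paper and its sources and which you flag appropriately; and for the lift to $m\ge 3$, if one is uneasy about one-dimensional factors $\overline{\complex}^{1}$, the padding $G\otimes I$ on factors of dimension $n_j\ge 2$ achieves the same embedding.
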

This result was proven by \cite{gurvits2003classical} for Hermitian biquadratic forms and by \cite{ling2009biquadratic} in the real valued case.
% Hence, computing inferences in a theory $\theory_R$ whose gambles are the ones
% in  \eqref{eq:gamble_many} is in general NP-hard for $m\geq2$.

% For $m=1$ (a single particle), evaluating the nonnegativity of the quadratic form $x^\dagger G x$ boils down to the problem of checking that the matrix $G$ is Positive Semi-Definite (PSD) and therefore can be performed in polynomial time.
% This is no more true for $m\geq 2$, as  in this case there exist polynomials $(x_1 \otimes x_2)^\dagger G (x_1 \otimes x_2) $ that are nonnegative, but 
% whose matrix $G$ is indefinite (it has at least one negative eigenvalues).
\begin{example}
Consider the following polynomial $g(x_1, x_2)$ of $m=2$ complex variables of dimension $n=2$, $x_1=[x_{11},x_{12}]^T$ and $x_2=[x_{21},x_{22}]^T$
$$
\begin{aligned}
 2 x_{11} x_{11}^{\dagger} x_{22} x_{22}^{\dagger} -  x_{11} x_{12}^{\dagger} x_{21} x_{22}^{\dagger} -  x_{11} x_{12}^{\dagger} y_1^{\dagger} y_{2} -  x_{11}^{\dagger} x_{12} x_{21} x_{22}^{\dagger} - x_{11}^{\dagger} x_{12} y_1^{\dagger} y_{2} + 2 x_{12} x_{12}^{\dagger} x_{21} y_1^{\dagger}.
 \end{aligned}
$$
We have that $g(x_1, x_2)$ is nonnegative in $\overline{\complex}^4$ (it will be verified later), but it cannot be written as $(x_1 \otimes x_2)^\dagger G (x_1 \otimes x_2) $  with $G \geq 0$ (PSD).
\end{example}
%In general, evaluating the nonnegativity of this type of polynomials is NP-hard.

%%%%%%
%%%%%% POLY COHERENCE QT
%%%%%% 

\subsection{QT as computational rationality}
\label{sec:polycoher}
We have seen that the problem of checking the nonnegativity of a quadratic forms is in general  a NP-hard problem. 
%%%
What to do? As discussed previously, a solution is to change the meaning of ``being non-negative''   by considering a subset $\bnonnegative \subsetneq \nonnegative$ for which the membership problem, and thus \eqref{eq:dec}, is in P. 
%In doing so, we therefore will restrict postulates \ref{eq:tautR}-\ref{eq:slR}, and thus move from a relative coherent model to a P-coherent one. 

For  functions of type  \eqref{eq:gamble_many}, we can extend the notion of non-negativity that holds for a single particle 
to $m>1$ particles:
\begin{enumerate}[label=\upshape,ref=\upshape $\Sigma^{\geq}$]
\vspace{-0.5cm}\item\label{eq:b0} $$\Sigma^{\geq}\coloneqq\{g(x_1,\dots,x_m)=(\otimes_{j=1}^m x_j)^\dagger G (\otimes_{j=1}^m x_j): G\geq0\}.$$
\end{enumerate}
That is, the function is ``non-negative'' (P-nonnegative) whenever $G$ is PSD.  P-nonnegative gambles are also called  \textit{Hermitian sum-of-squares} (see e.g. \cite{d2009polynomial}).  
We will discuss about sum-of-squares polynomials in Sections \ref{sec:HermSOS} and \ref{sec:sos}.
Observe that, in $\Sigma^{\geq}$ the  non-negative constant functions have the form 
$$
g(x_1,\dots,x_m)=c \mathbbm{1}_R(x_1,\dots,x_m)
$$ 
with $c\geq0$ and 
$$
\mathbbm{1}_R(x_1,\dots,x_m)=(\otimes_{j=1}^m x_j)^\dagger I (\otimes_{j=1}^m x_j),
$$ 
being the unitary gamble.

Similarly, a gamble $g$ is P-negative whenever $G$ is Negative-Definite (ND), that is:
\begin{align}
 %\Sigma^{\geq}&=\{g(x_1,\dots,x_m)=(\otimes_{j=1}^m x_j)^\dagger G (\otimes_{j=1}^m x_j): G\geq0\},\\
 \Sigma^{<}&=\{g(x_1,\dots,x_m)=(\otimes_{j=1}^m x_j)^\dagger G (\otimes_{j=1}^m x_j): G<0\}.
\end{align}
%%%
We therefore can formulate postulates \ref{eq:btaut}-\ref{eq:b2}, and thus in particular  that 
$\bdomain$  is a P-coherent set of desirable gambles whenever $\bnegative \cap  \bdomain=\emptyset$.

In the following subsections, we are thence going to show how QT can be derived from this ``computational rationality''  model and how all paradoxes of QT can be explained as a consequence of P-coherence. Hence again, since both~\ref{eq:b1},\ref{eq:b2} and~\ref{eq:NE},\ref{eq:sl} are the same logical postulates parametrised by the appropriate meaning of ``being negative/non-negative'', the only axiom truly separating classical probability theory from the quantum one is~\ref{eq:btaut} (with the specific form of~\ref{eq:b0}), thus implementing the requirement of computational efficiency.

%%%%%%
%%%%%% DUALITY QT
%%%%%% 

\subsubsection{Duality}
\label{sec:dualQM}
Recall from Section \ref{sec:complex} that the set $\left\{L_B \in \gambles_R^* \mid L_B(g)\geq0, ~~L_B(\mathbbm{1}_R)=1,~\forall g \in \bdomain\right\}$ is the dual of $\bdomain\subset \gambles_{R}$.

%\begin{equation}
%\label{eq:dual0}
%\bdomain^\bullet=\left\{L \in \gambles_{2d}^\bullet:  L(g) \geq0, ~\forall g \in \bdomain\right\}.
%\end{equation} 
The monomials $\otimes_{j=1}^m x_j$ form a basis of the space $\gambles_{R}$. Define the Hermitian matrix of scalars
\begin{equation}
\label{eq:linearoperator}
Z:=L_B\left((\otimes_{j=1}^m x_j)(\otimes_{j=1}^m x_j)^\dagger\right),
\end{equation} 
and let $\{z_{ij}\}\in \complex^{d}$, with $d=\frac{n(n+1)}{2}$ and $n=\prod_{j=1}^m n_j$, be the vector of variables obtained  by taking  the elements 
of the upper triangular part of $Z$.
Given any gamble $g$, we can therefore rewrite $L_B(g)$ as a function of the vector  $\{z_{ij}\}\in \complex^{d}$. This means that the dual space $\gambles_{R}^*$ is isomorphic to
$\complex^{d}$, and we can thence define the dual maps $(\cdot)^\circ$ between $\gambles_{R}$ and $\complex^{d}$ as follows.
\begin{definition}\label{def:dual0}
Let $\bdomain$ be a closed convex cone in $\gambles_{R}$. Its dual cone is defined as 
\begin{equation}
\label{eq:dualM0}
\bdomain^\circ=\left\{{z} \in \complex^{{d}}:  L_B(g)\geq0, ~\forall g \in \bdomain\right\},
\end{equation} 
where $L_B(g)$ is completely determined by $\{z_{ij}\}$ via the definition \eqref{eq:linearoperator}.
\end{definition}

\begin{example}
 Consider the case $n=m=2$, then
 \begin{equation}
\label{eq:exlinearoperator}
L_B\left((\otimes_{j=1}^2 x_j)(\otimes_{j=1}^2 x_j)^\dagger\right)= L_B\left(\left[\begin{smallmatrix}x_{11} x_{11}^{\dagger} x_{21} x_{21}^{\dagger} & x_{11}^{\dagger} x_{12} x_{21} x_{21}^{\dagger} & x_{11} x_{11}^{\dagger} x_{21}^{\dagger} x_{22} & x_{11}^{\dagger} x_{12} x_{21}^{\dagger} x_{22}\\x_{11} x_{12}^{\dagger} x_{21} x_{21}^{\dagger} & x_{12} x_{12}^{\dagger} x_{21} x_{21}^{\dagger} & x_{11} x_{12}^{\dagger} x_{21}^{\dagger} x_{22} & x_{12} x_{12}^{\dagger} x_{21}^{\dagger} x_{22}\\x_{11} x_{11}^{\dagger} x_{21} x_{22}^{\dagger} & x_{11}^{\dagger} x_{12} x_{21} x_{22}^{\dagger} & x_{11} x_{11}^{\dagger} x_{22} x_{22}^{\dagger} & x_{11}^{\dagger} x_{12} x_{22} x_{22}^{\dagger}\\x_{11} x_{12}^{\dagger} x_{21} x_{22}^{\dagger} & x_{12} x_{12}^{\dagger} x_{21} x_{22}^{\dagger} & x_{11} x_{12}^{\dagger} x_{22} x_{22}^{\dagger} & x_{12} x_{12}^{\dagger} x_{22} x_{22}^{\dagger}\end{smallmatrix}\right]\right),
\end{equation}
and so
 \begin{equation}
\label{eq:exlinearoperator}
Z=\left[\begin{matrix} z_{11} & z_{12} & z_{13} & z_{14}\\
                 z_{12}^{\dagger} & z_{22} & z_{23} & z_{24}\\
                 z_{13}^{\dagger} & z_{23}^{\dagger} & z_{33} & z_{34}\\
                 z_{14}^{\dagger} & z_{24}^{\dagger} & z_{34}^{\dagger} & z_{44}                 
\end{matrix}\right],
\end{equation}
with $z_{11}=L_B(x_{11} x_{11}^{\dagger} x_{21} x_{21}^{\dagger})$, $z_{12}=L_B(x_{11}^{\dagger} x_{12} x_{21} x_{21}^{\dagger} )$ etc..
\end{example}

% Similarly, given a closed convex cone $\mathcal{R}$ of $\complex^{d}$, its dual is defined as
% \begin{equation}
% \label{eq:dualZ0}
% \mathcal{R}^\bullet=\left\{{g} \in \gambles_{R}:  L(g)\geq0, ~\forall y \in\mathcal{R}%\ydomain
% \right\},
% \end{equation} 

In discussing properties of the dual space, we need the following well-known result from linear algebra:
\begin{lemma}\label{lem:TR}
For any $M \in H^{d\times d}$ and $v \in \complex^{d}$, it holds that
\begin{equation}
\label{eq:matrixTR}
 Tr(M (v v^\dagger)) = Tr((v v^\dagger)M) =  v^\dagger M v.
\end{equation}
\end{lemma}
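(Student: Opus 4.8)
The statement to prove is Lemma~\ref{lem:TR}, a standard identity from linear algebra: for $M \in H^{d\times d}$ (a Hermitian matrix) and $v \in \complex^{d}$,
\[
 Tr(M(vv^\dagger)) = Tr((vv^\dagger)M) = v^\dagger M v.
\]

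The plan is to verify the two equalities separately, both by direct computation with the trace. First I would establish the outer equality $Tr(M(vv^\dagger)) = Tr((vv^\dagger)M)$ by invoking the cyclic (commutative-under-trace) property of the trace, namely $Tr(AB) = Tr(BA)$ for any conformable matrices $A$ and $B$; here $A = M$ and $B = vv^\dagger$ are both $d \times d$, so the identity applies immediately. This half requires no computation beyond citing cyclicity of the trace.

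For the second equality, $Tr((vv^\dagger)M) = v^\dagger M v$, I would expand the trace in coordinates. Writing $v = (v_1,\dots,v_d)^\dagger$ and $M = (M_{ij})$, the matrix $vv^\dagger$ has entries $(vv^\dagger)_{ij} = v_i v_j^\dagger$, so the $(i,k)$ entry of the product $(vv^\dagger)M$ is $\sum_{j} v_i v_j^\dagger M_{jk}$. Taking the trace sets $k = i$ and sums, giving
\[
 Tr((vv^\dagger)M) = \sum_{i}\sum_{j} v_i v_j^\dagger M_{ji} = \sum_{i,j} v_j^\dagger M_{ji} v_i,
\]
which is exactly the expansion of the scalar $v^\dagger M v = \sum_{i,j} v_j^\dagger M_{ji} v_i$. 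Alternatively, and more slickly, one can note that $v^\dagger M v$ is itself a $1\times 1$ matrix, hence equal to its own trace, and then apply cyclicity: $v^\dagger M v = Tr(v^\dagger M v) = Tr(M v v^\dagger)$, which combined with the first equality closes the argument. There is no genuine obstacle here; the only thing to be careful about is index bookkeeping and the placement of the conjugate-transpose so that the quadratic form $v^\dagger M v$ comes out with the correct entries, but since the identity holds for arbitrary (not necessarily Hermitian) $M$, Hermiticity is not even needed for the calculation itself.
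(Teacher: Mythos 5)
Your proof is correct. The paper itself offers no proof of Lemma~\ref{lem:TR} --- it is stated as a ``well-known result from linear algebra'' and used without argument --- so there is nothing to diverge from; your verification (cyclicity of the trace for the first equality, coordinate expansion or the observation that the scalar $v^\dagger M v$ equals its own trace for the second) is precisely the standard argument one would supply, and you are right that Hermiticity of $M$ plays no role. The only blemish is notational: writing $v=(v_1,\dots,v_d)^\dagger$ makes the entries of $v$ the conjugates $v_i^\dagger$ rather than $v_i$, which would flip the conjugations in your entry formula $(vv^\dagger)_{ij}=v_iv_j^\dagger$; you mean $v=(v_1,\dots,v_d)^T$, and with that fix the index bookkeeping is exactly right.
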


By Lemma \ref{lem:TR} and the definitions of $g$ and $Z$, we obtain the following result.
\begin{proposition}\label{prop:LisTR}
Let $g(x_1,\dots,x_m) = (\otimes_{j=1}^m x_j)^\dagger G (\otimes_{j=1}^m x_j)$ and $G$ Hermitian. Then 
for every $z \in  \complex^{{d}}$, it holds that $L_B(g) = Tr({G} Z)$, 
where $Z$ is defined in \eqref{eq:linearoperator}. 
\end{proposition}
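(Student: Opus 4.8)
The plan is to unwind the definitions and reduce everything to the linearity of the functional $L_B$ together with Lemma~\ref{lem:TR}. First I would write the gamble in terms of the monomial basis. Setting $v\coloneqq \otimes_{j=1}^m x_j \in \complex^{n}$ (viewed as a column vector), the gamble is $g(x_1,\dots,x_m)=v^\dagger G v$. By Lemma~\ref{lem:TR}, applied pointwise in the hidden variable, we have $v^\dagger G v = \Tr\!\bigl(G\,(v v^\dagger)\bigr)$, so that $g = \Tr\!\bigl(G\,(v v^\dagger)\bigr)$ as an identity of gambles (functions of $x_1,\dots,x_m$). The matrix $v v^\dagger = (\otimes_{j=1}^m x_j)(\otimes_{j=1}^m x_j)^\dagger$ is exactly the argument appearing in the definition \eqref{eq:linearoperator} of $Z$.

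Next I would push $L_B$ through the trace. The key observation is that $\Tr\!\bigl(G\,(v v^\dagger)\bigr)=\sum_{i,k} G_{ki}\,(v v^\dagger)_{ik}$ is a \emph{fixed} real-linear combination of the entries of the matrix-valued gamble $v v^\dagger$, with coefficients $G_{ki}$ that do not depend on the hidden variable. Since $L_B$ is a linear functional on $\gambles_R$ and $G$ is a constant Hermitian matrix, linearity lets me interchange $L_B$ with the finite sum and the trace:
\begin{equation}
L_B(g)=L_B\!\Bigl(\Tr\bigl(G\,(v v^\dagger)\bigr)\Bigr)=\sum_{i,k} G_{ki}\,L_B\!\bigl((v v^\dagger)_{ik}\bigr)=\sum_{i,k} G_{ki}\,Z_{ik}=\Tr(G Z),
\end{equation}
where the third equality is precisely the definition $Z=L_B\!\bigl(v v^\dagger\bigr)$ in \eqref{eq:linearoperator}, applied entrywise. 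This yields the claimed identity $L_B(g)=\Tr(GZ)$.

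The only points that require a little care, rather than being genuinely hard, are bookkeeping issues. One must check that each entry $(v v^\dagger)_{ik}$ of the matrix is itself an element of $\gambles_R$ (it is a gamble of the form \eqref{eq:gamble_many}, obtained by choosing $G$ to be the elementary matrix $E_{ik}$), so that applying $L_B$ entrywise is legitimate and $Z$ is well defined as a matrix of scalars. One must also confirm that the resulting $Z$ is Hermitian, so that $\Tr(GZ)$ is real and the expression makes sense as the value of a real-valued functional: this follows because $(v v^\dagger)^\dagger = v v^\dagger$ forces $Z_{ki}=\overline{Z_{ik}}$ by linearity, matching the Hermitian structure displayed in \eqref{eq:exlinearoperator}. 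I expect no genuine obstacle here; the whole content is that evaluating $L_B$ on a quadratic-form gamble is the same as pairing its Hermitian matrix $G$ with the single ``moment matrix'' $Z$ via the trace, which is exactly the bilinearity built into Lemma~\ref{lem:TR} combined with linearity of $L_B$.
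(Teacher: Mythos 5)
Your proof is correct and takes essentially the same route as the paper, whose argument is exactly the one you spell out: write $g = Tr\bigl(G\,(vv^\dagger)\bigr)$ with $v=\otimes_{j=1}^m x_j$ via Lemma~\ref{lem:TR}, then push the linear functional $L_B$ through the trace entrywise and invoke the definition \eqref{eq:linearoperator} of $Z$. One small imprecision in your bookkeeping paragraph: for $i\neq k$ the entry $(vv^\dagger)_{ik}$ is complex-valued and the elementary matrix $E_{ik}$ is not Hermitian, so that entry is not literally a gamble of the form \eqref{eq:gamble_many}; strictly one decomposes $E_{ik}$ into its Hermitian and anti-Hermitian parts and applies the complex-linear extension of $L_B$ to real and imaginary parts separately --- a convention the paper itself adopts implicitly when defining $Z$ entrywise in \eqref{eq:linearoperator}, so this does not affect the validity of your argument.
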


The next lemma states that the only symmetry in the matrix $(\otimes_{j=1}^m x_j)(\otimes_{j=1}^m x_j)^\dagger$ with $x_j \in \complex^{n_j}$ is that 
of being self-adjoint.  
\begin{lemma}
\label{lem:uniqueX}
Consider the matrix
 \begin{align}
  \label{eq:X}
X&=(\otimes_{j=1}^m x_j)(\otimes_{j=1}^m x_j)^\dagger.
 \end{align}
 Let $X_{st}$ denote the $st$-th element of $X$ then for all $t\geq s$ (upper triangular elements) we have that $X_{st}=X_{kl}$ 
 $\forall x_j \in \overline{C}^{n_j}$ iff $k=s$ and $l=t$.
\end{lemma}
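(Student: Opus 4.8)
The plan is to show that each upper-triangular entry of $X$ is a \emph{distinct} monomial in the component variables, so that the only coincidences between entries come from the Hermitian symmetry $X_{st}=\overline{X_{ts}}$, which never relates two upper-triangular positions. First I would fix the bijection between a linear index $s\in\{1,\dots,n\}$ and its multi-index $(a_1(s),\dots,a_m(s))$, $a_p(s)\in\{1,\dots,n_p\}$, induced by the tensor (mixed-radix) ordering, so that $(\otimes_{j=1}^m x_j)_s=\prod_{p=1}^m x_{p,a_p(s)}$, where $x_{p,a}$ denotes the $a$-th component of $x_p$. With this notation the $(s,t)$ entry reads
\[
X_{st}=\prod_{p=1}^m x_{p,a_p(s)}\,\overline{x_{p,a_p(t)}} .
\]
The lemma then amounts to the injectivity claim: $X_{st}=X_{kl}$ as functions on $\prod_j\overline{\complex}^{n_j}$ forces $s=k$ and $t=l$, and restricting the indices of the left-hand entry to $t\geq s$ recovers exactly the stated formulation.

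Second, I would separate the holomorphic and anti-holomorphic content. The monomial factors as $\big(\prod_p x_{p,a_p(s)}\big)\big(\prod_p\overline{x_{p,a_p(t)}}\big)$: the first factor is holomorphic and records one component index $a_p(s)$ per particle, the second is anti-holomorphic and records $a_p(t)$. Reading off these indices recovers both multi-indices, hence $s$ and $t$, \emph{provided} distinct monomials of this shape remain distinct as functions on the product of spheres. This proviso is the whole content of the lemma, and the step where the unit-sphere constraints $x_p^\dagger x_p=1$ could in principle collapse two formally different monomials into one function; ruling this out is the main obstacle.

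To dispose of it I would prove the stronger statement that the family $\{X_{st}\}_{s,t}$ is linearly independent over $\complex$ as functions on $\prod_j\overline{\complex}^{n_j}$; injectivity is the special case of a two-term relation. Writing $y:=\otimes_{j=1}^m x_j$, a relation $\sum_{s,t}c_{st}X_{st}\equiv 0$ is precisely $y^\dagger M y=0$ for all product vectors $y$, where $M$ is the matrix with entries $M_{ts}=c_{st}$. Since $y^\dagger M y$ is homogeneous, the constraints $x_p^\dagger x_p=1$ may be dropped by rescaling each factor, so it suffices to show that $(\otimes_j x_j)^\dagger M(\otimes_j x_j)=0$ for all $x_j\in\complex^{n_j}$ implies $M=0$. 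This I would establish by induction on $m$: fixing all factors but one and invoking the elementary complex polarization fact that $z^\dagger A z=0$ for all $z\in\complex^{k}$ forces $A=0$, one peels off the particles one at a time until $M=0$. Linear independence yields the injectivity claim. Finally, the Hermitian symmetry $X_{st}=\overline{X_{ts}}$ relates an upper-triangular entry only to its lower-triangular mirror, and $X_{st}=X_{ts}$ (i.e.\ $X_{st}$ identically real) can hold only on the diagonal $s=t$; hence among upper-triangular positions the unique solution of $X_{st}=X_{kl}$ is the trivial $k=s$, $l=t$, which is exactly the assertion that self-adjointness is the only symmetry of $X$.
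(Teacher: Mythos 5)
Your proposal is correct, but there is nothing in the paper to compare it against: despite the footnote promising that all proofs are in the Appendix, Lemma~\ref{lem:uniqueX} is stated without proof anywhere in the paper, so your argument in fact supplies a missing one. The two steps carrying the real weight are exactly the ones you isolate, and both are sound. First, the rescaling step: every entry $X_{st}=\prod_p x_{p,a_p(s)}\overline{x_{p,a_p(t)}}$ has the \emph{same} bidegree (degree one in each $x_p$ and degree one in each $\overline{x_p}$), so a linear relation holding on the product of unit spheres extends by homogeneity to all of $\prod_j \complex^{n_j}$. This is precisely the point where the lemma could fail if the constraints $x_p^\dagger x_p=1$ were able to identify formally distinct monomials, and it is instructive that in the real polynomial setting of Section~\ref{sec:ent_not_only} the analogue of the lemma does fail --- there the vector $v(x)$ mixes degrees and $v(x)v^T(x)$ repeats monomials, which is exactly what produces the extra symmetries of the moment matrix \eqref{eq:dualMatZ} that the paper contrasts with the quantum case. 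Second, the injectivity of $M\mapsto(\otimes_j x_j)^\dagger M(\otimes_j x_j)$ on product vectors, which you obtain by induction on $m$ (block-decompose $M$, fix all factors but one, peel off one particle) from the polarization fact that $z^\dagger Az=0$ for all $z\in\complex^k$ forces $A=0$. You are right to flag this as a specifically complex fact; it is false over $\reals$ (antisymmetric $A$), which is again consistent with the failure of the lemma in the paper's real example.

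One small remark: your linear-independence formulation proves more than the statement asks --- it gives $X_{st}=X_{kl}$ as functions iff $(s,t)=(k,l)$ for \emph{all} index pairs, since a two-term relation $X_{st}-X_{kl}\equiv 0$ with $(s,t)\neq(k,l)$ corresponds to a nonzero matrix $M$ and is excluded outright by injectivity. This makes your closing paragraph on the Hermitian symmetry $X_{st}=\overline{X_{ts}}$ logically redundant (though harmless). The stronger form is also the version the paper implicitly uses: it is what justifies treating the upper-triangular entries of $Z$ in \eqref{eq:linearoperator} as $d=\frac{n(n+1)}{2}$ free coordinates of the dual space in Definition~\ref{def:dual0} and Proposition~\ref{prop:igno}.
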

We then verify that
\begin{proposition}\label{prop:igno}
Let $\bdomain$ be a P-coherent set of desirable gambles. The following holds:
\begin{equation}
\label{eq:dualM1}
\bdomain^\circ=\left\{{z} \in \complex^{{d}}:  L_B(g)=Tr({G} Z)\geq0, ~Z\geq 0,~\forall g \in \bdomain\right\}.
\end{equation} 
\end{proposition}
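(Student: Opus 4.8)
The plan is to prove the two set inclusions and to observe that essentially all the content lies in showing that the constraint $Z \geq 0$ is satisfied automatically by every element of $\bdomain^\circ$. First I would record what is immediate: by Definition~\ref{def:dual0} together with Proposition~\ref{prop:LisTR}, membership $z \in \bdomain^\circ$ is \emph{by definition} the condition $L_B(g) = Tr(GZ) \geq 0$ for all $g \in \bdomain$. Hence the right-hand set, which imposes exactly this same inequality \emph{plus} the extra requirement $Z \geq 0$, is trivially contained in $\bdomain^\circ$; so the $\supseteq$ inclusion comes for free and only $\subseteq$ requires argument.

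For the nontrivial inclusion the key input is that $\bdomain$, being a (deductively closed) set of desirable gambles, contains its tautologies: postulate~\ref{eq:btaut} together with the closure~\ref{eq:b1} gives $\bnonnegative \subseteq \bdomain$. So I would fix $z \in \bdomain^\circ$ and restrict the defining inequality to gambles of $\bnonnegative$: for every $g = (\otimes_{j} x_j)^\dagger G (\otimes_j x_j)$ with $G \geq 0$, Proposition~\ref{prop:LisTR} yields $Tr(GZ) = L_B(g) \geq 0$. Thus $Tr(GZ) \geq 0$ holds for \emph{every} PSD matrix $G$, which is precisely the statement that $Z$ belongs to the dual cone of the PSD cone.

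The remaining, and main, step is to convert this into $Z \geq 0$, i.e.\ to invoke the self-duality of the PSD cone. I would specialize to the rank-one PSD matrices $G = v v^\dagger$ with $v \in \complex^{n}$ (where $n = \prod_j n_j$) and apply Lemma~\ref{lem:TR} to rewrite $Tr((v v^\dagger) Z) = v^\dagger Z v$. The inequality then reads $v^\dagger Z v \geq 0$ for all $v$, and since $Z$ is Hermitian this is exactly the definition of $Z \geq 0$. The Hermitian-ness of $Z$ is guaranteed because $X = (\otimes_j x_j)(\otimes_j x_j)^\dagger$ is self-adjoint by Lemma~\ref{lem:uniqueX}, so $L_B$ sends the $(s,t)$ entry to the conjugate of the $(t,s)$ entry (as the Example following Definition~\ref{def:dual0} illustrates). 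Combining $Z \geq 0$ with the defining inequality already in hand places $z$ in the right-hand set, completing $\subseteq$.

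The only genuine obstacle is this self-duality step, and it is routine once one notices the two points that drive it: ranging $G$ over all PSD matrices already captures every rank-one $v v^\dagger$, and Lemma~\ref{lem:TR} turns the trace pairing against $v v^\dagger$ into the quadratic form $v^\dagger Z v$. The Hermitian-ness of $Z$ must genuinely be in place for ``$v^\dagger Z v \geq 0$ for all $v$'' to give positive semidefiniteness rather than merely a statement about a symmetric part, which is why Lemma~\ref{lem:uniqueX} is needed. No closure or topological argument enters here, since the characterization concerns only the conic constraints defining $\bdomain^\circ$; I would also note in passing that the no-sure-loss part of P-coherence is not used—only $\bnonnegative \subseteq \bdomain$ is.
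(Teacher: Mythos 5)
Your proof is correct and follows essentially the same route as the paper: both reduce the claim to observing that $\Sigma^{\geq}\subseteq\bdomain$ (from~\ref{eq:btaut}--\ref{eq:b1}) forces $Tr(GZ)\geq 0$ for every PSD $G$, whence $Z\geq 0$ by self-duality of the PSD cone. The only difference is cosmetic---where the paper invokes this last step as a standard linear-algebra fact (citing Holevo, Lemma 1.6.3), you inline its proof by specialising to rank-one matrices $G=vv^\dagger$ and applying Lemma~\ref{lem:TR}.
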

\begin{proof}
By P-coherence, $\bdomain$ includes $\Sigma^{\geq}$, which is isomorphic to the closed convex cone of PSD matrices.
We have that
$$
L_B(g)=Tr({G} Z)\geq0  ~~\forall g\in \Sigma^{\geq} \subseteq \bdomain.
$$
From a standard result in linear algebra, see for instance \cite[Lemma 1.6.3]{holevo2011probabilistic}, this implies that $Z \geq0$, i.e., it must be a PSD matrix.
\end{proof}
In what follows, we verify that, analogously to Section \ref{sec:complex}, the dual $\bdomain^\circ$ is completely characterised by a closed convex set of states. But before doing that, we have to clarify what is a state in this context.

 In a P-coherent theory, postulate \ref{eq:taut} is replaced with postulate \ref{eq:btaut}. Hence, to define what  a state is, one cannot anymore refer to nonnegative gambles but to gambles that are P-nonnegative. This means that states are linear operators that assign nonnegative real numbers to P-nonnegative, and that additionally preserve the unit gamble.
In the context of Hermitian gambles, the unitary gamble is  
\begin{equation}
\label{eq:constntQM}
\mathbbm{1}_R(x_1,\dots,x_m)=(\otimes_{j=1}^m x_j)^\dagger I (\otimes_{j=1}^m x_j)= \prod\limits_{i=1}^m {x_j}^{\dagger}x_j=1,
\end{equation} 
where $I$ is the identity matrix.
Therefore, we want that
\begin{equation}
\begin{aligned}
L_B\Big((\otimes_{j=1}^m x_j)^\dagger I (\otimes_{j=1}^m x_j)\Big)&=L_B\Big(Tr\Big( I ~(\otimes_{j=1}^m x_j)(\otimes_{j=1}^m x_j)^\dagger\Big)\Big)\\
&=Tr\Big(I~ L_B\Big( (\otimes_{j=1}^m x_j)(\otimes_{j=1}^m x_j)^\dagger)\Big)\Big)\\
&=Tr(I ~Z)=Tr(Z)=1.
\end{aligned}
\end{equation}
Hence, the set of states is
\begin{equation}
\begin{aligned}
 \mathscr{S}_B=\{{z} \in \complex^{{d}}: \mid Z \geq0, ~~Tr(Z)=1\}.
\end{aligned}
\end{equation}

By reasoning exactly as for Theorem \ref{prop:dualcharges0}, we then have the following result.
\begin{theorem}
\label{th:dualSOS}
The map
\[ \bdomain \mapsto \mathcal{Q}:=\bdomain^\circ\cap \mathscr{S}_B\]
is a bijection between P-coherent set of desirable gambles in $\gambles_R$  and closed convex subsets of $\mathscr{S}_B$. 
\end{theorem}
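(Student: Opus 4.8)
The plan is to mirror the proof of Theorem~\ref{prop:dualcharges0}, replacing classical non-negativity by P-non-negativity throughout, and leaning on the duality machinery already set up for $\bdomain^\circ$ and $\mathscr{S}_B$. Concretely, I want to show that the map $\bdomain \mapsto \mathcal{Q}:=\bdomain^\circ\cap\mathscr{S}_B$ is both injective and surjective onto the closed convex subsets of $\mathscr{S}_B$. The whole argument rests on the bipolar theorem: since $\bdomain$ is a P-coherent set of desirable gambles, it is a closed convex cone containing $\bnonnegative=\Sigma^{\geq}$, and by Proposition~\ref{prop:igno} its dual $\bdomain^\circ$ consists exactly of (vectors $z$ corresponding to) PSD matrices $Z$ with $L_B(g)=\Tr(GZ)\geq 0$ for all $g\in\bdomain$.

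First I would establish \emph{surjectivity}. Given any non-empty closed convex subset $\mathcal{Q}\subseteq\mathscr{S}_B$, define $\bdomain:=\{g\in\gambles_R: L_B(g)\geq 0 \text{ for all } z\in\mathcal{Q}\}$, i.e.\ the pre-dual cone. This is automatically a closed convex cone, and because every $z\in\mathcal{Q}$ has $Z\geq 0$, by Proposition~\ref{prop:LisTR} we get $L_B(g)=\Tr(GZ)\geq 0$ for every $g\in\Sigma^{\geq}$ (as $G\geq 0$ and $Z\geq 0$ force $\Tr(GZ)\geq 0$). Hence $\Sigma^{\geq}\subseteq\bdomain$, so $\bdomain$ is P-coherent, and one checks $\bdomain^\circ\cap\mathscr{S}_B=\mathcal{Q}$ by the bipolar theorem applied in the finite-dimensional space $\complex^{d}\cong\gambles_R^*$ (every closed convex set is the intersection of the half-spaces containing it).

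Next I would establish \emph{injectivity}, equivalently that $\bdomain$ can be recovered from $\mathcal{Q}$. The key identity is that for a closed convex cone $\bdomain$ containing $\Sigma^{\geq}$, taking the dual twice returns $\bdomain$; that is, $\bdomain=\{g\in\gambles_R: L_B(g)\geq 0 \text{ for all } z\in\bdomain^\circ\}$. Because the normalisation $\Tr(Z)=1$ merely fixes a scale on the cone $\bdomain^\circ$ (which is generated by PSD matrices, for which the trace is a strictly positive functional on the nonzero elements), the intersection $\mathcal{Q}=\bdomain^\circ\cap\mathscr{S}_B$ carries exactly the same information as the full dual cone $\bdomain^\circ$: one can reconstruct $\bdomain^\circ=\posi(\mathcal{Q})$ from $\mathcal{Q}$. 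Combining this with the bipolar theorem gives $\bdomain=(\posi(\mathcal{Q}))^\circ$, so $\bdomain$ is uniquely determined by $\mathcal{Q}$, proving injectivity.

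The main obstacle I anticipate is the interplay between the conic duality $(\cdot)^\circ$ and the affine normalisation $\Tr(Z)=1$: I must verify carefully that slicing the dual cone $\bdomain^\circ$ by the hyperplane $\{\Tr(Z)=1\}$ loses no information, i.e.\ that every ray of $\bdomain^\circ$ meets $\mathscr{S}_B$ in exactly one point. This is where I use that $\bdomain^\circ$ lies inside the PSD cone (Proposition~\ref{prop:igno}) and that $\Tr$ is strictly positive on nonzero PSD matrices, so $\Tr(Z)>0$ for all nonzero $z\in\bdomain^\circ$ and the normalisation is well-defined. The remaining steps — closedness and convexity of the images, and the bipolar identity — are routine given that we work in the finite-dimensional space $\complex^{d}$, so standard finite-dimensional convex duality (rather than the Hahn–Banach/weak${}^*$ arguments needed in Theorem~\ref{prop:dualcharges0}) suffices.
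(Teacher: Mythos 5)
Your proposal is correct and follows essentially the same route as the paper, which itself proves Theorem~\ref{th:dualSOS} simply ``by reasoning exactly as for Theorem~\ref{prop:dualcharges0}'': injectivity via recovering the full dual cone as $\bdomain^\circ=\posi(\mathcal{Q})$ together with the bipolar identity, surjectivity via the pre-dual cone of a non-empty closed convex $\mathcal{Q}\subseteq\mathscr{S}_B$, with the normalisation $\Tr(Z)=1$ justified because $\bdomain^\circ$ lies in the PSD cone (Proposition~\ref{prop:igno}) where the trace is strictly positive on nonzero elements. One small point to patch: in the surjectivity step, $\Sigma^{\geq}\subseteq\bdomain$ alone does not give P-coherence (postulate~\ref{eq:b2} requires $\bdomain\cap\Sigma^{<}=\emptyset$), but this follows in one line from the non-emptiness of $\mathcal{Q}$, since for any $z\in\mathcal{Q}$ and $G<0$ one has $\Tr(GZ)\leq\lambda_{\max}(G)\Tr(Z)<0$.
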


Hence, we can identify the dual of a P-coherent set of desirable gambles $\bdomain$, with the closed convex set of states
\begin{equation}
\label{eq:dualM1}
\begin{aligned}
\mathcal{Q}%&=\left\{z \in \complex^{{d}}:  L_B(g)=Tr({G} Z)\geq0, ~Z\geq 0,~Tr(Z)=1, ~~\forall g \in \bdomain\right\},\\
&=\left\{ z \in  \mathscr{S}_B:  L_B(g)=Tr({G} Z)\geq0,  ~~\forall g \in \bdomain\right\},
\end{aligned}
\end{equation} 
which is equivalent to .

Notice that the matrices corresponding to states are density matrices, in fact \eqref{eq:dualM1} is equivalent to
$\left\{ \rho \in  \He^{n\times n}: \rho\geq0, Tr(\rho)=1, Tr({G} \rho)\geq0,  ~~\forall G  \in \bdomain\right\}$.
Hence, from now we can identify the set $\mathscr{S}_B$  with the set of density matrices and denote its elements as usual with the symbol $\rho$.

%We have just proven that the dual of a P-coherent set of desirable gambles is:
%\begin{align}
%\label{eq:credaldef}
%\mathcal{Q}&=\{ \rho \in \mathscr{S}_B \mid Tr(G \rho) \geq  0,~ ~\forall G \in \domain\},
%\end{align} 
%where  $\mathscr{S}_B=\{ \rho \in \He^{n\times n} \mid \rho\geq0,~~Tr(\rho)=1\}$ is the set of all density matrices.

%%%%%%
%%%%%% INFERENCE IN QT
%%%%%% 

\subsubsection{Inference in QT}
\label{sec:QTinference}
 In this subsection, we use the results of Section \ref{sec:polyinference} to derive lower 
 and upper previsions  of a gamble $f(x_1,\dots,x_m) = (\otimes_{j=1}^m x_j)^\dagger F (\otimes_{j=1}^m x_j)$ in QT. 
 By \eqref{eq:lpber}, we have that
\begin{equation}
\label{eq:lpberqt}
\begin{aligned}
 \underline{E}_B(f):=&\sup_{\gamma_0\in \reals, \lambda_i \in \reals^+} \gamma_0\\
 &s.t:\\
 &(\otimes_{j=1}^m x_j)^\dagger F (\otimes_{j=1}^m x_j) - \gamma_0\mathbbm{1}_R(x_1,\dots,x_m)  -\sum\limits_{i=1}^{|\mathcal{G}|} \lambda_i (\otimes_{j=1}^m x_j)^\dagger G_i (\otimes_{j=1}^m x_j) \in \bnonnegative,\\
%  &\hspace{5.4cm} \forall (x_1,x_2,\dots,x_m) \in \prod_{j=1}^m\Omega_j,
\end{aligned}
\end{equation}
the upper prevision  $ \overline{E}_B(f)=-\underline{E}_B(-f)$. 
Note that the membership problem in \eqref{eq:lpberqt} reduces to find $\gamma_0\in \reals, \lambda_i \in \reals^+$ such that
$F- \gamma_0I -\sum\limits_{i=1}^{|\mathcal{G}|}G_i$ is PSD.
%By duality, it can  be proven  \cite{} that 
The dual of the above optimisation problem is
\begin{equation}
\label{eq:dualpberqt}
\begin{aligned}
 \underline{E}_B(f):=&\inf_{\rho\in \mathscr{S}_B} Tr(F\rho)\\
 &s.t:\\
 &Tr(G_i \rho) \geq0,~~\forall i=1,\dots,|\mathcal{G}|
\end{aligned}
\end{equation}
Whenever a P-coherent $\bdomain$ is maxima, its dual $\mathcal{Q}$ includes a single density matrix $\tilde{\rho}$\footnote{This does not require $|\mathcal{G}|=\infty$.
Since $\tilde{\rho}$ is a matrix it can be uniquely specified by finite assessments of desirability $\tilde{\rho}=\{\rho: Tr(G_i \rho) \geq0,~~\forall i=1,\dots,|\mathcal{G}|\}$.}.
$$
 \underline{E}_B(f) = \overline{E}_B(f)=Tr(F\tilde{\rho}).
 $$
Hence, in this case both the lower and upper prevision of the gamble $f(x_1,\dots,x_m) = (\otimes_{j=1}^m x_j)^\dagger F (\otimes_{j=1}^m x_j) $
 coincide with $Tr(F\tilde{\rho})$.  
 Note that in QT the expectation of $f$ is $Tr(F \rho)$. This follows by Born's rule, a law giving the probability that a measurement on a quantum system will yield a given result.
The agreement with Born's rule is an important constraint in any alternative axiomatisation of QT. Our theory agrees with it although this is a derived notion in our setting.  In fact, in the  view of a density matrix as a  dual operator, $\rho$ is formally equal to
\begin{equation}
  \label{eq:momepr00}
\rho=L\left((\otimes_{j=1}^m x_j)(\otimes_{j=1}^m x_j)^\dagger\right).
\end{equation}
% with $L$ defined in \eqref{eq:bdec}. 
Hence, when a projection-valued measurement characterised by the projectors $\Pi_1,\dots,\Pi_n$ is considered, then
$$
L( (\otimes_{j=1}^m x_j)^\dagger \Pi_i (\otimes_{j=1}^m x_j))=Tr(\Pi_i  L((\otimes_{j=1}^m x_j)(\otimes_{j=1}^m x_j)^\dagger))=Tr(\Pi_i \rho ).
$$
Since $\Pi_i\geq0$ and the polynomials $(\otimes_{j=1}^m x_j)^\dagger \Pi_i (\otimes_{j=1}^m x_j)$ for $i=1,\dots,n$ form a partition of unity, i.e.:
$$
\sum_{i=1}^n (\otimes_{j=1}^m x_j)^\dagger \Pi_i (\otimes_{j=1}^m x_j)=  (\otimes_{j=1}^m x_j)^\dagger I (\otimes_{j=1}^m x_j)=1,
$$
we have that
$$
Tr(\Pi_i \rho )\in[0,1] \text{ and } \sum_{i=1}^n Tr(\Pi_i \rho )=1.
$$
For this reason,  $Tr(\Pi_i \rho )$ is usually interpreted as a probability. But \emph{the projectors $\Pi_i$'s are not indicator functions}, whence, strictly speaking, the traditional interpretation is incorrect. This can be seen clearly in the special case where postulates~\ref{eq:taut} and~\ref{eq:btaut} coincide, as in the case of a single particle, that is, where the theory can be given a classical probabilistic interpretation, see next section. 
In such a case, the corresponding $\rho$ is just a (\emph{truncated}) \emph{moment matrix}, i.e., one for which there is
at least one probability  such that $E[(\otimes_{j=1}^m x_j)(\otimes_{j=1}^m x_j)^\dagger]=\rho$. In summary, our standpoint here is that $Tr(\Pi_i \rho )$ should rather be interpreted as the expectation of the $m$-quadratic form $(\otimes_{j=1}^m x_j)^\dagger \Pi_i (\otimes_{j=1}^m x_j)$. This makes quite a difference with the traditional interpretation since in our case there can be (and usually there will be) more than one charge compatible with such an expectation, as we will point out more precisely, in next section.

\subsection{Truncated moment matrices vs.\ density matrices}
\label{sec:momentmat}
In a single particle system of dimension $n$, $\rho=L_B(x x^{\dagger})$.
 In this case, $ \rho$ can be interpreted as a truncated moment matrix, i.e., there exists a probability distribution on the complex vector variable $x\in\Omega$ such that
 \begin{equation}
  \label{eq:ccc}
   \rho=\int_{x\in \Omega} xx^{\dagger} d\mu(x).
 \end{equation}
In fact, consider the eigenvalue-eigenvector decomposition of the density matrix:
$$
\rho=\sum\limits_{i=1}^n \lambda_i v_iv_i^{\dagger},
$$
with $\lambda_i\geq0$ and $v_i \in \complex^{n}$  being orthonormal.
We can define the probability distribution
$$
\mu(x)= \sum\limits_{i=1}^n \lambda_i \delta_{v_i}(x),
$$
where $\delta_{v_i}$ is an atomic charge (Dirac's delta) on $v_i$.
Then it is immediate to verify that 
$$
\int_{x\in \Omega} x x^{\dagger}d\mu(x)=\sum\limits_{i=1}^n \lambda_i v_iv_i^{\dagger}=\rho.
$$
Note also that,  a truncated moment matrix does not uniquely define a probability distribution, i.e., for a given $\rho$ there may exist two probability distributions $\mu_1(x)\neq\mu_2(x)$ such that
$$
 \rho=\int_{x\in \Omega} xx^{\dagger} d\mu_1(x)=\int_{x\in \Omega} xx^{\dagger} d\mu_2(x).
$$
This means that if we interpret  $\rho$ as a truncated moment matrix, QT is a theory of imprecise probability \cite{walley1991}, that
is a density matrix defines a closed convex set of probability distributions via \eqref{eq:ccc}.
For $m>1$ particles, $\rho$ can be interpreted as a truncated moment matrix only when the P-coherent
set of desirable gambles associated to $\rho$, i.e., $\bdomain=\{(\otimes_{j=1}^m x_j)^{\dagger} G (\otimes_{j=1}^m x_j): Tr(G\rho)\geq0\}$, does not satisfy the first condition of Theorem \ref{th:fundamental}. We will discuss more on that in the next sections. % \cite{navascues2008convergent}

%  
% \textcolor{red}{\tt
% Dobbiamo discuetre che $\rho$ e' una moment matrix in qualche caso o una quasi-moment matrix quando non e' compatibile con le probablita'
% \cite{navascues2008convergent} 
% \\
% Spostare qui la relazione della partial trace ?}

%%%%%%%%%
%%%%%%%%%
%%SUBSECTION: OTHER AXIOMS
%%%%%%%%%
%%%%%%%%%

%%%%%%%%%
%%%%%%%%%
%%SECTION: ENTAGLEMENT
%%%%%%%%%
%%%%%%%%%

\section{Explaining the weird}
\label{sec:entan0}
\subsection{Entanglement}\label{sec:entang}

Entanglement is usually presented as a  characteristic of QT. In this section we are going to show that it is actually an immediate consequence of computational tractability, meaning that entanglement phenomena are not confined to QT but can be observed in other contexts too. An example of a non-QT entanglement is provided  in Section  \ref{sec:ent_not_only}.

To illustrate the emergence of entanglement from P-coherence, we verify  that  the set of desirable gambles whose dual is an entangled density matrix $\rho_{e}$ includes a negative gamble that is not in $\bnegative$, and thus, although
being logically coherent, it cannot be given a classical probabilistic interpretation.

In what follows we focus only on bipartite systems $\pspace_A \times \pspace_B$, with $n=m=2$. The results are nevertheless general.

Let $(x,y) \in \pspace_A \times \pspace_B$, where
$x=[x_1,x_2]^T$ and $y=[y_1,y_2]^T$. 
We aim at showing that there exists a gamble $h(x,y)=(x \otimes y)^{\dagger} H (x \otimes y) $ satisfying:
\begin{equation}
\label{eq:condepr0}
\begin{aligned}
 Tr(H \rho_{e})&\geq0 \text{ and }\\
h(x,y)=(x \otimes y)^{\dagger} H (x \otimes y) &< 0 \text{ for all } (x,y)\in \pspace_A\times \pspace_B.\\
\end{aligned}
\end{equation}
The first inequality says that $h$ is desirable in $\btheory$. That is, $h$ is a gamble desirable to Alice whose beliefs are represented by $\rho_{e}$. The second inequality says that $h$ is negative and, therefore, leads to a sure loss in $\theory$.
By~\ref{eq:btaut}--\ref{eq:b2}, the inequalities in~\eqref{eq:condepr0} imply that $H$ must be an indefinite Hermitian matrix.

 Assume that $n=m=2$ and consider the entangled density matrix: 
$$
\rho_{e}=\frac{1}{2}\begin{bmatrix}
      1 & 0 & 0 &1\\
      0 & 0 & 0 &0\\
      0 & 0 & 0 &0\\
      1 & 0 & 0 &1\\
     \end{bmatrix},
$$
and the Hermitian matrix:
$$
H=\left[\begin{matrix}0.0 & 0.0 & 0.0 & 1.0\\0.0 & -2.0 & 1.0 & 0.0\\0.0 & 1.0 & -2.0 & 0.0\\1.0 & 0.0 & 0.0 & 0.0\end{matrix}\right].
$$
This matrix is indefinite (its eigenvalues are $\{1, -1, -1, -3\}$) and is such that $Tr(H\rho_{e})=1$.
Since $Tr(H\rho_{e})\geq0$, the gamble
\begin{equation}
\label{eq:sosqt}
\begin{aligned}
(x \otimes y)^{\dagger} H (x \otimes y)=&
- 2 x_{1} x_1^{\dagger} y_{2} y_2^{\dagger} +  x_{1} x_2^{\dagger} y_{1} y_2^{\dagger} +  x_{1} x_2^{\dagger} y_1^{\dagger} y_{2} +  x_1^{\dagger} x_{2} y_{1} y_2^{\dagger} +  x_1^{\dagger} x_{2} y_1^{\dagger} y_{2} - 2 x_{2} x_2^{\dagger} y_{1} y_1^{\dagger},
 \end{aligned}
\end{equation}
is desirable for Alice in $\btheory$. 

Let  $x_i=x_{ia}+\iota x_{ib}$ and $y_i=y_{ia}+\iota y_{ib}$ with $x_{ia},x_{ib},y_{ia},y_{ib}\in \reals$, for $i=1,2$, denote the real and imaginary
components of $x,y$. Then 
\begin{equation}
\label{eq:realim}
\begin{aligned}
(x \otimes y)^{\dagger} H (x \otimes y)=&- 2 x_{1a}^{2} y_{2a}^{2} - 2 x_{1a}^{2} y_{2b}^{2} + 4 x_{1a} x_{2a} y_{1a} y_{2a} + 4 x_{1a} x_{2a} y_{1b} y_{2b} \\
&- 2 x_{1b}^{2} y_{2a}^{2} - 2 x_{1b}^{2} y_{2b}^{2} + 4 x_{1b} x_{2b} y_{1a} y_{2a} + 4 x_{1b} x_{2b} y_{1b} y_{2b}\\
&- 2 x_{2a}^{2} y_{1a}^{2} - 2 x_{2a}^{2} y_{1b}^{2} - 2 x_{2b}^{2} y_{1a}^{2} - 2 x_{2b}^{2} y_{1b}^{2}\\
=&-(\sqrt{2}x_{1a}y_{2a}-\sqrt{2}x_{2a}y_{1a})^2-(\sqrt{2}x_{1a}y_{2b}-\sqrt{2}x_{2a}y_{1b})^2\\
&-(\sqrt{2}x_{1b}y_{2b}-\sqrt{2}x_{2b}y_{1b})^2
-(\sqrt{2}x_{2b}y_{1a}-\sqrt{2}x_{2a}y_{1b})^2<0.
 \end{aligned}
\end{equation}
% \textcolor{red}{\texttt{verificare che effettivamente e' $<0$ e non $\leq$, altrimenti possiamo aggiungere un delta). }}

This is the essence of the quantum puzzle:  $\bdomain$ is P-coherent but (Theorem~\ref{th:fundamental}) there is no $\mathcal{P}$ associated to it and therefore, from the point of view of a classical probabilistic interpretation, it is not coherent (in any classical description of the composite quantum system, the variables $x,y$ appear to be entangled in a way unusual for classical subsystems). 

As previously mentioned, there are two possible ways out from this impasse:
% given by the impossibility of providing a classical probabilistic interpretation for $\bdomain$:
  to claim the existence of either non-classical evaluation functionals or of negative probabilities. Let us examine them in turn.

\begin{description}
%\paragraph{Existence of non-classical evaluation functions}
\item[(1) Existence of non-classical evaluation functionals:]
From an informal betting perspective, the effect  of a quantum experiment on $h(x,y)$ is to evaluate this polynomial to return the payoff for Alice. By Theorem~\ref{th:fundamental}, there is no compatible classical evaluation functional, and thus in particular no 
value of the variables $x,y\in \pspace_A \times  \pspace_B$, such that 
$ h(x,y)=1$. Hence, if we adopt this point of view, we have to find another, non-classical, explanation for $ h(x,y)=1$. The following evaluation functional, denoted as $ev(\cdot)$, may do the job:
$$
\text{ev}\hspace{-1.1mm}\left(\begin{bmatrix}
x_1y_1\\
x_2y_1\\
x_1y_2\\
x_2y_2\\
\end{bmatrix}\right)=\begin{bmatrix}
\tfrac{\sqrt{2}}{2}\\
0\\
0\\
\tfrac{\sqrt{2}}{2}\\
\end{bmatrix},~\text{which implies}~ \text{ev}\hspace{-1mm}\left((x \otimes y)^{\dagger} H (x \otimes y)\right)=1.
$$
%\textcolor{red}{\texttt{SAY WHAT ev IS?}}

Note that, $x_1y_1=\tfrac{\sqrt{2}}{2}$ and $x_2y_1=0$ together imply that
$x_2=0$, which contradicts $x_2y_2=\tfrac{\sqrt{2}}{2}$.\footnote{If the product of two complex numbers $x_2 y_1$ is zero then either $x_2=0$ or $y_1=0$.} Similarly,
 $x_2y_2=\tfrac{\sqrt{2}}{2}$ and $x_1y_2=0$ together imply that
$x_1=0$, which contradicts $x_1y_1=\tfrac{\sqrt{2}}{2}$. Hence, as expected, the above evaluation functional is non-classical. It amounts to  assigning a value to the products $x_iy_j$ but not to the single components of $x$ and $y$ separately. Quoting \cite[Supplement~3.4]{holevo2011probabilistic}, ``entangled states are holistic entities in which the single components only exist virtually''. 

% The point is that, by agreeing to redefine the meaning of the predicate \emph{being non-negative}, Alice and Bob have  also implicitly agreed 
% to change the way polynomials are evaluated; they have implicitly agreed to modify the method based on which the extension of such predicate is determined. 
% Hence, the fact that a polynomial $g$ that is negative but not in $\bnonnegative$ can be desirable for an agent in $\btheory$ is perfectly consistent because \emph{ in $\btheory$} the gamble $g$ is not negative and therefore cannot lead the agent to a sure loss.

% We want to point out that, in our interpretation, the possibility space is still $\pspace_A\times \pspace_B$ and not $\pspace_A\otimes \pspace_B$ (come lo diciamo questo)???

%\paragraph{Existence of signed charges} 
\item[(2) Existence of negative probabilities:] Negative probabilities are not an intrinsic characteristic of QT. They
appear whenever one attempts to explain QT ``classically'' by looking at the space of charges on $\pspace$.
% We have seen in Section \ref{sec:momentmat}  that in some circumstances Claire can interpret the dual of $\bdomain$ as a moment matrix.
To see this, consider $\rho_e$, and assume that, based on \eqref{eq:momepr}, one calculates: 
  \begin{equation}
  \label{eq:momepr}
\int  \begin{bmatrix}x_{1} x_1^{\dagger} y_{1} y_1^{\dagger} & x_1^{\dagger} x_{2} y_{1} y_1^{\dagger} & x_{1} x_1^{\dagger} y_1^{\dagger} y_{2} & x_1^{\dagger} x_{2} y_1^{\dagger} y_{2}\\x_{1} x_2^{\dagger} y_{1} y_1^{\dagger} & x_{2} x_2^{\dagger} y_{1} y_1^{\dagger} & x_{1} x_2^{\dagger} y_1^{\dagger} y_{2} & x_{2} x_2^{\dagger} y_1^{\dagger} y_{2}\\x_{1} x_1^{\dagger} y_{1} y_2^{\dagger} & x_1^{\dagger} x_{2} y_{1} y_2^{\dagger} & x_{1} x_1^{\dagger} y_{2} y_2^{\dagger} & x_1^{\dagger} x_{2} y_{2} y_2^{\dagger}\\x_{1} x_2^{\dagger} y_{1} y_2^{\dagger} & x_{2} x_2^{\dagger} y_{1} y_2^{\dagger} & x_{1} x_2^{\dagger} y_{2} y_2^{\dagger} & x_{2} x_2^{\dagger} y_{2} y_2^{\dagger}\end{bmatrix} d \mu(x,y)=\frac{1}{2}\begin{bmatrix}
      1 & 0 & 0 &1\\
      0 & 0 & 0 &0\\
      0 & 0 & 0 &0\\
      1 & 0 & 0 &1\\
     \end{bmatrix}.
\end{equation}
Because of Theorem~\ref{th:fundamental},  there is no probability charge $\mu$ satisfying these moment constraints, the only compatible being signed ones.
Box~1 reports the nine components and corresponding weights of one of them:
\begin{equation}
\label{eq:momsigned}
 \mu(x,y)=\sum\limits_{i=1}^{9}w_i\delta_{\{(x^{(i)},y^{(i)})\}}(x,y) ~~\text{ with }~~ (w_i,x^{(i)},y^{(i)}) ~~\text{ as in Box~1. } 
\end{equation}
Note that some of the weights are negative but $\sum_{i=1}^{9}w_i=1$, meaning that we have an affine combination of atomic charges (Dirac's deltas).

{

\newpage
\begin{framed}

\vspace{-0.5ex}

\noindent {{\bf Box 1: charge compatible with \eqref{eq:momepr}}}

\vspace{-0.9ex}

{\footnotesize
\noindent 
\begin{center}
 \tiny{
\begin{tabular}{|c|c|c|c|c|c|}
\multicolumn{1}{c}{} & \multicolumn{1}{c}{1} & \multicolumn{1}{c}{2} & \multicolumn{1}{c}{3} & \multicolumn{1}{c}{4} & \multicolumn{1}{c}{5}\\
\hline
 \multirow{ 2}{*}{x} & -0.0963 - 0.6352$\iota$ & 0.251 - 0.9665$\iota$ & 0.7884 + 0.2274$\iota$ & 0.5702 - 0.4006$\iota$ & 0.3452 - 0.4539$\iota$\\
 & -0.0065 - 0.7663$\iota$ & 0.0387 + 0.0381$\iota$ & -0.1263 +  0.5574$\iota$ & 0.0027 + 0.7172$\iota$ & 0.4872 + 0.6613$\iota$\\\hline
 \multirow{ 2}{*}{y} &  -0.3727 - 0.3899$\iota$ & 0.6359 - 0.5716$\iota$ & 0.1553 - 0.4591$\iota$ & -0.3515 + 0.2848$\iota$ & 0.2129 - 0.2004$\iota$\\
 & -0.4385 - 0.7189$\iota$ & 0.3725 + 0.3608$\iota$ & 0.4039 +  0.7759$\iota$ & 0.5911 - 0.6678$\iota$ & 0.2032 - 0.9345$\iota$\\ \hline
 w &     0.4805&  0.7459 & -0.892 &  0.7421 &  0.4724 \\ \hline  
 \end{tabular}\\
 \vspace{0.5cm}
  \begin{tabular}{|c|c|c|c|c|}
\multicolumn{1}{c}{} & \multicolumn{1}{c}{6} & \multicolumn{1}{c}{7} & \multicolumn{1}{c}{8} & \multicolumn{1}{c}{9} \\
  \hline
  \multirow{ 2}{*}{x} & 0.818 + 0.2654$\iota$ & -0.0541 - 0.8574$\iota$ & -0.3179 - 0.1021$\iota$ & -0.1255 - 0.3078$\iota$  \\
 & -0.486 + 0.1556$\iota$ & 0.4995 + 0.1112$\iota$ & 0.5198 + 0.7864$\iota$ & 0.2943 - 0.8961$\iota$  \\
 \hline
 \multirow{ 2}{*}{y} &  0.446 + 0.6996$\iota$ & -0.1628 + 0.561$\iota$ & 0.6285 - 0.4852$\iota$ & 0.0933 - 0.4588$\iota$  \\
 & -0.5474 - 0.1096$\iota$ & -0.8105 + 0.0419$\iota$ & -0.0035 - 0.6079$\iota$ & 0.8455 + 0.2568$\iota$  \\
 \hline
 w &      0.3297 &  -0.7999 &        -0.2544 &  0.1755\\
 \hline
\end{tabular}}\end{center}
The table reports the components of a  charge $\sum_{i=1}^{9}w_i\delta_{\{(x^{(i)},y^{(i)})\}}(x,y)$ that satisfies \eqref{eq:momepr}. The $i$-th column
of the row denoted as $x$ (resp. $y$) corresponds to the element $x^{(i)}$ (resp. $y^{(i)}$). The $i$-th column of the vector
$w$ corresponds to $w_i$. %This  charge has been computed numerically. All numbers have been rounded off  for clarity.
Consider for instance the first monomial $x_{1} x_1^{\dagger} y_{1} y_1^{\dagger}$ in \eqref{eq:momepr}, its expectation w.r.t.\ the above charge is
{\scriptsize
$$
\begin{aligned}
&\int x_{1} x_1^{\dagger} y_{1} y_1^{\dagger} \left(\sum_{i=1}^{9}w_i\delta_{\{(x^{(i)},y^{(i)})\}}(x,y)\right) dxdy=\sum_{i=1}^{9}w_i x^{(i)}_{1} {x^{(i)}_1}^{\dagger} y^{(i)}_{1} {y^{(i)}_1}^{\dagger}\\
&= 0.4805 (-0.0963 - 0.6352\iota)(-0.0963 + 0.6352\iota)(-0.3727 - 0.3899\iota)(-0.3727 + 0.3899\iota)\\
&+ 0.7459 (0.251 - 0.9665\iota)(0.251 + 0.9665\iota)(-0.1628 + 0.561\iota)(-0.1628 - 0.561\iota)\\
&+\dots\\
&+ 0.1755(-0.1255 - 0.3078\iota)(-0.1255 + 0.3078\iota)(0.0933 - 0.4588\iota)(0.0933 + 0.4588\iota)=\frac{1}{2}.
\end{aligned}
$$}
 }
%  \vspace{-0.1ex}
\end{framed}
}

The  charge described in Box~1  is one among the many that satisfy~\eqref{eq:momepr} and has been derived numerically.
Explicit procedure for constructing such  charge  representations have been developed by \citep{schack2000explicit}.
An optimisation procedure to find the representation with the
minimum amount of negativity is given in \citep{sperling2009necessary}.
% Note that, we can exploit results from truncated moments problem to theoretically 
% \eqref{eq:momepr}
% 
% 
% Dirac and Feynman \cite{}, both pointed out that negative probabilities are a characteristic of QT. 
% We instead argue that they are a characteristic of any \texttt{bounded rationality system} when observed by an external fully rational agent.
\end{description}

Again, we want to stress that the two above paradoxical interpretations are a consequence of Theorem~\ref{th:fundamental}, and therefore can emerge when considering any instance of a theory of P-coherence in which
the hypotheses of  this result hold.
% according to Proposition \ref{prop:funda}, the meaning of being positive  is not co-extensive with classical meaning of being positive.

\subsection{Local realism}\label{sec:local}
The issue of \emph{local realism} in QT arises when one performs measurements on a pair of separated but entangled particles.
This again shows the impossibility of a peaceful agreement between the internal logical consistency of a P-coherent theory and the attempt to provide an external coherent (classical) interpretation.
Let us discuss it from the latter perspective. Firstly, notice that, since $(x_{1} x_1^{\dagger}+x_{2} x_2^{\dagger})=1$, the linear operator
\begin{align}
\label{eq:matL}
L\left(\left[\begin{matrix}x_{1} x_1^{\dagger} y_{1} y_1^{\dagger} & x_1^{\dagger} x_{2} y_{1} y_1^{\dagger} & x_{1} x_1^{\dagger} y_1^{\dagger} y_{2} & x_1^{\dagger} x_{2} y_1^{\dagger} y_{2}\\x_{1} x_2^{\dagger} y_{1} y_1^{\dagger} & x_{2} x_2^{\dagger} y_{1} y_1^{\dagger} & x_{1} x_2^{\dagger} y_1^{\dagger} y_{2} & x_{2} x_2^{\dagger} y_1^{\dagger} y_{2}\\x_{1} x_1^{\dagger} y_{1} y_2^{\dagger} & x_1^{\dagger} x_{2} y_{1} y_2^{\dagger} & x_{1} x_1^{\dagger} y_{2} y_2^{\dagger} & x_1^{\dagger} x_{2} y_{2} y_2^{\dagger}\\x_{1} x_2^{\dagger} y_{1} y_2^{\dagger} & x_{2} x_2^{\dagger} y_{1} y_2^{\dagger} & x_{1} x_2^{\dagger} y_{2} y_2^{\dagger} & x_{2} x_2^{\dagger} y_{2} y_2^{\dagger}\end{matrix}\right]\right)
\end{align}
satisfies the properties:
\begin{align}
\nonumber
L(x_{1} x_1^{\dagger} y_{1} y_1^{\dagger})+L(x_{2} x_2^{\dagger} y_{1} y_1^{\dagger})&=L((x_{1} x_1^{\dagger}+x_{2} x_2^{\dagger})y_{1} y_1^{\dagger})=L_y(y_{1} y_1^{\dagger}),\\
\nonumber
L(x_{1} x_1^{\dagger} y_{2} y_2^{\dagger})+L(x_{2} x_2^{\dagger} y_{2} y_2^{\dagger})&=L((x_{1} x_1^{\dagger}+x_{2} x_2^{\dagger})y_{2} y_2^{\dagger})=L_y(y_{2} y_2^{\dagger}),\\
\label{eq:PTrace}
L(x_{1} x_1^{\dagger} y_{1} y_2^{\dagger})+L(x_{2} x_2^{\dagger} y_{1} y_2^{\dagger})&=L((x_{1} x_1^{\dagger}+x_{2} x_2^{\dagger})y_{1} y_2^{\dagger})=L_y(y_{1} y_2^{\dagger}).
\end{align}
Hence, by summing up some of the components of the matrix \eqref{eq:matL}, we can recover  the marginal linear operator
$$
M_y=L_y\left(\left[\begin{matrix}y_{1} y_1^{\dagger} & y_{1} y_2^{\dagger} \\
 y_{2} y_1^{\dagger} & y_{2} y_2^{\dagger}\end{matrix}\right]\right)=\left[\begin{matrix}\frac{1}{2} & 0\\ 
0 & \frac{1}{2}\end{matrix}\right],
$$
where the last equality holds when $L((x\otimes y)(x\otimes y)^{\dagger})=\rho_e$, i.e., $M_y=\frac{1}{2}I_2$ is the reduced density matrix of $\rho_e$ on system $B$. The operation we have just described, when applied to a density matrix, is known in QT as \emph{partial trace}. Given the interpretation of $\rho$ as a  dual operator, the 
operation of partial trace simply  follows by Equation \eqref{eq:PTrace}.

Similarly, by partial trace we can obtain  
$$
M_x=L_x\left(\left[\begin{matrix}x_{1} x_1^{\dagger} & x_{1} x_2^{\dagger} \\
 x_{2} x_1^{\dagger} & x_{2} x_2^{\dagger}\end{matrix}\right]\right)=\left[\begin{matrix}\frac{1}{2} & 0\\ 
0 & \frac{1}{2}\end{matrix}\right].
$$
Matrix $M_x$ (analogously to $M_y$) is compatible with probability: there are marginal probabilities whose $M_x$ is the moment matrix, an example being
$$
\frac{1}{2}\delta_{\left\{\begin{bmatrix}
            1\\
            0
           \end{bmatrix}\right\}}(x)+\frac{1}{2}\delta_{\left\{\begin{bmatrix}
            0\\
            1
           \end{bmatrix}\right\}}(x).
$$ 
In other words, we are brought to believe that \emph{marginally} the physical properties $x,y$ of the two particles
have a meaning, i.e., they can be explained through probabilistic mixtures of classical evaluation functionals. We can now ask Nature, by means of a real experiment, to decide between our common-sense notions of how the world works, and Alice's one.
Experimental verification of this phenomenon can be obtained by a CHSH-like experiment, which aims at experimentally reproducing
a situation where~\eqref{eq:condepr0} holds, as explained in Box~2. In this interpretation, the CHSH experiment is an \textit{entanglement witness}, we 
discuss the connection between \eqref{eq:condepr0} and the entanglement witness theorem  in Section \ref{sec:witness}.

% \begin{figure}
% %  \includegraphics[width=3cm]{dummy.pdf}
% \caption{}
% \label{fig:CHSH}
% \end{figure}

%How does Claire explain that?

% Experimental verification of this phenomenon can be obtained by a CHSH-like experiment, which aims at experimentally reproducing
% a situation where~\eqref{eq:condepr0} holds, as explained in Box~2. In this interpretation, the CHSH experiment is an \textit{entanglement witness}, we 
% discuss the connection between \eqref{eq:condepr0} and the entanglement witness theorem  in Section \ref{sec:witness}.

{

\begin{framed}

\vspace{-0.5ex}

\noindent {{\bf Box 2: CHSH experiment}}

\vspace{-0.9ex}

{\footnotesize

\noindent 
\begin{center}
\includegraphics[width=12cm]{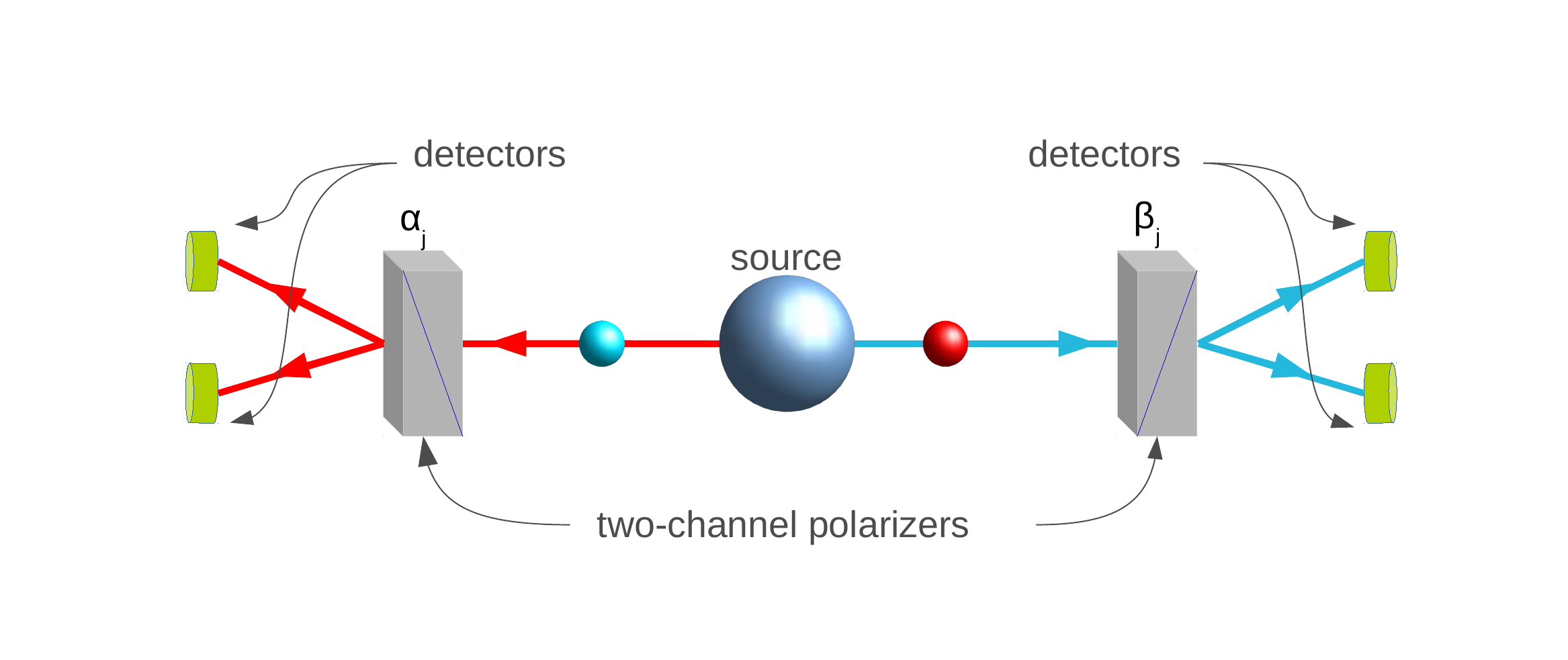}
\end{center}
The source produces pairs of entangled photons, sent in opposite directions. 
Each photon encounters a two-channel polariser whose orientations  can be set by the experimenter. 
Emerging signals from each channel are captured by detectors. 
Four possible orientations $\alpha_i,\beta_j$ for $i,j=1,2$  of the  polarisers are tested.  Consider the Hermitian matrices
$G_{\alpha_i}=\sin(\alpha_i)\sigma_{x}+\cos(\alpha_i)\sigma_{z}$, $G_{\beta_j}=\sin(\beta_j)\sigma_{x}+\cos(\beta_j)\sigma_{z}$,
where $\sigma_{x},\sigma_{z}$ are the 2D Pauli's matrices, and define the gamble 
$$
(x^\dagger G_{\alpha_i} x)(y^\dagger G_{\beta_j} y )=(x \otimes y)^\dagger G_{\alpha_i\beta_j} (x \otimes y)
$$
on the result of experiment with $G_{\alpha_i\beta_j}=G_{\alpha_i} \otimes G_{\beta_j}$. Consider then the sum gamble
$$
h(x,y)=(x \otimes y)^\dagger (G_{\alpha_1\beta_2}-G_{\alpha_1\beta_2}+G_{\alpha_2\beta_1}+G_{\alpha_2\beta_2})(x \otimes y)
$$
and observe that
$$
\begin{aligned}
h(x,y)&=(x \otimes y)^\dagger (G_{\alpha_1}\otimes (G_{\beta_2}-G_{\beta_2}))(x \otimes y) +(x \otimes y)^\dagger (G_{\alpha_2}\otimes (G_{\beta_1}+G_{\beta_2}))(x \otimes y)\\
&=(x^\dagger G_{\alpha_2} x)(y^\dagger (G_{\beta_1}+G_{\beta_2}) y)\\
&\leq y^\dagger (G_{\beta_1}+G_{\beta_2}) y\\
&\leq 2,\\
\end{aligned}
$$
this is the CHSH inequality.  For a small $\epsilon>0$, we have that 
$$
h(x,y)-2-\epsilon=(x \otimes y)^\dagger (-(2+\epsilon) I_4+G_{\alpha_1\beta_2}-G_{\alpha_1\beta_2}+G_{\alpha_2\beta_1}+G_{\alpha_2\beta_2})(x \otimes y)<0
$$ 
\textbf{but} 
$$
~~~~~~~~~~~~~~Tr((-(2+\epsilon)I_4+G_{\alpha_1\beta_2}-G_{\alpha_1\beta_2}+G_{\alpha_2\beta_1}+G_{\alpha_2\beta_2}) \rho_e)=-2+\epsilon+2\sqrt{2}\geq0
$$
for $\alpha_1=\pi/2,\alpha_2=0,\beta_1=\pi/4,\beta_2=-\pi/4$. We are again in a situation like \eqref{eq:condepr0}.
The experiment in the figure certifies the entanglement by measuring the QT expectation of the four components of $h(x,y)$.
%  In the experiment, the expectations of $(x \otimes y)^\dagger G_{\alpha_i\beta_j}(x \otimes y)$ are estimated using an ensemble
%  of entangled particles. Once all the four expectations  have been estimated, an experimental estimate of the expectation of $h$ can be found showing that a negative gamble has a positive expectation in QT.
 }

\vspace{0ex}

\end{framed}
}

The situation we have just described is the playground of Bell's theorem, stating the impossibility of Einstein's principle of local realism: that is, the combination of the assumption that faraway events cannot influence each other faster than the speed of light (\emph{locality}), with the assumption that properties of objects have a definite, real, value even if they are not measured (\emph{realism}).

The argument goes as follows. 
If we assume that the physical properties $x,y$ of the two particles (the polarization of the photons) have definite values
$\tilde{x},\tilde{y}$ that exist independently of observation, then the measurement on the first qubit
must influence the result of the measurement on the second qubit. %, meaning that there is an apparently instantaneous connection between the two particles (using Einstein words: a ``spooky action at a distance'' \citep{einstein1935can}).
 Vice versa if we assume \emph{locality}, then $x,y$ cannot exist independently of the observations.
To sum up, a local hidden variable theory that is compatible with QT results cannot exist \citep{bell1964einstein}.

But is there really anything contradictory here? The message we want to convey is that this is not the case.
 %We do not know how the rules of P-coherence are encoded and implemented in Nature. % (perhaps through superluminal signalling or retro-causal communication),  What we know is that there is nothing paradoxical here.
Indeed, since Theorem~\ref{th:fundamental} applies,  $\rho_e$ is not a moment matrix of any probability.
Ergo, although they may seem to be compatible
with probabilities, the marginal  matrices $M_x,M_y$ are not  moment matrices of any probability. 
The conceptual mistake in the situation we are considering is to forget that $M_x,M_y$ come from $\rho_e$. A joint linear operator uniquely defines its marginals but not the other way round. 
There are infinitely many joint probability charges whose $M_x,M_y$  are the marginals, e.g.,
\[
d\mu(x,y)=\left(\frac{1}{2}\delta_{\left\{\begin{bmatrix}
            1\\
            0
           \end{bmatrix}\right\}}(x)+\frac{1}{2}\delta_{\left\{\begin{bmatrix}
            0\\
            1
           \end{bmatrix}\right\}}(x)\right)\left(\frac{1}{2}\delta_{\left\{\begin{bmatrix}
            1\\
            0
           \end{bmatrix}\right\}}(y)+\frac{1}{2}\delta_{\left\{\begin{bmatrix}
            0\\
            1
           \end{bmatrix}\right\}}(y)\right),
\]
but none of them satisfy  Equation~\eqref{eq:momepr}. Instead, the reader can verify that the  charge in Equation~\eqref{eq:momsigned} satisfies
both Equation~\eqref{eq:momepr} and:

\[
\displaystyle{
\int \begin{bmatrix}x_{1} x_1^{\dagger} & x_{1} x_2^{\dagger} \\
 x_{2} x_1^{\dagger} & x_{2} x_2^{\dagger}\end{bmatrix} d\mu(x,y)=\displaystyle{\int \begin{bmatrix}y_{1} y_1^{\dagger} & y_{1} y_2^{\dagger} \\
 y_{2} y_1^{\dagger} & y_{2} y_2^{\dagger}\end{bmatrix} d\mu(x,y)=\left[\begin{matrix}\frac{1}{2} & 0\\ 
0 & \frac{1}{2}\end{matrix}\right].
}
}
\]

% To avoid
The take-away message of this subsection is that we should only interpret $M_x,M_y$  as marginal operators  and keep in mind that QT is a logical theory of P-coherence.
We see paradoxes when we try to force a physical interpretation upon QT, whose nature is instead computational.
In other words, if we accept that computation is more primitive than our classical interpretation of physics, all paradoxes disappear.

% ?spooky action at a distance,? refers to an apparently instantaneous connection between two particles sharing an initially identical quantum state

%%%%%%
%%%%%% BOB AND CLAIRE 
%%%%%% 

\subsection{Entanglement witness theorem}
\label{sec:witness}

In the previous Subsections, we have seen that all paradoxes of QT emerge because of disagreement between its internal coherence and the attempt to force on it a classical coherent interpretation.

Do quantum and classical probability sometimes agree?
Yes they do, but when at play there are density matrices $\rho$ such that Equation \eqref{eq:condepr0} does not hold, and thus in particular for \emph{separable density matrices}.
% Indeed, it turns out that  whenever $\rho$ satisfies Equation \eqref{eq:condepr0},  Alice is rational to both Bob and Claire and therefore one can find a classical explanation (probabilistic mixture of classical evaluation functions) for the results predicted by QT. 
We make this claim precise by providing a link between Equation \eqref{eq:condepr0} and the \emph{entanglement witness theorem}
\citep{horodecki1999reduction,horodecki2009quantum}.

We first report the definition of entanglement witness \citep[Sec.~6.3.1]{heinosaari2011mathematical}:
\begin{definition}[Entanglement witness]
\label{def:entangle}
 A Hermitian operator $W \in \He^{n_1 \times n_2} $ is an \emph{entanglement
witness} if and only if $W$ is not a positive operator but $(x_1 \otimes x_2)^{\dagger} W (x_1 \otimes x_2) \geq 0$ for all 
vectors $(x_1,x_2)\in \pspace_1\times \pspace_2$.\footnote{In  \citep[Sec.~6.3.1]{heinosaari2011mathematical}, the last part of this definition says
``for all factorized vectors $x_1 \otimes x_2$''. This is equivalent to considering the pair $(x_1,x_2)$.} 
%We say that an entangled state $\rho$ is detected by $W$ if
%$tr(\rho W ) < 0$.
\end{definition}

The next well-known result (see, e.g.,  \citep[Theorem~6.39, Corollary~6.40]{heinosaari2011mathematical}) provides a characterisation of entanglement 
and separable states in terms of entanglement
witness.
\begin{proposition}
\label{prop:witn}
 A state $\rho_e$ is entangled if and only if there exists an entanglement
witness $W$ such that $Tr(\rho_e W ) < 0$. A state is separable if and only if $Tr(\rho_e W ) \geq 0$ for all entanglement
witnesses $W$.
\end{proposition}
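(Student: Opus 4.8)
The plan is to prove the statement by a separating-hyperplane (Hahn--Banach) argument, carried out in the real inner-product space $\He^{n\times n}$ of Hermitian operators (with $n=n_1n_2$) equipped with $\langle A,B\rangle=Tr(AB)$, which is real for Hermitian $A,B$. First I would record two preliminary facts. (i) The set $\mathrm{Sep}$ of separable states is a \emph{compact} convex subset of $\He^{n\times n}$: it is the convex hull of the pure product states $(x_1\otimes x_2)(x_1\otimes x_2)^\dagger$ with $x_i$ unit vectors, this generating set being the continuous image of the compact product of unit spheres, and in finite dimension the convex hull of a compact set is again compact. (ii) A Hermitian $W$ satisfies $(x_1\otimes x_2)^\dagger W(x_1\otimes x_2)\geq0$ for all $(x_1,x_2)$ if and only if $Tr(W\sigma)\geq0$ for every $\sigma\in\mathrm{Sep}$; this is immediate from $(x_1\otimes x_2)^\dagger W(x_1\otimes x_2)=Tr\big(W(x_1\otimes x_2)(x_1\otimes x_2)^\dagger\big)$ (Lemma~\ref{lem:TR}) together with linearity and the definition of $\mathrm{Sep}$. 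So $W$ is a witness exactly when $Tr(W\sigma)\geq0$ on $\mathrm{Sep}$ but $W\not\geq0$.

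The easy directions then come first. If $\rho$ is separable, fact (ii) and linearity give $Tr(\rho W)\geq0$ for every witness $W$; this establishes one half of each claim, and by contraposition it also shows that the mere existence of a witness $W$ with $Tr(\rho_e W)<0$ forces $\rho_e$ to be entangled.

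For the hard direction---entangled implies a witness exists---I would separate. If $\rho_e$ is entangled then $\rho_e\notin\mathrm{Sep}$, and since $\mathrm{Sep}$ is closed and convex (fact (i)), strict separation yields a Hermitian $A$ and $c\in\Reals$ with $Tr(A\rho_e)<c\leq Tr(A\sigma)$ for all $\sigma\in\mathrm{Sep}$. I would set $W:=A-cI$ and exploit that every state has unit trace: for separable $\sigma$ one gets $Tr(W\sigma)=Tr(A\sigma)-c\geq0$, so $W$ passes test (ii), whereas $Tr(W\rho_e)=Tr(A\rho_e)-c<0$. Finally $W\not\geq0$, since $W\geq0$ together with $\rho_e\geq0$ would give $Tr(W\rho_e)\geq0$, contradicting the strict inequality just obtained. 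Hence $W$ is an entanglement witness with $Tr(\rho_e W)<0$, which completes the first statement; the second statement is then the conjunction of the easy direction (separable $\Rightarrow$ all witnesses nonnegative) with its contrapositive supplied by the first statement.

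The main obstacle is the hard direction, and within it the only genuinely delicate point is securing a \emph{strict} separation of $\rho_e$ from $\mathrm{Sep}$, which is precisely why compactness (not merely convexity) of $\mathrm{Sep}$ is needed; everything else is bookkeeping with the unit-trace normalization. It is worth noting that, in the language of the paper, such a $W$ is exactly a gamble $(x_1\otimes x_2)^\dagger W(x_1\otimes x_2)\in\nonnegative\setminus\bnonnegative$---nonnegative as a function but with indefinite matrix---and $Tr(\rho_e W)<0$ says the state assigns it a negative value, identifying $W$ with the negative-but-not-P-negative gamble predicted by Theorem~\ref{th:fundamental}.
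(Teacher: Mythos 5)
Your proof is correct, and it is the standard argument: the paper does not prove Proposition~\ref{prop:witn} at all, citing it as well known from \citep[Theorem~6.39, Corollary~6.40]{heinosaari2011mathematical}, and the cited source establishes it by exactly the separating-hyperplane construction you give (strict separation of $\rho_e$ from the compact convex set of separable states, followed by the shift $W=A-cI$ using unit trace). One small refinement: strict separation of a point from a convex set in finite dimension needs only that the set be \emph{closed}; compactness of $\mathrm{Sep}$ is simply the convenient way to obtain closedness, since the convex hull of a merely closed generating set need not be closed, so your emphasis is right in substance if slightly overstated in form. Your closing remark correctly ties $W$ to the negative-but-not-P-negative gambles of Theorem~\ref{th:fundamental}, which is precisely the reinterpretation the paper develops in Section~\ref{sec:witness}.
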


% This means that there exists a gamble  $(x_1 \otimes x_2)^{\dagger} W (x_1 \otimes x_2)$ that is non-negative and, therefore, desirable
% in $\theory_R$ that is not desirable in $\theory^*$. It hol

Assume that  $W$ is an entanglement witness for  $\rho_e$ and consider $W'=-W$. By Definition \ref{def:entangle} and Proposition \ref{prop:witn}, it follows
that 
\begin{equation}
\label{eq:witn}
Tr(\rho_e W' ) > 0 \text{ and } (x_1 \otimes x_2)^{\dagger} W' (x_1 \otimes x_2) \leq 0. 
\end{equation}

The first inequality states that the  gamble $(x_1 \otimes x_2)^{\dagger} W' (x_1 \otimes x_2)$ is \emph{strictly} desirable for Alice (in theory $\btheory$) given her belief $\rho_e$.
Since the set of desirable gambles (\ref{eq:b1}) associated to $\rho_e$ is closed, there exists $\epsilon>0$ such that $W''=W'-\epsilon I$ is still desirable, i.e,
$Tr(\rho_e W'' )\geq0$ and 
$$
(x_1 \otimes x_2)^{\dagger} W'' (x_1 \otimes x_2) = (x_1 \otimes x_2)^{\dagger} W' (x_1 \otimes x_2) - \epsilon<0,
$$
where we have exploited that $(x_1 \otimes x_2)^{\dagger} \epsilon I (x_1 \otimes x_2)= \epsilon$.
Therefore, \eqref{eq:witn} is equivalent to
\begin{equation}
\label{eq:witn1}
Tr(\rho_e W'' ) \geq 0 \text{ and } (x_1 \otimes x_2)^{\dagger} W'' (x_1 \otimes x_2) <0,
\end{equation}
which is the same as \eqref{eq:condepr0}.

Hence, by Theorem~\ref{th:fundamental}, we can equivalently formulate the entanglement witness theorem as an arbitrage/Dutch book:
\begin{theorem}
 Let $\bdomain=\{g(x_1,\dots,x_m)=(\otimes_{j=1}^m x_j)^\dagger G (\otimes_{j=1}^m x_j)\mid Tr(G\tilde{\rho})\geq 0\}$ be the set of desirable gambles corresponding to some density matrix $\tilde{\rho}$. The following claims are equivalent:
 \begin{enumerate}
\item $\tilde{\rho}$ is entangled;
\item $\posi(\bdomain \cup \gambles^{\geq})$ is not coherent in $\theory$.
%\item $\bnonnegative \subsetneq \nonnegative_R$
\end{enumerate}
\end{theorem}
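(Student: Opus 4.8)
The plan is to route everything through the dual description of coherence in $\theory$. By Theorem~\ref{prop:dualcharges0}, the cone $\posi(\bdomain\cup\gambles^{\geq})$ is coherent in $\theory$ if and only if its associated set of states (probability charges) $\mathcal{P}$ is non-empty. So I would reduce the claimed equivalence to showing that $\mathcal{P}\neq\emptyset$ exactly when $\tilde\rho$ is \emph{separable}, i.e.\ not entangled. The bridge I would use is: $\mathcal{P}\neq\emptyset$ iff $\tilde\rho$ is the (truncated) moment matrix of a genuine nonnegative probability charge, and the latter holds iff $\tilde\rho$ is separable.

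Before this, I would record the easy fact that $\bdomain$ is P-coherent, so the machinery applies: if some $g_G\in\bdomain$ were P-negative then $G<0$, whence $Tr(G\tilde\rho)<0$ (as $\tilde\rho\geq0$, $Tr(\tilde\rho)=1$), contradicting $g_G\in\bdomain$; thus $\bdomain\cap\bnegative=\emptyset$. Next comes the core computation of $\mathcal{P}$. A probability charge $\mu$ lies in $\mathcal{P}$ iff $\int g_G\,d\mu\geq0$ for every $g_G\in\bdomain$; setting $\rho_\mu:=\int(\otimes_{j=1}^m x_j)(\otimes_{j=1}^m x_j)^\dagger\,d\mu$ and invoking Proposition~\ref{prop:LisTR}, this is the requirement that $Tr(G\rho_\mu)\geq0$ whenever $Tr(G\tilde\rho)\geq0$. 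Since $\{G\in\He^{n\times n}: Tr(G\tilde\rho)\geq0\}$ is a half-space through the origin with normal $\tilde\rho$, its dual cone is the ray $\{c\,\tilde\rho: c\geq0\}$, forcing $\rho_\mu=c\,\tilde\rho$; the normalization $\int\mathbbm{1}_R\,d\mu=Tr(\rho_\mu)=1$ then pins $c=1$, i.e.\ $\rho_\mu=\tilde\rho$. Hence $\mathcal{P}\neq\emptyset$ iff some nonnegative probability charge has moment matrix exactly $\tilde\rho$.

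It remains to identify this last property with separability. For one direction, a charge $\mu$ with $\rho_\mu=\tilde\rho$ writes $\tilde\rho$ as an average of the pure product states $(\otimes_j x_j)(\otimes_j x_j)^\dagger$, placing it in the closed convex hull of product states, i.e.\ making it separable. Conversely, given a separable decomposition $\tilde\rho=\sum_k p_k\,\sigma_k^{(1)}\otimes\cdots\otimes\sigma_k^{(m)}$, I would represent each single-particle factor $\sigma_k^{(j)}$ as a moment matrix of a probability on $\overline{\complex}^{n_j}$ (exactly as in Section~\ref{sec:momentmat}), take the corresponding product measure on each term and mix with weights $p_k$, and use the identity $(x x^\dagger)\otimes(y y^\dagger)=(x\otimes y)(x\otimes y)^\dagger$ to recognize the resulting moment matrix as $\tilde\rho$.

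Chaining the equivalences gives: $\posi(\bdomain\cup\gambles^{\geq})$ coherent $\iff\mathcal{P}\neq\emptyset\iff\tilde\rho$ separable, hence incoherent $\iff\tilde\rho$ entangled, which is the statement. I would also note that the implication (1)$\Rightarrow$(2) can be read off directly from the witness construction preceding the theorem: an entanglement witness $W$ for $\tilde\rho$ (Proposition~\ref{prop:witn}, Definition~\ref{def:entangle}) yields, after the $\epsilon$-shift to $W''$, a gamble $g_{W''}\in\bdomain$ that is strictly negative everywhere, i.e.\ a negative gamble not in $\bnegative$; the equivalence of items~1 and~2 in Theorem~\ref{th:fundamental} then delivers incoherence, recovering exactly condition~\eqref{eq:condepr0}. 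The main obstacle I anticipate is the ``moment matrix $\iff$ separable'' equivalence rather than the duality bookkeeping: one must check that the compatible charge is genuinely nonnegative (which is automatic here because $\gambles^{\geq}$ is included in the cone) and handle the passage between finitely additive charges and the convex-hull characterization of separable states via closedness; the half-space duality step, while short, is also where all the specific structure of $\bdomain=\{g_G: Tr(G\tilde\rho)\geq0\}$ is actually used.
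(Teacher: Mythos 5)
Your proposal is correct, but it takes a genuinely different route from the paper's. The paper proves this theorem by importing the entanglement witness theorem (Proposition~\ref{prop:witn}): an entangled $\tilde\rho$ admits a witness $W$; the sign flip $W'=-W$ and the $\epsilon$-shift $W''=W'-\epsilon I$ turn the non-strict inequality $(x_1\otimes x_2)^{\dagger}W'(x_1\otimes x_2)\leq 0$ into the strict condition~\eqref{eq:condepr0}, exhibiting a negative gamble in $\bdomain$ that is not in $\bnegative$ (by P-coherence of $\bdomain$, which you rightly verify and the paper leaves implicit); Theorem~\ref{th:fundamental} (items 1 and 2) then gives incoherence, and the separable direction follows from the other half of Proposition~\ref{prop:witn}. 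You instead bypass Proposition~\ref{prop:witn} entirely: exploiting the half-space structure of $\bdomain=\{g_G: Tr(G\tilde\rho)\geq 0\}$ and the elementary dual-cone fact that forces $\rho_\mu=c\,\tilde\rho$ with $c=1$ after normalisation, you identify $\mathcal{P}$ exactly as the set of probability charges whose truncated moment matrix is $\tilde\rho$, so that separability enters at the level of its definition (convex hull of pure product states) rather than through its witness characterisation --- in effect you re-derive the content of the witness theorem via one hyperplane-separation argument, which is also what is needed to execute the barycenter step for finitely additive charges (you flag this but do not carry it out; separating $\rho_\mu$ from the compact convex hull of product states by a Hermitian $G$ and integrating the pointwise-nonnegative gamble $g_G-\alpha\mathbbm{1}$ closes it in a few lines, and this is the only real gap). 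What each approach buys: the paper's is shorter and yields the betting/arbitrage reading of witnesses; yours is self-contained, ties the theorem concretely to the moment-matrix picture of Section~\ref{sec:momentmat}, and avoids the boundary bookkeeping in translating between Definition~\ref{def:entangle} (``$W$ not a positive operator'', non-strict positivity on products) and the strict inequalities of~\eqref{eq:condepr0}. Two small points to tighten: the equivalence ``coherent in $\theory$ iff $\mathcal{P}\neq\emptyset$'' is not literally Theorem~\ref{prop:dualcharges0} (a bijection statement for coherent cones) but is supplied by Theorem~\ref{cor:noncoherent}, or equivalently by items 2 and 4 of Theorem~\ref{th:fundamental}; and since $\posi(\bdomain\cup\gambles^{\geq})$ need not be closed, you should invoke Proposition~\ref{prop:coco1} to identify its coherence with that of its natural extension, whose dual is the same $\mathcal{P}$.
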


This result provides another view of the entanglement witness theorem in light of P-coherence. In particular, it tells us that the existence of a witness satisfying Equation~\eqref{eq:witn} boils down to the disagreement between the classical probabilistic interpretation and the theory $\btheory$ on the rationality (coherence) of Alice, and therefore that whenever they agree on her rationality it means that $\rho_e$ is separable. 

This connection explains why the problem of characterising entanglement  is hard in QT: it amounts to proving the negativity of a function, which is NP-hard.
% so-called  Doherty-Parrilo-Spedalieri hierarchy  \cite{doherty2002distinguishing,doherty2004complete},  which  is  extensively  employed  in  quantum  information to 
% The work \cite{doherty2002distinguishing,doherty2004complete}  realized that the set of separable
% quantum  states  can  be  approximated  by  sum-of-squares  hierarchies  \cite{parrilo2003semidefinite,lasserre2009moments}. 
% This  leads  to  the  semidefinite  programming (SDP) hierarchy  of  Doherty-Parrilo-Spedalieri
% (DPS),  which  is  extensively  employed  in  quantum  information.
% This is a non-exhaustive list  \cite{landau1988empirical, wehner2006tsirelson,doherty2008quantum,navascues2008convergent,pironio2010convergent,bamps2015sum,barak2017quantum} of  other applications of the use SDP to characterize ``quantum vs. classical correlation''.

%%%%%%%%%
%%%%%%%%%
%%SECTION: NOT ONLY IN QT
%%%%%%%%%
%%%%%%%%%

\section{Entangled states do not only exist in QT}\label{sec:ent_not_only}
In this Section we are going to present an example of entanglement in a P-coherence theory of probability that is different from QT.

Consider two  real variables $x=[x_1,x_2]^T$ with $x_i \in \reals$,  the possibility space $\Omega=\{x\in\reals^2\}$ and the vector space
of gambles
$$
\gambles_R=\{v^T(x)Gv(x): G \text{ is a symmetric real matrix}\},
$$
where $v(x)$ is the column vector of monomials:
$$
v(x)=[1,x_1,x_2,x_1^2,x_1 x_2,x_2^2, x_1^3,x_1^2x_2,x_1x_2^2,x_2^3]^T,
$$
whose dimension is $d=10$. $\gambles_R$ is therefore the space of all polynomials of degree $6$ of the real variables $x_1,x_2$.
It can be observed that in $\gambles_R$ the constant functions can be represented as
$$
v^T(x)Gv(x)=v^T(x)(ce_1e_1^T)v(x)=c,
$$
for any constant $c\in \reals$, $e_1$ being the first element of the canonical basis of $\reals^{d}$, i.e., $e_1=[1,0,\dots,0]$.
Compare this with \eqref{eq:constntQM}.

We say that $\bdomain$  is a P-coherent set of desirable gambles whenever it satisfies \eqref{eq:btaut}--\eqref{eq:b2} in Section \ref{sec:polycoher} with 
\begin{align}
 \Sigma^{\geq}&=\{g(x_1,x_2)=v^T(x)Gv(x): G\geq0\},\\
 \Sigma^{<}&=\{g(x_1,x_2)=v^T(x)Gv(x): G<0\}.
\end{align}
The polynomial $v^T(x)Gv(x)$ with $G\geq0$ are called \textit{sum-of-square polynomials}, see also Section \ref{sec:sos}.

In this case too we can define the dual operator
\begin{equation}
\label{eq:linearoperatorx1x2}
Z:=L_B\left(v(x)v^T(x)\right).
\end{equation}
If we define $z_{\alpha\beta}=L(x_1^\alpha x_2^\beta)$, then $Z$  has the following structure
\begin{align}
\label{eq:dualMatZ}
Z=\begin{bsmallmatrix}
 z_{00} &z_{10} &z_{01} &z_{20} &z_{11} &z_{02} &z_{30} &z_{21} &z_{12} &z_{03}\\
z_{10} &z_{20} &z_{11} &z_{30} &z_{21} &z_{12} &z_{40} &z_{31} &z_{22} &z_{13}\\
z_{01} &z_{11} &z_{02} &z_{21} &z_{12} &z_{03} &z_{31} &z_{22} &z_{13} &z_{04}\\
z_{20} &z_{30} &z_{21} &z_{40} &z_{31} &z_{22} &z_{50} &z_{41} &z_{32} &z_{23}\\
z_{11} &z_{21} &z_{12} &z_{31} &z_{22} &z_{13} &z_{41} &z_{32} &z_{23} &z_{14}\\
z_{02} &z_{12} &z_{03} &z_{22} &z_{13} &z_{04} &z_{32} &z_{23} &z_{14} &z_{05}\\
z_{30} &z_{40} &z_{31} &z_{50} &z_{41} &z_{32} &z_{60} &z_{51} &z_{42} &z_{33}\\
z_{21} &z_{31} &z_{22} &z_{41} &z_{32} &z_{23} &z_{51} &z_{42} &z_{33} &z_{24}\\
z_{12} &z_{22} &z_{13} &z_{32} &z_{23} &z_{14} &z_{42} &z_{33} &z_{24} &z_{15}\\
z_{03} &z_{13} &z_{04} &z_{23} &z_{14} &z_{05} &z_{33} &z_{24} &z_{15} &z_{06}\\
\end{bsmallmatrix}
\end{align}
It is not difficult to verify that, given $X=v(x)v^T(x)$, in this case Lemma \ref{lem:uniqueX} does not hold. This is actually the reason why the dual matrix $Z$ has additional
symmetries. 

Now, the dual of $\bdomain$ can be calculated as in Section \ref{sec:dualQM}:%, and in this case is
\begin{equation}
\label{eq:dualM1sos}
\begin{aligned}
\mathcal{Q}&=\left\{{z} \in \reals^{{d}}:  L_B(g)=Tr({G} Z)\geq0, ~Z\geq 0,~ z_{00}=1, ~~\forall g \in \bdomain\right\}.
\end{aligned}
\end{equation} 
Note that, because the way constants are represented in this context, we have $z_{00}=1$ instead of $Tr(Z)=1$.

Since sum-of-squares implies nonnegativity, a natural question is to known whether any nonnegative polynomial
of degree $6$ of the real variables $x_1,x_2$ can be expressed
as a sum of squares. 
It turns out that this is not the case. Indeed, David Hilbert showed that equality between the
set of nonnegative polynomials of $n$ variables of degree $2d$ and sos polynomials of $n$ variables of degree $2d$ holds only in the
following three cases: univariate polynomials (i.e., $n = 1$); quadratic polynomials ($2d = 2$); bivariate quartics ($n = 2$, $2d = 4$).
For all other cases, there always exist nonnegative polynomials that are not sums
of squares. 
The most famous counter-example is a polynomial due to Motzkin:
$$
m(x)=x_1^4x_2^2+x_1^2x_2^4-x_1^2x_2^2+1.
$$

Motzkin polynomial is nonnegative but it is not a sum-of-squares.
In what follow, it will be used to prove that a theory $\theory^*$ in the space of polynomials
of two real variables of degree $6$  has ``entangled density matrices'' in its dual space.

Consider the following PSD matrix \cite{Benavoli2017b} of type \eqref{eq:dualMatZ}
\begin{align}
\label{eq:darioMat}
Z_e=\begin{footnotesize}\begin{bsmallmatrix} 1 & 0 & 0 & 353 & 0 & 353 & 0 & 0 & 0 & 0\\ 0 & 353 & 0 & 0 & 0 & 0 & 249572 & 0 & 66 & 0\\ 0 & 0 & 353 & 0 & 0 & 0 & 0 & 66 & 0 & 249572\\ 353 & 0 & 0 & 249572 & 0 & 66 & 0 & 0 & 0 & 0\\ 0 & 0 & 0 & 0 & 66 & 0 & 0 & 0 & 0 & 0\\ 353 & 0 & 0 & 66 & 0 & 249572 & 0 & 0 & 0 & 0\\ 0 & 249572 & 0 & 0 & 0 & 0 & 706955894 & 0 & 17 & 0\\ 0 & 0 & 66 & 0 & 0 & 0 & 0 & 17 & 0 & 17\\ 0 & 66 & 0 & 0 & 0 & 0 & 17 & 0 & 17 & 0\\ 0 & 0 & 249572 & 0 & 0 & 0 & 0 & 17 & 0 & 706955894 \end{bsmallmatrix}\end{footnotesize}
\end{align}
and the negative  polynomial $-m(x)$.
Since  $L(f)=-1-z_{42}-z_{24}+z_{22}$ and   $z_{22}=66, z_{24}=z_{42}=17$ in \eqref{eq:darioMat},  we have that $L(f)=31>0$.
Therefore,  the conditions in \eqref{eq:condepr0} are met and Theorem \ref{th:fundamental} holds.
In particular, this means that the polynomial $-m(x)$ is desirable by a subject, Alice, whose beliefs are expressed by $Z_e$ and who is therefore a rational agent in 
$\theory^*$. However,  the polynomial is negative: Alice is irrational in 
$\theory$.

Notice that, if we consider the definition of entanglement in Section \ref{sec:witness},  
$Z_e$ is an entanglement matrix.
Moreover, from \eqref{eq:darioMat} and the definition of $Z$ we can extract the marginal operator for the variables $x_1,x_2$.
\begin{align}
 M_{x_1} &=L\left(\begin{bsmallmatrix}
                1 & x_1 & x_1^2 & x_1^3\\
                x_1 & x_1^2 & x_1^3 & x_1^4\\
                x_1^2 & x_1^3 & x_1^4 & x_1^5\\
                x_1^3 & x_1^4 & x_1^5  & x_1^6\\
               \end{bsmallmatrix}
\right)=
\begin{bsmallmatrix}
 1 & 0 & 353 & 0\\
 0 & 353 & 0& 249572\\
 353 & 0 & 249572 & 0\\
 0 & 249572 & 0 & 706955894
\end{bsmallmatrix},\\
 M_{x_2}&=L\left(\begin{bsmallmatrix}
                1 & x_2 & x_2^2 & x_2^3\\
                x_2 & x_2^2 & x_2^3 & x_2^4\\
                x_2^2 & x_2^3 & x_2^4 & x_2^5\\
                x_2^3 & x_2^4 & x_2^5  & x_2^6\\
               \end{bsmallmatrix}
\right)=
\begin{bsmallmatrix}
 1 & 0 & 353 & 0\\
 0 & 353 & 0& 249572\\
 353 & 0 & 249572\\ & 0\\
 0 & 249572 & 0 & 706955894
\end{bsmallmatrix}.
\end{align}
This time the operation is not equal to partial trace because, again, the structure of the polynomials $v^T(x)Gv(x)$ 
is different from those in QT.  

Since $ M_{x_1}, M_{x_2}$ are both PSD they are valid moment matrices.
Moreover, for single real variables, Hilbert as shown that every nonnegative polynomial is SOS and, therefore,
marginally, the above truncated moment matrices are compatible with classical probability. A subject whose marginal beliefs are  expressed by $ M_{x_1}\geq0, M_{x_2}\geq0$ is always rational in $\theory$.
This is exactly the same situation encountered when showing the impossibility of a local realistic interpretation of quantum mechanics (see Section \ref{sec:local}): the marginals seem compatible with  classical probability but the joint is not, since we can derive a Bell-type inequalities that are violated by  $Z_e$.

Again the conceptual mistake of this reading  is  to forget that $M_x$ and $M_y$ come from $Z_e$ and thus they should only be interpreted  as marginal operators .

%%%%%%%%%
%%%%%%%%%
%%SECTION: DISCUSSIONS
%%%%%%%%%
%%%%%%%%%

\section{Discussions}
\label{sec:discussions}

%%%%%%
%%%%%% FIRST AXIOM
%%%%%% 

\subsection{First postulate of QT from polynomial time complexity}
\label{sec:discussions0}
 The first postulate of QT is  formulated as follows:
 \begin{quote}
  Associated to any isolated physical system is a complex vector space
with inner product (that is, a Hilbert space) known as the state space of the
system. The system is completely described by its density operator, which is a
positive operator $\rho$ with trace one, acting on the state space of the system.
 \end{quote}
 This postulate is usually explained by noticing that the density matrix $\rho$ is the quantum-mechanical analogue to a phase-space probability measure  in classical statistical mechanics.  One of the fundamental questions in QT is why quantum experiments cannot be described via a classical phase-space probability measure.

In this paper, we argued that a possible explanation may be found in the computational resources necessary to address consistency (rationality)
in classical probability. 
% Classical probability theory (classical mechanics) is logically consistent in the sense that a subject that gambles (makes decisions)
% using probability theory can never be done a sure loser (be proven to be inconsistent).
Addressing rationality in the classical case reduces to prove that a real-valued bounded function is nonnegative.
However, the problem of verifying the nonnegativity of a function is undecidable or, when decidable,  is in general NP-hard.

If the universe is ``inherently'' probabilistic and if QT is the theory of physics that describes such probabilistic nature,
QT should be first a computable theory of probability  and, moreover, computationally efficient. 
A way to define a computationally efficient theory of probability is (i) to restrict the space
of functions of interest (observables); (2) to redefine the meaning of  nonnegative/negative.
 From the point of view of a logical theory, these two points mean, firstly to set the appropriate language (the syntax of the theory), and secondly, to define which formulas are the tautologies and which are the contradictions, the remaining apparatus of the theory being structurally the same as before.

The key postulate that separates classical probability and QT is~\ref{eq:btaut}: the computation postulate. Because of~\ref{eq:btaut}, Theorem~\ref{th:fundamental} applies and thus the ``weirdness'' of QT follows: negative probabilities, existence of non-classical evaluation functionals and, therefore, irreconcilability with the classical probabilistic view.

%Notice that we do not  claim the Universe is a computer, simply because it's hard to think about any way to shed light on this question.
%However,  we can ask what kind of ``computer'' it is and what its computational limitations are.
%It is astonishing that, starting from these three Postulates, we can prove the general Theorem \ref{th:fundamental} and so explain 

The formulation of Theorem \ref{th:fundamental} points to the fact that there are
three equivalent ways to provide a theoretical foundation of QT.%, see Figure \ref{fig:equivaways}. 

The \underline{first way} is the one adopted within the \textit{ Quantum Logic} (QL) approach. QL justifies the differences between QT and classical probability  with the premise that, in QT, the Boolean algebra of events is taken over by the ``quantum logic'' of projection operators on a Hilbert space.
This gives raise to a new theory of probability. The QL probabilistic formalism was developed by \cite{birkhoff1936logic} and then formalised by   Mackey's 
eight axioms \cite{mackey2013mathematical}. The key axiom (Axiom VII) defines the algebra of the events and can be seen as a formalisation
of the properties of what we have called ``non classical evaluation functions''.
We want point out that ``non classical evaluation functions'' only exist when looking from the outside of the theory as a way to justify 
how a negative polynomial can assume positive values in a quantum experiment.

% Probably the most important result in QL is Gleason's theorem...

The \underline{second way} is the so called \textit{quasi-probability} (QP) formulations of QT. It is based on the view that the quantum-classical contrast in QT is due to the appearance of negativity. 
In QP, the possibility space (and  the events) is classical but probability distributions are replaced
by quasi-probability distributions. 

One of the critical issues with negative probabilities is their interpretation. 
This is clearly stated in \cite{ferrie2011quasi}
\begin{quote}
 The main difficulty with interpreting negativity in a particular quasi-probability representation as a criterion
for or definition of quantumness is the non-uniqueness of
that particular quasi-probability representation. We can
always find a new representation in which any given state
admits a non-negative quasi-probability representation.
Recall, in fact, that in some representations all states
are non-negative. Thus, negativity of some state $\rho$ in
one particular arbitrary representation is a meaningless
notion of quantumness per se.
\end{quote}
We agree that quasi-probability distributions are a meaningless notion per se.\footnote{Notice that in the theory $\theory^*$, quasi-probability distributions do not exist, the dual space being $\{\rho: \rho\geq0,~~Tr(\rho)=1\}$. }
In our view, their only justification is thorough a notion of coherence (logical consistency), as done with an abstract notion of classical probability theory.

Why is the charge in Box~1 a valid quasi-probability representation?
The reason is that it corresponds to the dual of a maximal P-coherent set of  desirable gambles $\bdomain$.
 A subject (Alice) that uses that  charge to make rational choices on the results of a quantum experiment cannot be made
 a sure loser in the theory $\theory^*$.
 The fact that charges are not unique is neither surprising nor a characteristic of QT.
 In classical probability, it is well known that any truncated moment matrix defines
 a closed convex set of probability distributions.
Similarly, one can check that any ``quasi-moment'' matrix $\rho$ defines
  a closed affine set of quasi-probability distributions.
Such intrinsic non-uniqueness is another amazing characteristic of QT: from the perspective of a probabilistic interpretation, 
  QT is not only a  theory of probability, but a theory
  of imprecise probability.
%  To conclude this part QP, notice that in the theory $\theory^*$, quasi-probability distributions do not exist. They are the Claire's way to explain quantum mechanics results. The dual space is $\{\rho: \rho\geq0,~~Tr(\rho)\}$.

The \underline{third way}, championed in this paper, argues that the quantum-classical contrast has a purely computational character.
%This is our view in this article. 
It starts by accepting the so-called  Exponential Time Hypothesis (P$\neq$NP) to justify the
separation between the microscopic quantum system and the macroscopic world.
We quote Scott Aaronson \cite{aaronson2005guest}
\begin{quote}
 can NP-complete problems
be solved in polynomial time using the resources of the physical universe?
I will argue that studying this question can yield new insights, not just about computer science
but about physics as well. More controversially, I will also argue that a negative answer might
eventually attain the same status as (say) the Second Law of Thermodynamics, or the impossibility
of superluminal signalling. In other words, while experiment will always be the last appeal, the
presumed intractability of NP-complete problems might be taken as a useful constraint in the search
for new physical theories.
\end{quote}
\ref{eq:btaut}  may indeed be the fundamental law in QT, similarly to the   Second Law of Thermodynamics, or the impossibility
of superluminal signalling. %it is purely a statement about computational efficiency.
To the best knowledge of the authors, the present work is the  first that explains QT in terms of computational complexity.

%%%%%%
%%%%%% EP
%%%%%% 

\subsection{The class of P-nonnegative gambles}
\label{sec:HermSOS}
The class of P-nonnegative gambles, defined in Section \ref{sec:polycoher}, is the closed convex cone of all \textit{Hermitian sum-of-squares} in $\gambles_R=\{(\otimes_{j=1}^m x_j)^\dagger G (\otimes_{j=1}^m x_j) \mid G\in \He^{r \times r}\}$, that is of all gambles $ g(x_1,\dots,x_m) \in \gambles_R$ for which $G$ is PSD. In particular this means that 
 Alice can efficiently determine whether a gamble is P-nonnegative or not.  
But is this class the only closed convex cone of nonnegative polynomials in $\gambles_R$ for which the membership problem can be solved efficiently (in polynomial-time)?
It turns out that the answer is negative (see for instance \cite{d2009polynomial,josz2018lasserre}): in addition of \textit{Hermitian sum-of-squares} (the one that Nature has chosen for QT)
one could also consider \textit{real sum-of-squares} in $\gambles_R$, that is polynomials of the form $(\otimes_{j=1}^m x_j)^\dagger G (\otimes_{j=1}^m x_j)$ that are sum-of-squares of polynomials of the real and imaginary part of the variables $ x_j$.

A separating example is the polynomial in \eqref{eq:sosqt}, which is not a \textit{Hermitian sum-of-squares} but it is a \textit{real sum-of-square}, as it can be seen from  \eqref{eq:realim}. This polynomial was used in our example because it can be constructed by inspection and its nonnegativity
follows immediately by \eqref{eq:realim}.  Clearly, there exist nonnegative polynomials in $\gambles_R$ that are neither Hermitian sum-of-squares nor real sum-of-squares.

Why has Nature chosen Hermitian sum-of-squares? 
This is an open question that we will investigate in future work. A possible explanation may reside
in the different size of the corresponding optimisation problems \cite{josz2018lasserre}.

% 
% 
% 
% 
% \paragraph{Epistemic or ontic}
% In our view, there is only a consistent interpretation of probability that is the epistemic's one.
% This allows 

%%%%%%
%%%%%% HIDDEN VARIABLE MODEL
%%%%%% 

\subsection{A hidden variable model for a single quantum system}
\label{sec:hidden1d}
In \cite{kochen1968problem},  Kochen and Specker gave a  hidden variable model agreeing with the Born's rule but not preserving the structure of functional dependencies in QT.
Their idea amounted to introducing a ``hidden variable'' for each observable $H$ producing stochasticity in outcomes
of measurement of $H$. The totality of all such hidden variables is then
the phase space variable $\omega$ of the model.

It turns out that, for a single quantum system, our  model based on the phase space
$$
\Omega=\{x\in \complex^n: x^{\dagger}x=1\},
$$
is also a hidden variable model. The reason being that, for a single quantum system, $\Sigma^{\geq}=\gambles_R^{\geq}$
and $\Sigma^{<}=\gambles_R^{<}$, and therefore Alice will never accepts negative gambles. Notice that in this case, the matrix $\rho=L(xx^{\dagger})$ can be interpreted as a moment matrix as discussed in Section \ref{sec:momentmat}.
This hidden-variable model for QT is also discussed in \cite[Sec.~1.7]{holevo2011probabilistic}, where the author  explains
why this model does not contradict the existing ``no-go'' theorems for hidden-variables.
Our hidden variable model differs from Holevo's one when we consider $m>1$ particles.
This is discussed in the next section.

%%%%%%
%%%%%% TENSOR PRODUCT
%%%%%% 

\subsection{On the use of tensor product}
\label{sec:tensorproduct}

We have seen in Section \ref{sub:space} that the possibility space of composite systems of $m$ particles, each one with $n_j$ degrees of freedom, is given by 
$\pspace=\prod_{j=1}^m \overline{\complex}^{n_j}$, and that gambles (real-valued observable) on such space  are actually bounded real functions $g(x_1,\dots,x_m)=(\otimes_{j=1}^m x_j)^\dagger G (\otimes_{j=1}^m x_j)$,  where $\otimes$ denotes the tensor product between vectors, seen as column matrices.

In what follows, we justify the use of tensor product, and more specifically the type of gambles on the possibility space of composite systems, as a consequence of the way a multivariate theory of probability is usually formulated. 

As a start, let us consider the case of 
classical probability. There,  structural judgements of independence/dependence between random variables are expressed thorough product
between random variables. Indeed, given  factorised gambles $g(x_1,\dots,x_m)=\prod_{j=1}^m g_j(x_j)$, the variables $x_1,\dots,x_m$ are said to be independent
if $E[\prod_{j=1}^m g_j(x_j)]=\prod_{j=1}^m E[g_j(x_j)]$ for all $g_j$, where $E[\cdot]$ denote the expectation operator.

With this in mind, let us go back to our setting. Marginal gambles are of type $g_j(x_j)=x_j^{\dagger}G_j x_j$. This means that structural judgements are performed by considering factorised gambles of the form $\prod_{j=1}^m x_j^{\dagger}G_j x_j$. As mention in Section \ref{sub:space}, it is thence not difficult to verify that
$$
\prod_{j=1}^m x_j^{\dagger}G_j x_j=(\otimes_{j=1}^m x_j)^\dagger (\otimes_{j=1}^m G_j) (\otimes_{j=1}^m x_j).
$$
By closing the set of factorised gambles  under the operations of addition and scalar (real number) multiplication, 
one finally gets a vector space whose domain domain coincides with the collection of all gambles of the form $(\otimes_{j=1}^m x_j)^\dagger G (\otimes_{j=1}^m x_j)$.
Hence, structural judgements of independence/dependence are stated by Alice considering the desirability
of gambles belonging to $\gambles_R$.

%%%%%%%%%%%%%%%%%%%%%%%%%%%%%%%%%%%%%%%%%%%%%%%%%%%%%%%%%%%%%%%%%%%%%%%%%%%%%%%%%%%%%%
%%%%%%%%%%%%%%%%%%%%%%%%%%%%%%%%%%%%%%%%%%%%%%%%%%%%%%%%%%%%%%%%%%%%%%%%%%%%%%%%%%%%%%
%%%%%%%%%%%%%%%%%%%%%%%%%%%%%%%%%%%%%%%%%%%%%%%%%%%%%%%%%%%%%%%%%%%%%%%%%%%%%%%%%%%%%%
% \begin{center}
% \textcolor{red}{
% \huge{Parte a seguire della sottosezione da riscrivere}
% }
% \end{center}
%\clearpage
%%%%%%%%%%%%%%%%%%%%%%%%%%%%%%%%%%%%%%%%%%%%%%%%%%%%%%%%%%%%%%%%%%%%%%%%%%%%%%%%%%%%%%
%%%%%%%%%%%%%%%%%%%%%%%%%%%%%%%%%%%%%%%%%%%%%%%%%%%%%%%%%%%%%%%%%%%%%%%%%%%%%%%%%%%%%%
%%%%%%%%%%%%%%%%%%%%%%%%%%%%%%%%%%%%%%%%%%%%%%%%%%%%%%%%%%%%%%%%%%%%%%%%%%%%%%%%%%%%%%

In QT, there is   a confusion over  the  role of the tensor product used to define the state space.
This confusion is a consequence of Theorem~\ref{th:fundamental}.
By duality, we can prove that, given two quantum systems $A$ and $B$, with corresponding Hilbert spaces $\mathcal{H}_A$
and $\mathcal{H}_B$, the density matrix (state) of the joint system  lives in the tensor product space $\mathcal{H}_A\otimes \mathcal{H}_B$.
This follows by:
$$
\begin{aligned}
L((\otimes_{j=1}^2 x_j)^\dagger G (\otimes_{j=1}^2 x_j))&=L(Tr(G(\otimes_{j=1}^2 x_j)(\otimes_{j=1}^2 x_j)^\dagger))\\
&=Tr(G\; L((\otimes_{j=1}^2 x_j)(\otimes_{j=1}^2 x_j)^\dagger)).
\end{aligned}
$$
and $L((\otimes_{j=1}^2 x_j)(\otimes_{j=1}^2 x_j)^\dagger)$ belongs to $\mathcal{H}_A\otimes \mathcal{H}_B$.

However, when \eqref{eq:condepr0} holds, we may justify  entanglement hypothesising  the existence of non classical evaluation functions
or, equivalently, a larger possibility space (Theorem \ref{th:fundamental}). This is clearly discussed in \cite[Supplement 3.4]{holevo2011probabilistic}:
\begin{quote}
 Since the set of pure states of the composite system $\Omega$ is larger
than Cartesian product $\Omega_A\times \Omega_B$, the phase space of the classical description of the
composite system will be larger than the product of phase spaces for the
components: $\Omega_A\times \Omega_B \varsubsetneq \Omega$. Therefore this classical description is not a
correspondence between the categories of classical and quantum system
preserving the operation of forming the composite systems.
Moreover, it appears that there is no way to establish such a correspondence. In any classical description of a composite quantum system
the variables corresponding to observables of the components are necessarily entangled in the way unusual for classical subsystems.
\end{quote}
We argue that $\Omega_A\times \Omega_B \varsubsetneq \Omega$ is a manifestation of computational rationality.

\subsection{Deriving the remaining postulates of QM}\label{subsec:otherax}

Partially inspired by \cite{pitowsky2003betting}, in \cite{benavoli2016quantum} we introduced a syntactically different framework from which we were able to derive the postulates of QT. It is roughly defined as follows. 
Assume the space of outcomes of experiment on a $n$-dimensional quantum systems is represented by the set $\Omega=\{\omega_1,\dots,\omega_n\}$,  with $\omega_i$ denoting the elementary event ``detection along $i$''. 
A gamble on an experiment 
is a Hermitian matrix $G \in \He^{n\times n}$. 
By accepting  a gamble $G$, Alice commits herself to receive  $\gamma_{i}\in \reals$ utiles  if the outcome of the experiment eventually happens to be 
$\omega_i$, where $\gamma_{i}$ is defined from $G$ and the projection-valued measurement $\Pi^{*}=\{\Pi_i^*\}_{i=1}^n$, representing the $n$ orthogonal directions of the quantum state, as follows:
 \begin{equation}
  \Pi^{*}_{i}G\Pi^{*}_{i}=\gamma_{i}\Pi^{*}_{i} \text{ for } i=1,\dots,n.
 \end{equation} 

A subset $ \He^{n\times n}$ is thus said to be coherent if it is a  closed convex cone $\mathscr{C}$ containing the set  $\{G\in\He^{n\times n}:G\gneq0\}$ of all PSD matrices in $\He^{n\times n}$ and disjoint from the interior of $\{G\in\He^{n\times n}:G \leq 0\}$.
We then proved that the dual of a coherent  convex cone of matrix gambles $\mathscr{C}$  is a closed convex set of density matrices
\cite[Prop.IV.3]{benavoli2016quantum}:
\begin{equation}
\label{eq:dualrhoo}
\begin{aligned}
\mathcal{Q}%&=\left\{z \in \complex^{{d}}:  L_B(g)=Tr({G} Z)\geq0, ~Z\geq 0,~Tr(Z)=1, ~~\forall g \in \bdomain\right\},\\
&=\left\{ \rho \in  \He^{n\times n}: \rho\geq0, Tr(\rho)=1, Tr({G} \rho)\geq0,  ~~\forall G  \in \mathscr{C}\right\}.
\end{aligned}
\end{equation} 
When  $\mathscr{C}$ is a maximal cone,  its dual $\mathcal{Q}:=\mathscr{C}^{\bullet}$ includes a single density matrix.
This allowed us to show that QT is a generalised theory of probability: its axiomatic foundation can be derived from a logical consistency requirement in the way a subject accepts gambles on the results of a quantum experiment (similar to the axiomatic foundation of classical probability).
However, at the time of writing \cite{benavoli2016quantum}, it was not clear to us why the probability is generalised in such a way in QT, what this could possibly mean, why does entanglement exist, etc.
That is, it was not clear to us that everything follows by the computation postulate \ref{eq:btaut}.

Below, by providing a bijection between coherent  convex cones of matrices and  P-coherent systems of polynomial gambles,
we show the connection between \cite{benavoli2016quantum}  and the present work.
In order to that, we first need to replace \cite[Definition III.1,(S3)]{benavoli2016quantum}, that is the \textit{openness} property, with (S3'): 
if $G+\epsilon I \in \mathscr{C}$ for all $\epsilon>0$ then  $G$ is in $ \mathscr{C}$ (\textit{closeness}). 
Hence, the bijection $\mathfrak{f}$ between $\mathscr{C}$ and $\bdomain$ is obtained  by first noticing the correspondence between  the duals  \eqref{eq:dualM1} and \eqref{eq:dualrhoo}, and thus simply composing
the duality maps from $\bdomain$ to $\mathcal{Q}$ and from $\mathcal{Q}$ to  $\mathscr{C}$: 

{\footnotesize
\[
\begin{tikzcd}[column sep=large, row sep=large]
\mathcal{Q}   \arrow{rd}{} 
  &  \arrow{l}{} \overbrace{\left\{\sum_{i=1}^{|\mathcal{G}|}\lambda_i (\otimes_{j=1}^m x_j)^{\dagger}G_i(\otimes_{j=1}^m x_j)+ (\otimes_{j=1}^m x_j)^{\dagger}M (\otimes_{j=1}^m x_j)~: ~~\lambda_i\geq0, M\geq0 \right\}  }^{\bdomain=posi(\mathcal{G}\cup\Sigma^{\geq})}\arrow{d}{\mathfrak{f}} \\
    &  \underbrace{\left\{\sum_{i=1}^{|\mathcal{G}|}\lambda_i G_i + M~:~ ~\lambda_i\geq0, M\geq0\right\}}_{\mathscr{C}}
\end{tikzcd}
\]}
%and vice versa (reversed arrows).

% are structurally equivalent. More precisely, the correspondence $\mathfrak{f}: \wp{(\gambles_R)} \to \wp{(\He^{n\times n})}$ induced by the map $g(x) \mapsto G$, with $g(x)=x^\dagger G x$, is obviously a bijection between the class of P-coherent sets of quadratic forms in $\gambles_R$ and the class of coherent sets of Hermitian matrices in $\He^{n\times n}$.

Based on $\mathfrak{f}$ and the results in \cite{benavoli2016quantum}, it is therefore almost\footnote{We need to take into account that (S3) has been replaced by (S3').} immediate to derive from \ref{eq:btaut}--\ref{eq:b2} the remaining other axioms and rules of QT (such as L\"{u}der's rule (measurement updating), Schr\"odinger's rule (time evolution)), see Appendix \ref{app:otherax}.

% \paragraph{Complexity results and Quantum computing}

%%%%%%
%%%%%% SOS
%%%%%% 

\subsection{Sum-of-squares optimisation}
\label{sec:sos}
The theory of moments (and its dual theory of positive polynomials) are used to develop efficient numerical
schemes for polynomial optimization, i.e., global optimization problems with polynomial function. 
Such problems arise  in the analysis and control of nonlinear dynamical systems, and also in other areas such as combinatorial optimization.
This scheme consists of a hierarchy of semidefinite
programs (SDP) of increasing size which define tighter and tighter relaxations of the original
problem. Under some assumptions, it can be showed that the associated sequence of optimal values converges to the global minimum, see for instance \cite{parrilo2003semidefinite,lasserre2009moments}.
Note that, every polynomial in 
$$
\Sigma^{\geq}\coloneqq\{g(x_1,\dots,x_m)=(\otimes_{j=1}^m x_j)^\dagger G (\otimes_{j=1}^m x_j): G\geq0\},
$$
is (Hermitian) sum-of-squares because it can be rewritten as:
$$
(\otimes_{j=1}^m x_j)^\dagger H H^{\dagger} (\otimes_{j=1}^m x_j)=\sum_{i=1}^k |(H^{\dagger} (\otimes_{j=1}^m x_j))_i|^2,
$$
with $G= HH^{\dagger}$.

We have recently discussed the connection between SOS optimisation (for polynomials of real variables) and computational rationality (also called bounded rationality) in \cite{Benavoli2019b}.

In QT, SDP has been used to numerically prove that a certain state is entangled  \cite{landau1988empirical,doherty2002distinguishing,doherty2004complete, wehner2006tsirelson,doherty2008quantum,navascues2008convergent,pironio2010convergent,bamps2015sum,barak2017quantum}. 
The work \cite{doherty2002distinguishing,doherty2004complete}  realized that the set of separable
quantum  states  can  be  approximated  by  sum-of-squares  hierarchies. 
This  leads  to  the  SDP hierarchy  of  Doherty-Parrilo-Spedalieri,  which  is  extensively  employed  in  quantum  information.

The present, purely foundational, work differs from these approaches by stating that the Universe (microscopic world)  is nothing but a big ``device'' that solves SOS
optimisation problems. In fact, as discussed in Section \ref{sec:discussions0},  the postulate of computational efficiency embodied by~\ref{eq:btaut} (through the above definition of $\Sigma^{\geq}$, i.e., the cone of Hermitian sum-of-squares polynomials) may indeed be the fundamental law in QT.
This totally different perspective may have an enormous impact in quantum computer, such as the development of new algorithms for quantum computing that exploit
the connection between QT and SDP highlighted in this paper.

\begin{appendix}
\section{Technicalities for Subsection \ref{subsec:dual}}

Recall that the dual $\domain^\bullet$ of $\domain$  is given by 
$\left\{\mu\in \mathcal{M}: \int gd\mu \geq0, ~\forall g \in \domain \right\}$.
Similarly, the dual of a subset $\mathcal{R}$ of $\mathcal{M}$ is the set:\\
\begin{equation}
\mathcal{R}^\bullet=\left\{g\in \mathcal{L}: \int gd\mu \geq0, ~\forall \mu \in \mathcal{R}\right\}.
\end{equation}

Note that in both cases $(\cdot)^\bullet$ is always a closed convex cone \citep[Lem.5.102(4)]{aliprantisborder}. Furthermore,  one has that $(\cdot)^\bullet{^\bullet}=(\cdot)$, whenever $(\cdot)$ is a closed convex cone  \citep[Th.5.103]{aliprantisborder}, and   $(\cdot)_1 \subseteq (\cdot)_2$ if and only if $(\cdot)_2^\bullet \subseteq (\cdot)_1^\bullet$ \citep[Lem.5.102(1)]{aliprantisborder}. In particular, whenever $(\cdot)_1$ and $(\cdot)_2$ are closed convex cones, $(\cdot)_1 \subsetneq (\cdot)_2$ if and only if $(\cdot)_2^\bullet \subsetneq (\cdot)_1^\bullet$.

Based on those facts, it is thus possible to verify that the dual of a coherent set of desirable gambles can actually be completely described in terms of a (closed convex) set of states (probability charges). 
In this aim, we start by the following observations.

\begin{proposition}\label{prop:czesc}
 It holds that 
 \begin{enumerate}
   \item $(\mathcal{L})^\bullet=\{0\}$ and $\mathcal{L}=(\{0\})^\bullet$;
 \item $(\nonnegative)^\bullet=\mathcal{M}^{\geq}$ and $\nonnegative=(\mathcal{M}^{\geq})^\bullet$;
% \item Let $\domain \supseteq \nonnegative$ be a closed convex cone; then $\domain$ is coherent if and only if $(\domain)^\bullet \neq \{0\}$.% if and only if $\domain \neq \mathcal{L}$.
  \end{enumerate}
\end{proposition}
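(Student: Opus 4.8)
The plan is to establish the four displayed equalities by double inclusion, exploiting the two structural features the text has already isolated: that $\mathcal{L}$ is a \emph{linear} space, and that $\nonnegative$ contains all indicator functions while $\mathcal{M}^{\geq}$ contains all atomic charges (Dirac deltas). In each equality one inclusion is immediate, and the other is witnessed by a single well-chosen test element, either an indicator $\mathbbm{1}_A$ or a point mass $e_{\omega_0}$.

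For the first item, to prove $(\mathcal{L})^\bullet = \{0\}$, I would take $\mu \in (\mathcal{L})^\bullet$ and use that $\mathcal{L}$ is a vector space: since $g \in \mathcal{L}$ implies $-g \in \mathcal{L}$, applying $\int g \, d\mu \geq 0$ to both $g$ and $-g$ forces $\int g \, d\mu = 0$ for every gamble. Feeding in indicator functions then gives $\mu(A) = \int \mathbbm{1}_A \, d\mu = 0$ for every $A$ in the algebra, so $\mu = 0$; the reverse inclusion $\{0\} \subseteq (\mathcal{L})^\bullet$ is clear. The companion identity $\mathcal{L} = (\{0\})^\bullet$ is immediate, since the defining condition quantifies only over the charge $0$, for which $\int g \, d(0) = 0 \geq 0$ holds vacuously for every $g$.

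For the second item I would first handle $(\nonnegative)^\bullet = \mathcal{M}^{\geq}$. The inclusion $\mathcal{M}^{\geq} \subseteq (\nonnegative)^\bullet$ is just the elementary fact that the integral of a nonnegative function against a nonnegative charge is nonnegative, and the converse follows by testing $\mu \in (\nonnegative)^\bullet$ against the indicators $\mathbbm{1}_A \in \nonnegative$, which yields $\mu(A) \geq 0$ for all $A$. Dually, for $\nonnegative = (\mathcal{M}^{\geq})^\bullet$ the inclusion $\nonnegative \subseteq (\mathcal{M}^{\geq})^\bullet$ is the same integration fact read from the other side. The remaining inclusion is the one place where care is needed: given $g \in (\mathcal{M}^{\geq})^\bullet$, I must rule out $\inf g < 0$. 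If $\inf g < 0$, there is a point $\omega_0$ with $g(\omega_0) < 0$ (pick $\omega_0$ realising a value below $\tfrac{1}{2}\inf g < 0$, which exists by definition of the infimum even when it is not attained), and the atomic charge $e_{\omega_0} \in \mathcal{M}^{\geq}$ then witnesses $\int g \, d e_{\omega_0} = g(\omega_0) < 0$, a contradiction; hence $g \in \nonnegative$.

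The argument is short, and the only genuinely delicate point is this last one: extracting from membership in $(\mathcal{M}^{\geq})^\bullet$ an actual point of negativity to feed into a Dirac delta, which is why I flag the infimum-not-attained subtlety explicitly rather than assuming the infimum is achieved. Everything else reduces to the two observations that $\mathcal{L}$ is closed under negation and that indicators and point masses are legitimate test elements on the two sides of the duality bracket $\langle g,\mu\rangle = \int g\, d\mu$ — precisely the ingredients the text singles out as the key point of the duality construction.
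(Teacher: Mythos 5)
Your proof is correct, but it takes a genuinely different route from the paper's. The paper proves only \emph{one} equality per item and gets the companion identity for free: it invokes the bipolar theorem $(\cdot)^{\bullet\bullet}=(\cdot)$ for closed convex cones (citing Aliprantis--Border, Th.~5.103), notes that $\mathcal{L}=(\{0\})^\bullet$ is immediate, and proves $(\nonnegative)^\bullet=\mathcal{M}^{\geq}$ with the same indicator-function test you use; dualising then yields $(\mathcal{L})^\bullet=\{0\}$ and $\nonnegative=(\mathcal{M}^{\geq})^\bullet$ without further work. You instead verify all four identities directly: for $(\mathcal{L})^\bullet=\{0\}$ you exploit closure of $\mathcal{L}$ under negation to force $\int g\,d\mu=0$ and then test with indicators, and --- the one step that has no counterpart in the paper --- for $\nonnegative=(\mathcal{M}^{\geq})^\bullet$ you extract from $\inf g<0$ an actual point $\omega_0$ with $g(\omega_0)<0$ (correctly flagging that the infimum need not be attained) and feed it into the atomic charge $e_{\omega_0}\in\mathcal{M}^{\geq}$, re-deriving by hand what the paper obtains from biduality. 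What the paper's route buys is economy; the price is reliance on a separation-type theorem and on knowing that all four sets are closed convex cones in the relevant norm and weak${}^*$ topologies. What your route buys is that it is entirely elementary and self-contained, and it makes explicit the two structural ingredients --- indicators on the gamble side, point masses on the charge side --- whose availability is exactly what the paper highlights as the key to the duality, and whose absence in the restricted spaces $\gambles_R$ of Section~3 is what later breaks the classical probabilistic interpretation. Both arguments are sound.
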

\begin{proof} Since $(\cdot)^\bullet{^\bullet}=(\cdot)$, whenever $(\cdot)$ is a closed convex cone, in both cases it is enough to verify only one of the claims.
For the first item, the second claim is immediate.
For the second item, we verify the first claim.
The inclusion from right to left being clear, for the other direction  observe that: (i) $ g=I_{\{B\}}$ (with $I_B$ being the indicator function on $B \in \mathcal{A}$), is a nonnegative gamble and, therefore, is in $\nonnegative$; (ii) if $\mu$ is negative in $B \in \mathcal{A}$, i.e.,  then $\int I_{B}d\mu$ is negative too and, thus, $\mu$ cannot be in $(\nonnegative)^\bullet$.  
\end{proof}
\begin{proposition}\label{prop:czesc2}
Let $\domain $ be a closed convex cone. The following claims are equivalent
 \begin{enumerate}
   \item $\domain$ is coherent;
     \item $\domain\supseteq \nonnegative$ and $\domain \neq \mathcal{L}$;
   \item $(\domain)^\bullet\subseteq \mathcal{M}^{\geq}$ and $(\domain)^\bullet \neq \{0\}$.
  \end{enumerate}
\end{proposition}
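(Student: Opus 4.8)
The plan is to establish the two equivalences $(1)\Leftrightarrow(2)$ and $(2)\Leftrightarrow(3)$ separately, the first by an elementary manipulation inside $\mathcal{L}$ and the second by the duality calculus assembled in Proposition~\ref{prop:czesc} together with the standard facts recalled just before it (anti-monotonicity of $(\cdot)^\bullet$ and the identity $(\cdot)^{\bullet\bullet}=(\cdot)$ on closed convex cones). Throughout I read coherence, as in Definition~\ref{def:avs}, as the conjunction of $\nonnegative\subseteq\domain$ (postulate~\ref{eq:taut}, which coherence presupposes) and $\negative\cap\domain=\emptyset$ (postulate~\ref{eq:sl}); this makes the common clause $\nonnegative\subseteq\domain$ explicit in both (1) and (2), so that the content of $(1)\Leftrightarrow(2)$ is the equivalence of the remaining two clauses under that shared hypothesis.

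For $(1)\Leftrightarrow(2)$ it thus suffices, assuming $\nonnegative\subseteq\domain$, to show that $\negative\cap\domain=\emptyset$ is equivalent to $\domain\neq\mathcal{L}$. One direction is immediate: since $-\mathbbm{1}\in\negative\subseteq\mathcal{L}$, the emptiness $\negative\cap\domain=\emptyset$ forbids $-\mathbbm{1}\in\domain$, hence $\domain\neq\mathcal{L}$. The converse, in contrapositive form, is the only point with real content. Assuming $\negative\cap\domain\neq\emptyset$, fix $g\in\domain$ with $\sup g<0$; I claim $\domain=\mathcal{L}$. Given an arbitrary gamble $h$, write $h=(h-\mu g)+\mu g$ and observe that $h-\mu g\in\nonnegative$ as soon as $\mu(-\sup g)\geq -\inf h$, which holds for all sufficiently large $\mu>0$ because $-\sup g>0$ and $h$ is bounded. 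For such $\mu$ the gamble $h$ is a positive combination of $h-\mu g\in\nonnegative\subseteq\domain$ and $g\in\domain$, hence $h\in\domain$ by the cone property. Thus every gamble lies in $\domain$, i.e.\ $\domain=\mathcal{L}$, proving the contrapositive.

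For $(2)\Leftrightarrow(3)$ I pass to duals. By anti-monotonicity of $(\cdot)^\bullet$ together with Proposition~\ref{prop:czesc}(2), the inclusion $\nonnegative\subseteq\domain$ holds iff $\domain^\bullet\subseteq(\nonnegative)^\bullet=\mathcal{M}^{\geq}$. For the second clause, recall from Proposition~\ref{prop:czesc}(1) that $(\mathcal{L})^\bullet=\{0\}$; since $(\cdot)^\bullet$ is an order-reversing involution on closed convex cones it is in particular injective there, so $\domain^\bullet=\{0\}=(\mathcal{L})^\bullet$ forces $\domain=\mathcal{L}$, while $\domain=\mathcal{L}$ trivially yields $\domain^\bullet=\{0\}$. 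Hence $\domain\neq\mathcal{L}$ iff $\domain^\bullet\neq\{0\}$, and conjoining the two equivalences gives $(2)\Leftrightarrow(3)$.

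I expect the genuinely non-formal step to be the converse half of $(1)\Leftrightarrow(2)$: the observation that a single strictly negative desirable gamble, together with the cone axioms and $\nonnegative\subseteq\domain$, already collapses $\domain$ onto all of $\mathcal{L}$. Everything else is bookkeeping with the operator $(\cdot)^\bullet$, for which the facts stated just before Proposition~\ref{prop:czesc} are exactly what is needed; in particular the boundedness of gambles (the supremum-norm setting) is precisely what guarantees that the scalar $\mu$ above can be chosen finite.
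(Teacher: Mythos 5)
Your proof is correct, and it follows the paper's decomposition into $(1)\Leftrightarrow(2)$ and $(2)\Leftrightarrow(3)$; moreover your treatment of $(2)\Leftrightarrow(3)$ (anti-monotonicity of $(\cdot)^\bullet$, Proposition~\ref{prop:czesc}, and injectivity of $(\cdot)^\bullet$ on closed convex cones via biduality) is exactly the paper's. Your reading of ``coherent'' as the conjunction of $\nonnegative\subseteq\domain$ with $\negative\cap\domain=\emptyset$ is also the intended one: the paper's own proof invokes ``By A.1 $\domain\supseteq\nonnegative$'' when unpacking (1), so making this explicit is a service rather than a deviation. Where you genuinely diverge is the step $(2)\Rightarrow(1)$, which is the real content. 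The paper argues dually: from a gamble $g\in\domain$ with $\sup g<0$ it observes that $\langle g,\mu\rangle\geq 0$ forces $\mu=0$ for every $\mu\in\mathcal{M}^{\geq}\supseteq\domain^\bullet$, hence $\domain^\bullet=\{0\}$, and then biduality (Proposition~\ref{prop:czesc} again) yields $\domain=\mathcal{L}$, a contradiction. You instead argue primally: for arbitrary $h\in\gambles$ you write $h=(h-\mu g)+\mu g$ and note that boundedness of $h$ and $\sup g<0$ let you choose $\mu>0$ finite with $h-\mu g\in\nonnegative\subseteq\domain$, so the cone property absorbs all of $\gambles$ into $\domain$. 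Both are sound; your route is more elementary and self-contained (no charges, no integration, no biduality at this step), it isolates precisely where boundedness of gambles enters, and it mirrors the primal manipulation the paper itself uses elsewhere (e.g.\ the step $-\mathbbm{1}_R=\frac{g+f}{\delta}$ in Proposition~\ref{prop:cohe}). The paper's dual route is shorter given that the duality calculus is already set up for $(2)\Leftrightarrow(3)$, and it foreshadows the characterisation by signed charges in Theorem~\ref{cor:noncoherent}. One tiny point worth stating if you write this up: in the equivalence $\nonnegative\subseteq\domain\Leftrightarrow\domain^\bullet\subseteq\mathcal{M}^{\geq}$, the right-to-left direction uses biduality and hence the standing hypothesis that $\domain$ is a \emph{closed} convex cone, which you do invoke but only implicitly through the ``facts stated just before Proposition~\ref{prop:czesc}''.
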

\begin{proof} (2) $\Leftrightarrow$ (3): From Proposition \ref{prop:czesc}, $(\domain)^\bullet = \{0\}$ if and only if $\domain = \mathcal{L}$, and $\domain\supseteq \nonnegative$ if and only if $(\domain)^\bullet\subseteq \mathcal{M}^{\geq}$.
%The remaining equivalence thus follows by applying the first item and the properties of $(\cdot)^\bullet$ on closed convex cones.
%Hence, $\domain=\domain^\bullet{^\bullet}=\{0\}^\bullet=\mathcal{L}$.
\\ (1) $\Rightarrow$ (2): Assume $\domain$ is coherent. By A.1 $\domain \supseteq \nonnegative$ and by A.2 there is $g \in \mathcal{L}$ such that $\sup g < 0$ and $g \notin \domain$. 
\\ (2) $\Rightarrow$ (1): Let $\nonnegative \subseteq \domain \subsetneq \mathcal{L}$. First of all, notice that, by Proposition \ref{prop:czesc}, $(\domain)^\bullet\subseteq (\nonnegative)^\bullet = \mathcal{M}^{\geq}$. 
Now, assume that $\domain$ is not coherent. This means A.2 fails, i.e. there is $g \in \mathcal{L}$ such that $\sup g < 0$ and $g \in \domain$. 
%On the other hand $-g \in \nonnegative\subseteq \domain$. 
Hence, consider $\mu \in \mathcal{M}^{\geq}$ and pick $g \in \domain$ such that $\sup g < 0$. It holds that $\langle g, \mu\rangle \geq 0$ if and only if $\mu=0$, meaning that $(\domain)^\bullet=\{0\}$ and therefore, by Proposition \ref{prop:czesc} again, $\domain=\mathcal{L}$, a contradiction.
\end{proof}

%A linear functional $L$ of gambles is said to be \emph{nonnegative} whenever it  satisfies : $L(g) \geq 0$, for $g \in \nonnegative$. 
%A nonnegative linear functional is called a \emph{state} if moreover it preserves the unitary constant gamble. 
%In our context, this means $L(1) = \langle 1, \mu\rangle=\int 1 d\mu=1$, i.e., the linear functional is scale preserving. Hence,  the set of states $\mathscr{S}$ corresponds to the closed convex set of  all probability charges. The extremes of $\mathscr{S}$ are  the functional $
%  e_{\tilde{\omega}} $ assigning $1$ to some given $\tilde{\omega}$ and 0 elsewhere. Such states are called \emph{pure}. The linear functional $L(g)= \langle g, e_{\tilde{\omega}}\rangle = g(\tilde{\omega})$ defined by  a pure state $e_{\tilde{\omega}}$ is called a classical evaluation function.
%Each $\mu \in \mathscr{S}$ state is a convex combination of pure states, that is  $\mu = \sum_{\tilde{\omega}} a_{\tilde{\omega}} e_{\tilde{\omega}}$, with $a_{\tilde{\omega}} \in [0,1]$ and $\sum_{\tilde{\omega}} a_{\tilde{\omega}} = 1$. The linear functional induced by a state $\mu$ as
%  \begin{equation}
%  L_{\mu}(g) = \sum_{\tilde{\omega}} a_{\tilde{\omega}} \langle g, e_{\tilde{\omega}}\rangle = \sum_{\tilde{\omega}} a_{\tilde{\omega}} g(\tilde{\omega}).
% \end{equation}
%is thence called a mixture of classical evaluation function. 

%Moreover, the following theorem holds.

\begin{proofof}{Theorem~\ref{prop:dualcharges0}}
We want to verify that the map
\[\domain \mapsto  \mathcal{P}:=\domain^\bullet \cap \mathscr{S} \]
establishes a bijection between coherent sets of desirable gambles and non-empty closed convex sets of states.
The proof is analogous to that by \citep[Th.4]{pmlr-v62-benavoli17b}.
 Let $\domain$ be a coherent set of desirable gambles. By Proposition \ref{prop:czesc2}, we get that $\domain^\bullet$ is a closed convex cone included in $\mathcal{M}^{\geq}$ that does not reduce to the origin.  
Thus, after normalisation, $\mathcal{P}$ is nonempty. Preservation of closedness and convexity by finite intersections yields that $\mathcal{P}$ is closed and convex. Furthermore $\mathbb{R}_+\mathcal{P}=\domain^\bullet$, and therefore $\domain= (\mathbb{R}_+\mathcal{P})^\bullet$, where $\mathbb{R}_+\mathcal{P}:=\{ \lambda\mu : \lambda \geq 0, \mu \in \mathcal{P}\}$, meaning that the map is an injection.
We finally verify that the map is also a surjection. To do this, let $\mathcal{P}$ be a non empty closed convex set of probability charges. It holds that $\mathbb{R}_+\mathcal{P}$ is a closed convex cone included in $\mathcal{M}^{\geq}$ different from $\{0\}$. Again by Proposition \ref{prop:czesc2},  we  conclude that the dual $(\mathbb{R}_+\mathcal{P})^\bullet$ of $\mathbb{R}_+\mathcal{P}$ is a coherent set of desirable gambles and $\mathcal{P}=\mathbb{R}_+\mathcal{P}\cap \mathscr{S}= (\mathbb{R}_+\mathcal{P}){{}^\bullet{^\bullet}}\cap \mathscr{S}$. 
\end{proofof}

%We therefore identify the dual of $\domain$ with
%\begin{equation}
% \mathcal{P}=\left\{\mu\in \states: L_\mu(g) \geq0,~\forall g \in \domain\right\},
%\end{equation} 
%which is a closed convex-set of probability charges.

As an immediate corollary of the previous results, we finally obtain Theorem \ref{cor:noncoherent}.
It provides us with a characterisation of the dual of a closed convex cone which is not coherent.
\begin{theorem}\label{cor:noncoherent}
Let $\domain $ be a non empty closed convex cone. Then the following are equivalent
\begin{enumerate}
 \item $\domain  \neq \mathcal{L}$  and $\domain$ is not coherent;
 \item %$\domain^\bullet \supsetneq \{0\}$ and 
 $\domain^\bullet\not\subseteq \mathcal{M}^{\geq}$,
 \item $\domain^\bullet \cap \{ \mu \in \mathcal{M} \mid \langle \mathbbm{1}, \mu \rangle =1 \} \not \subseteq \mathscr{S}$.
  \item $\{0\}\subsetneq\domain^\bullet$ and $\domain^\bullet \cap   \mathscr{S}= \emptyset$.
 \end{enumerate} 
\end{theorem}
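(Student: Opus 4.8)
The plan is to read this off as the mirror image of Proposition~\ref{prop:czesc2}, using the duality dictionary already in place: $(\cdot)^\bullet$ is anti-monotone with $\domain^{\bullet\bullet}=\domain$ on closed convex cones, $(\gambles)^\bullet=\{0\}$ and $\nonnegative^\bullet=\mathcal{M}^{\geq}$ (Proposition~\ref{prop:czesc}), together with the normalisation step used in the proof of Theorem~\ref{prop:dualcharges0}. I would organise the argument as the two pairwise equivalences $(1)\Leftrightarrow(2)$ and $(2)\Leftrightarrow(3)$, and then the link $(2)\Leftrightarrow(4)$.

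First I would settle $(1)\Leftrightarrow(2)$. By Proposition~\ref{prop:czesc2}, $\domain$ is coherent iff $\domain^\bullet\subseteq\mathcal{M}^{\geq}$ and $\domain^\bullet\neq\{0\}$; negating, $\domain$ is not coherent iff $\domain^\bullet\not\subseteq\mathcal{M}^{\geq}$ or $\domain^\bullet=\{0\}$. Since $\domain=\gambles$ iff $\domain^\bullet=\{0\}$ (Proposition~\ref{prop:czesc}), the standing hypothesis $\domain\neq\gambles$ discards the second disjunct and leaves $\domain^\bullet\not\subseteq\mathcal{M}^{\geq}$. As $\{0\}\subseteq\mathcal{M}^{\geq}$, this already forces $\domain^\bullet\neq\{0\}$, so $(1)\Leftrightarrow(2)$ is immediate.

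For $(2)\Leftrightarrow(3)$ I would write $\mathscr{S}=\mathcal{M}^{\geq}\cap H$ with $H:=\{\mu\in\mathcal{M}:\langle\mathbbm{1},\mu\rangle=1\}$, so that $(3)$ is exactly the existence of some $\mu\in\domain^\bullet\cap H$ with $\mu\notin\mathcal{M}^{\geq}$; such a $\mu$ is at once a witness for $(2)$. Conversely, starting from any $\nu\in\domain^\bullet\setminus\mathcal{M}^{\geq}$ I would slide it onto $H$: exploiting that $\domain^\bullet$ is a convex cone, I rescale $\nu$ by its total mass $\langle\mathbbm{1},\nu\rangle$ when this is strictly positive (positive scaling preserves membership in the cone and the sign-indefiniteness), and otherwise combine $\nu$ with a positive-mass element of $\domain^\bullet$ before rescaling, keeping the strictly negative value on the set where $\nu$ was negative. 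This normalisation is routine once a positive-mass direction is available inside $\domain^\bullet$.

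The link to $(4)$ is the part I expect to be hardest. The clause $\{0\}\subsetneq\domain^\bullet$ is just $\domain\neq\gambles$ by Proposition~\ref{prop:czesc}. For the second clause I would use the pairing identity $\domain^\bullet\cap\mathcal{M}^{\geq}=(\domain+\nonnegative)^\bullet$, so that (since a nonzero nonnegative charge has strictly positive mass and hence normalises into $\mathscr{S}$) the condition $\domain^\bullet\cap\mathscr{S}=\emptyset$ is equivalent to $\domain^\bullet\cap\mathcal{M}^{\geq}=\{0\}$, i.e.\ to $\cl\,\posi(\domain\cup\nonnegative)=\gambles$ by Proposition~\ref{prop:czesc}. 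Closing the loop back to $(2)$ then amounts to a dichotomy: the dual cone is \emph{either} contained in $\mathcal{M}^{\geq}$ \emph{or} meets the state cone only at the origin, with no intermediate case. I would try to force this by a Hahn--Banach separation in the pairing $(\gambles,\mathcal{M})$ combined with the Krein--Milman description of $\mathscr{S}$ underlying Theorem~\ref{prop:dualcharges0}. The delicate point, and in my view the real crux, is controlling the total mass $\langle\mathbbm{1},\cdot\rangle$ of the witnessing dual elements: the argument must single out the role of $\mathbbm{1}$ and of the indicator gambles lying in $\nonnegative$, since without exploiting this structure the intermediate situation (a dual cone meeting both $\mathcal{M}^{\geq}$ and its complement nontrivially, so that $\mathscr{S}$ is still reached) is not a priori excluded. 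Pinning down precisely why that situation cannot occur here is where I expect the proof to demand the most care.
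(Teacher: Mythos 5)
Your handling of $(1)\Leftrightarrow(2)$ coincides with the paper's, which obtains it in one line from Proposition~\ref{prop:czesc2} together with Proposition~\ref{prop:czesc}; and your normalisation argument for $(2)\Leftrightarrow(3)$ is the idea the paper leaves implicit (it simply declares that equivalence obvious), though note that even there the availability of a positive-mass direction in $\domain^\bullet$ is asserted rather than established. The genuine gap is the one you flag yourself: the link to item (4) is never actually proved. Your plan --- reduce (4) via $(\domain+\nonnegative)^\bullet=\domain^\bullet\cap\mathcal{M}^{\geq}$ to the identity $\cl\posi(\domain\cup\nonnegative)=\gambles$, and then force a dichotomy (\emph{either} $\domain^\bullet\subseteq\mathcal{M}^{\geq}$ \emph{or} $\domain^\bullet\cap\mathscr{S}=\emptyset$) by Hahn--Banach separation and Krein--Milman --- stalls exactly where you predict, and for a structural reason: nothing in item (2) alone, i.e.\ in the mere existence of a signed charge in $\domain^\bullet$, prevents $\domain^\bullet$ from also containing a state, so no separation argument in the pairing $(\gambles,\mathcal{M})$ can deliver that dichotomy directly between (2) and (4).

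The missing idea is to route the equivalence with (4) through item (1) rather than item (2); there it becomes a two-line argument on the primal side. If $\domain$ is not coherent --- read, per Definition~\ref{def:avs}, as the failure of \ref{eq:sl}, so that $\domain\cap\negative\neq\emptyset$ --- then $\domain$ contains a gamble $g$ with $\sup g<0$, and every state $\mu\in\mathscr{S}$ satisfies $\int g\,d\mu\leq\sup g<0$; hence no state can lie in $\domain^\bullet$, i.e.\ $\domain^\bullet\cap\mathscr{S}=\emptyset$. Conversely, if $\domain$ is coherent, Theorem~\ref{prop:dualcharges0} already guarantees that $\domain^\bullet\cap\mathscr{S}$ is a non-empty closed convex set of states. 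Combined with the bipolar fact that $\{0\}\subsetneq\domain^\bullet$ holds precisely when $\domain\neq\mathcal{L}$ (Proposition~\ref{prop:czesc}), this is the entirety of the paper's proof of $(1)\Leftrightarrow(4)$: the incoherence witness $g$ does all the work as a primal certificate against every state at once, and the mass-control and extreme-point machinery you anticipated is never needed. Your instinct that the ``intermediate case'' is the crux was sound --- it cannot be excluded by dual-side arguments alone --- but the paper sidesteps it entirely by never arguing between (2) and (4) directly.
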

Essentially, Theorem \ref{cor:noncoherent} is telling us that, from the dual point of view, non degenerated closed convex cones of gambles that are not coherent are characterised by signed (non positive) charges. %We will go back on this issues later in Section \ref{sec:complex}.

\begin{proofof}{Theorem~\ref{cor:noncoherent}}
Given  a non empty closed convex cone $\domain $, we need to check  that the following claims are equivalent
\begin{enumerate}
 \item $\domain  \neq \mathcal{L}$  and $\domain$ is not coherent;
 \item %$\domain^\bullet \supsetneq \{0\}$ and 
 $(\domain)^\bullet\not\subseteq \mathcal{M}^{\geq}$,
 \item $\domain^\bullet \cap \{ \mu \in \mathcal{M} \mid \langle 1, \mu \rangle =1 \} \not \subseteq \mathscr{S}$.
  \item $\{0\}\subsetneq\domain^\bullet$ and $\domain^\bullet \cap   \mathscr{S}= \emptyset$.
 \end{enumerate} 
Notice that $\domain^\bullet \supseteq \{0\}$, and that, by Proposition  \ref{prop:czesc}, all claims imply $\domain^\bullet \neq \{0\}$ and $\domain \neq \gambles$. The equivalence between (2) and (3) being obvious, the one between (1) and (2) is an immediate consequence of Proposition \ref{prop:czesc2}.  For the equivalence between (1) and (4), first of all notice that, since $\domain $ is a non empty closed convex cone, 
$\{0\}\subseteq\domain^\bullet$ if and only if $\domain  \neq \mathcal{L}$. Now, assume  $\domain$ is not coherent.  This means there is a negative gamble $g$ in  $\domain$. Hence, $\domain^\bullet \cap   \mathscr{S}= \emptyset$ since probability charges do not preserves negative gambles. Finally, if $\domain$ is coherent, by Theorem \ref{prop:dualcharges0},   $\domain^\bullet \cap   \mathscr{S} $ is a non-empty closed convex cone of probability charges. 
\end{proofof}

  \section{Relative coherence}\label{app:comp}

% \subsection{Relative coherence}\label{subsec:relativo}
   Why do we need $\theory$ if it is dual to classical probability theory?
   The point is that, $\theory$, having the structure of an abstract logic, is independent of the specific contingent properties of the underlying space of gambles. From this perspective, it is thence  more general than probability theory and can be used to model any circumstance in which an agent has to make a rational choice.
   
For instance, we can define the desirability postulates on any vector subspace of $\gambles$ that includes
constant gambles. This is useful in applications where the gambles Alice can examine are a subset of $\gambles$.\footnote{
For example, in Finance the tradeable gambles are usually piecewise polynomials.
An European call option on the future value $x$ of an underlying security  with strike $k$ is mathematically expressed by the gamble $\max(x - k, 0)$.}

In what follows we verify under which circumstances the  dual is still the classical theory of probability. Then, later on we will see that sometimes we can get a weaker theory of probabilities. 

Let  $\gambles_R$ denote a closed linear subspace of $\gambles$ that includes all constant gambles. By  $\nonnegative_R = \nonnegative \cap \gambles_R$ we denote the subset of nonnegative gambles, and by  $\negative_R = \negative \cap \gambles_R$ we denote the subset of negative gambles. 
We thus relativise postulates \ref{eq:taut}--\ref{eq:sl} as follows. 
Firstly, we restrict tautologies, and thus the extension of the predicate being nonnegative, to the considered subspace.
\begin{enumerate}[label=\upshape A$_R$0.,ref=\upshape A$_R$0]
\item\label{eq:tautR} $\nonnegative_R$ should always be desirable.
\end{enumerate}

Secondly, we modify the deductive closure accordingly.
\begin{enumerate}[label=\upshape A$_R$1.,ref=\upshape A$_R$1]
\item\label{eq:NER} $N_R(\assess_R)\coloneqq\cl \posi(\nonnegative_R\cup \mathcal{G}_R)$.
\end{enumerate}
Notice that $N_R(\assess_R)\subseteq \gambles_R$, for $\assess_R \subseteq \gambles_R$. In such case we sometimes denote $N_R(\assess_R)$ by $\domain_R$. 

Hence, finally, we state that: 
\begin{definition}[Relative coherence postulate]
\label{def:avsR}
 A set $\domain_R \subseteq \gambles_R$  is \emph{coherent relative to } $\gambles_R$  if and only if
 \begin{enumerate}[label=\upshape A$_R$2.,ref=\upshape A$_R$2]
\item\label{eq:slR} $ \negative_R \cap \domain_R=\emptyset$.
\end{enumerate}
\end{definition}

It is easy to check that $\theory$ and its relativisation to $\gambles_R$ defined by Postulates \ref{eq:tautR}--\ref{eq:slR}, denoted by $\theory_R$, are fully compatible. That is: 

\begin{theorem}\label{thm:cocore}
Let $\mathcal{G}_R$ be a set of assessments in $\gambles_R$.
$N_R(\mathcal{G}_R)$ is coherent in $\theory_R$ if and only if $N(\mathcal{G}_R)$ is coherent in $\theory$. Moreover $N_R(\mathcal{G}_R)= \gambles_R \cap N(\mathcal{G}_R)$.
\end{theorem}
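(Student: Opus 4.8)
The plan is to prove the ``moreover'' part, namely the set identity $N_R(\mathcal{G}_R) = \gambles_R \cap N(\mathcal{G}_R)$, and then obtain the coherence equivalence as a short corollary, since by Proposition~\ref{prop:coco1} coherence of a closed cone is just the omission of the Falsum $-\mathbbm{1}$, and $-\mathbbm{1}\in\gambles_R$.

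The cornerstone is an elementary \emph{replacement identity} at the level of conic hulls, with no closure involved:
\[
\gambles_R \cap \posi(\nonnegative \cup \mathcal{G}_R) = \posi(\nonnegative_R \cup \mathcal{G}_R).
\]
The inclusion $\supseteq$ is immediate because $\nonnegative_R \subseteq \nonnegative$ and all generators on the right lie in the linear space $\gambles_R$. For $\subseteq$, write an element of the left-hand side as $h = n + \sum_i \lambda_i g_i$ with $n \in \nonnegative$, $g_i \in \mathcal{G}_R$, $\lambda_i \geq 0$. Since $h \in \gambles_R$ and $\sum_i \lambda_i g_i \in \gambles_R$ (as $\gambles_R$ is a linear subspace), the remainder $n = h - \sum_i \lambda_i g_i$ lies in $\gambles_R$, hence in $\nonnegative \cap \gambles_R = \nonnegative_R$; thus $h \in \posi(\nonnegative_R \cup \mathcal{G}_R)$. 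For finite $\mathcal{G}_R$, where $N=\posi(\nonnegative\cup\mathcal{G}_R)$ and $N_R=\posi(\nonnegative_R\cup\mathcal{G}_R)$ (no closure needed, per the remark after~\ref{eq:NE}), this identity already \emph{is} the claim.

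The main obstacle is upgrading the replacement identity to the closed cones $N=\cl\posi$ and $N_R=\cl\posi$: since $\posi(\nonnegative\cup\mathcal{G}_R)$ is not contained in $\gambles_R$, one cannot simply intersect with $\gambles_R$ and pass to closures. I would remove this obstruction with an $\epsilon$-shift description of the closure, valid because the cones contain $\nonnegative$ (resp.\ $\nonnegative_R$): for any set $\mathcal{G}$ of gambles,
\[
\cl\posi(\nonnegative \cup \mathcal{G}) = \{\,g \in \gambles : g + \epsilon\mathbbm{1} \in \posi(\nonnegative \cup \mathcal{G}) \text{ for all } \epsilon > 0\,\}.
\]
Indeed, given $g$ in the closure and $\epsilon>0$, pick $g_k \in \posi(\nonnegative\cup\mathcal{G})$ with $\|g_k-g\|_\infty<\epsilon$; then $r:=g+\epsilon\mathbbm{1}-g_k$ is pointwise positive, so $r\in\nonnegative$, and $g+\epsilon\mathbbm{1}=g_k+r\in\posi(\nonnegative\cup\mathcal{G})$ because this convex cone contains $\nonnegative$ and is closed under addition; the converse holds since $g$ is then the norm-limit of the $g+\epsilon\mathbbm{1}$. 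The same statement holds verbatim inside $\gambles_R$ with $\nonnegative_R$. Now take $h \in \gambles_R \cap N(\mathcal{G}_R)$: the shift description gives $h + \epsilon\mathbbm{1} \in \posi(\nonnegative \cup \mathcal{G}_R)$ for every $\epsilon > 0$, and $h + \epsilon\mathbbm{1} \in \gambles_R$ (constants lie in $\gambles_R$), so the replacement identity yields $h + \epsilon\mathbbm{1} \in \posi(\nonnegative_R \cup \mathcal{G}_R)$; applying the relative shift description once more gives $h \in N_R(\mathcal{G}_R)$. The reverse inclusions $N_R(\mathcal{G}_R) \subseteq \gambles_R$ and $N_R(\mathcal{G}_R) \subseteq N(\mathcal{G}_R)$ follow from $\nonnegative_R \subseteq \nonnegative$ and the closedness of $\gambles_R$, establishing the identity.

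Finally I would deduce the coherence equivalence. By Proposition~\ref{prop:coco1} (and its relativisation, which develops along identical lines), $N(\mathcal{G}_R)$ is coherent in $\theory$ iff $-\mathbbm{1} \notin N(\mathcal{G}_R)$, and $N_R(\mathcal{G}_R)$ is coherent in $\theory_R$ iff $-\mathbbm{1} \notin N_R(\mathcal{G}_R)$. Since $-\mathbbm{1} \in \gambles_R$, the set identity gives $-\mathbbm{1} \in N(\mathcal{G}_R) \iff -\mathbbm{1} \in \gambles_R \cap N(\mathcal{G}_R) = N_R(\mathcal{G}_R)$, so the two incoherence conditions coincide and hence so do the two coherence conditions. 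The only step beyond routine bookkeeping is the $\epsilon$-shift lemma, which is precisely what lets the finite replacement identity survive the passage to the topological closure for possibly infinite $\mathcal{G}_R$.
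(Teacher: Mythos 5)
Your proposal is correct, and it is worth noting that the paper itself offers no proof of Theorem~\ref{thm:cocore} at all --- it is introduced with ``It is easy to check that\dots'' and left as an assertion --- so your argument supplies details the authors omitted. Each step checks out: the replacement identity $\gambles_R \cap \posi(\nonnegative \cup \mathcal{G}_R) = \posi(\nonnegative_R \cup \mathcal{G}_R)$ is valid exactly as you argue (the conic part of a combination lying in $\nonnegative$ can be isolated because $\gambles_R$ is a linear subspace containing the $g_i$, forcing the residual nonnegative part into $\nonnegative \cap \gambles_R = \nonnegative_R$), and your $\epsilon$-shift description of the closure is sound: it needs only that $\mathbbm{1} \in \gambles_R$ (guaranteed, since $\gambles_R$ contains all constants) and that $\posi(\nonnegative \cup \mathcal{G}_R)$ is a convex cone containing $\nonnegative$, with $\inf\bigl(g+\epsilon\mathbbm{1}-g_k\bigr) \geq \epsilon - \|g-g_k\|_\infty > 0$ putting the residual in $\nonnegative$ (resp.\ $\nonnegative_R$, using that $\gambles_R$ is closed so relative and ambient closures agree). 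Interestingly, this shift mechanism is precisely the tool the authors \emph{do} deploy in the P-coherent setting (Lemma~\ref{lem:closure} and Proposition~\ref{prop:cohe} in Appendix~C), so your proof is very much in the spirit of the paper's own techniques, just applied to the relative theory $\theory_R$ where the paper skipped it. The one step you wave at --- that the relativisation of Proposition~\ref{prop:coco1} ``develops along identical lines'' --- does hold, but deserves the explicit check: if $g \in \negative_R \cap N_R(\mathcal{G}_R)$ with $\sup g = -\delta < 0$, then $-\delta\mathbbm{1} - g \in \nonnegative_R$ (it lies in $\gambles_R$ because $g$ and $\mathbbm{1}$ do), whence $-\mathbbm{1} \in N_R(\mathcal{G}_R)$ by conicity; together with $-\mathbbm{1} \in \negative_R$ this gives the Falsum characterisation in $\theory_R$, and your set identity then transfers incoherence in both directions since $-\mathbbm{1} \in \gambles_R$.
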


%\subsection{Inference and probabilistic interpretation  in relative theories}
%\label{sec:relativeinference}
% In this section, 
 
 Finally, we compare inference in theory $\theory_R$ with inference in the classical theory $\theory$.
 
 Let $\assess$ be a finite set of assessment in $\gambles_R$, and $N_R(\assess)$ be a coherent set of
desirable gambles in $\theory_R$. The lower prevision of a gamble $f \in \gambles_R$  is defined  as
\begin{equation}
\label{eq:lpner}
\begin{aligned}
 \underline{E}_R(f):=&\sup_{\gamma_0\in \reals, \lambda_i \in \reals^+} \gamma_0\\
 &s.t:\\
 &f - \gamma_0\mathbbm{1}_R  -\sum\limits_{i=1}^{|\mathcal{G}|} \lambda_i g_i \in \nonnegative_R,
\end{aligned}
\end{equation}
where $\mathbbm{1}_R$ denotes the unitary gamble in $\gambles_R$, i.e.,  $\mathbbm{1}_R(\omega)=1$ for all $\omega \in  \Omega$.
The upper prevision is denoted as $ \overline{E}_R(f)=-\underline{E}_R(-f)$. Comparing \eqref{eq:lpne0} and \eqref{eq:lpner}, the reader can notice that
 $f - \gamma_0\mathbbm{1}  -\sum_{i=1}^{|\mathcal{G}|} \lambda_i g_i \in \nonnegative$ in  \eqref{eq:lpne0}
becomes $f - \gamma_0\mathbbm{1}_R  -\sum_{i=1}^{|\mathcal{G}|} \lambda_i g_i \in \nonnegative_R$ in \eqref{eq:lpner}.

Since the inclusion in $\nonnegative$ is a weaker constraint then the inclusion in $\nonnegative_R$,
we have that $\underline{E}_R(f)\leq \underline{E}(f)$.
However, since $\gambles_R$ is a linear subspace of $\gambles$ that includes all constant gambles, with $\nonnegative_R = \nonnegative \cap \gambles_R$, and $N_R(\assess)$ is coherent (in $\theory$), we therefore have that, for every $f \in \gambles_R$
$$
\stackrel{\substack{\theory_R\\~}}{\underline{E}_R(f)} ~~=~~ \stackrel{\substack{\theory\\~}}{\underline{E}(f)} ~~\leq ~~  \stackrel{\substack{\theory\\~}}{\overline{E}(f)}  ~~=~~ \stackrel{\substack{\theory_R\\~}}{\overline{E}_R(f)},
$$
meaning that $\underline{E}_R(f)$ can be interpreted as a lower expectation. In particular, this tell us that, from an operational point of view, $\theory_R$ works exactly as $\theory$.

The same can also be observed by looking at characteristics of the dual of a coherent set in $\theory_R$.
Indeed,   assume $\gambles_R$ is a closed subspace of $\gambles$. Then, endowed with the relative topology, the continuous linear functions on $\gambles_R$ are exactly the restriction to $\gambles_R$ of the continuous linear functionals on $\gambles$  \citep[Theorem 5.87]{aliprantisborder}. 
As $\gambles_R$ includes all constant gambles, by the  Riesz-Kantorovich extension theorem, any positive functional on $\gambles_R$ can be extended (possibly non uniquely) to a positive functional on $\gambles$. This means that, given the correspondence in Theorem \ref{thm:cocore},  the dual of a set $\domain_R$ coherent in $\theory_R$ can be identified with the closed convex set of probability charges $(N(\domain_R))^\bullet$.

To conclude, whenever the subspace $\gambles_R$ is clear and Theorem \ref{thm:cocore} holds, we can identify $\theory$ and $\theory_R$.

 \section{Technicalities for Section \ref{sec:complex}}\label{app:complex}
 
 Define
 \begin{equation}
 f \leq_B g  :=  g-f \in \bnonnegative
 \end{equation}
  \begin{lemma}\label{lem:closure}
   Assume that $\bnonnegative$ includes all positive constant gambles. Then % and $Z$ is closed (in $\gambles'$), 
   it holds that
   \begin{equation}
       f \geq_B 0  \Leftrightarrow  f + \delta \geq_B 0, \forall\delta > 0.
   \end{equation}
  \end{lemma}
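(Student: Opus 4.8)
The plan is to unfold the definition of $\geq_B$ and then treat the two implications separately, leaning on two distinct structural facts about $\bnonnegative$: that it is closed under addition (being a convex cone) and that it is topologically closed in the supremum-norm topology on $\gambles_R$. Throughout I read the scalar $\delta > 0$ appearing in the statement as the positive constant gamble $\delta\mathbbm{1}_R$, in keeping with the notational convention fixed in Section~\ref{sec:comp}; with this reading $f \geq_B 0$ means precisely $f \in \bnonnegative$, and $f + \delta \geq_B 0$ means $f + \delta\mathbbm{1}_R \in \bnonnegative$. So the whole lemma amounts to the equivalence $f \in \bnonnegative \iff f + \delta\mathbbm{1}_R \in \bnonnegative$ for every $\delta>0$.

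For the forward implication I would argue purely algebraically. Suppose $f \in \bnonnegative$. By the standing hypothesis of the lemma, $\bnonnegative$ contains every positive constant gamble, so $\delta\mathbbm{1}_R \in \bnonnegative$ for each $\delta > 0$. Since $\bnonnegative$ is a convex cone it is closed under addition, whence $f + \delta\mathbbm{1}_R \in \bnonnegative$, i.e.\ $f + \delta \geq_B 0$, for every $\delta > 0$. This direction uses only the cone property together with the stated inclusion of positive constants, and needs no topology.

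For the converse I would pass to a limit. Assuming $f + \delta\mathbbm{1}_R \in \bnonnegative$ for all $\delta > 0$, take $\delta_n = 1/n$: the gambles $f + \tfrac{1}{n}\mathbbm{1}_R$ all lie in $\bnonnegative$ and converge to $f$ in the supremum norm, because $\lVert \tfrac{1}{n}\mathbbm{1}_R \rVert = \tfrac{1}{n} \to 0$. Since $\bnonnegative$ is closed in $\gambles_R$, the limit $f$ belongs to $\bnonnegative$, that is $f \geq_B 0$. The only delicate point, and the step I would flag as the real content, is exactly this converse: closedness of $\bnonnegative$ is indispensable, since without it one could have $f$ on the boundary of $\bnonnegative$ with all the shifted gambles admissible yet $f$ itself excluded. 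I would therefore make explicit that closedness of the P-nonnegative cone is part of the standing assumptions (the same closedness already invoked in Theorem~\ref{th:fundamental}), so that the limiting argument is legitimate.
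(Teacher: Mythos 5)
Your proof is correct and follows essentially the same route as the paper's: the forward direction from the cone property $\posi\bnonnegative = \bnonnegative$ together with $\delta\mathbbm{1}_R \in \bnonnegative$, and the converse from topological closedness $\cl\bnonnegative = \bnonnegative$, which you rightly flag as a standing assumption (as in Theorem~\ref{th:fundamental}) rather than an extra hypothesis. Your explicit $\delta_n = 1/n$ limiting argument is just a concrete unfolding of the paper's one-line appeal to closedness.
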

  \begin{proof}
   Direction from left to right follows from  $\delta \in \bnonnegative$ and $\posi \bnonnegative = \bnonnegative$. Direction from right to left follows from $\cl \bnonnegative = \bnonnegative$.
  \end{proof}

  As usual, the natural extension operator is defined  as $N_B(\assess):=\cl(\posi (\assess \cup \bnonnegative))$. In what follows we provide some characterisation of P-coherence.

  \begin{proposition}\label{prop:cohe}
    Assume that $\bnonnegative$ includes all positive constant gambles.
Let $\assess \subseteq \gambles_R$ a set of assessments. The following are equivalent
\begin{enumerate}
 \item $-\mathbbm{1}_R \notin \posi ( \assess \cup \bnonnegative)$
    \item $\posi ( \assess \cup \bnonnegative) \cap \bnegative = \emptyset$
     \item $N_B(\assess)$ is P-coherent
\end{enumerate}
  \end{proposition}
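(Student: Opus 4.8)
The plan is to prove the cycle (1)$\Rightarrow$(2)$\Rightarrow$(3)$\Rightarrow$(1). Write $K:=\posi(\assess\cup\bnonnegative)$, a convex cone with $\bnonnegative\subseteq K$ and $N_B(\assess)=\cl K\supseteq K$. Two elementary facts about $\bnegative$ bridge the three statements. First, $-\mathbbm{1}_R\in\bnegative$: the constant $-1$ is efficiently certified negative (in the Hermitian instance $-\mathbbm{1}_R$ corresponds to $-I<0$). Second, every P-negative gamble is dominated, in the order $\leq_B$, by a strictly negative constant, i.e.\ $g\in\bnegative$ implies $-c\mathbbm{1}_R-g\in\bnonnegative$ for some $c>0$; this holds because positive constant gambles are interior points of $\bnonnegative$ (in the Hermitian case $cI$ is positive definite, so $-G\geq cI$ whenever $G<0$). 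Both facts rest on the standing hypotheses that $\bnonnegative$ contains all positive constants and is closed, the latter being exactly what Lemma~\ref{lem:closure} exploits.

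The two algebraic implications are then immediate. For (3)$\Rightarrow$(1): P-coherence of $N_B(\assess)$ means $\bnegative\cap N_B(\assess)=\emptyset$, so $-\mathbbm{1}_R\in\bnegative$ forces $-\mathbbm{1}_R\notin N_B(\assess)\supseteq K$, which is (1). For (1)$\Rightarrow$(2) I argue contrapositively: if some $g\in K\cap\bnegative$, the second bridging fact supplies $c>0$ with $-c\mathbbm{1}_R-g\in\bnonnegative\subseteq K$; adding this to $g\in K$ and using that $K$ is closed under addition gives $-c\mathbbm{1}_R\in K$, and rescaling by $1/c>0$ yields $-\mathbbm{1}_R\in K$, contradicting (1). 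The very same extraction, run inside the \emph{closed} cone $N_B(\assess)$, also shows that (3) is equivalent to $-\mathbbm{1}_R\notin N_B(\assess)$.

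The delicate step, and the main obstacle, is (2)$\Rightarrow$(3), where one must control the topological closure defining $N_B(\assess)$. I argue contrapositively: suppose $N_B(\assess)$ is not P-coherent, so some $g\in\bnegative\cap N_B(\assess)$. The key extra ingredient is that $\bnegative$ is an \emph{open} subset of $\gambles_R$ — P-negativity is a strict condition (in the Hermitian case the negative-definite matrices form an open set). Since $g\in N_B(\assess)=\cl K$, every neighbourhood of $g$ meets $K$; taking a neighbourhood of $g$ contained in the open set $\bnegative$ produces $h\in K\cap\bnegative$, so $K\cap\bnegative\neq\emptyset$, negating (2) and closing the cycle. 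Equivalently, combining the remark at the end of the previous paragraph with (1)$\Leftrightarrow$(2) reduces (2)$\Rightarrow$(3) to showing that $K$ and $\cl K$ agree on whether they contain $-\mathbbm{1}_R$, and the openness of $\bnegative$ is precisely what makes this passage to the closure harmless. This yields the P-coherent analogue of Proposition~\ref{prop:coco1}.
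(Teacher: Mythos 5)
Your proof is correct, and its skeleton---reduce everything to the presence of $-\mathbbm{1}_R$, and extract a negative constant from any P-negative member of the cone---coincides with the paper's; the interesting divergence is in the one delicate direction, (2)$\Rightarrow$(3). Your (1)$\Leftrightarrow$(2) extraction ($g\in\bnegative$ yields $-c\mathbbm{1}_R-g\in\bnonnegative$ for some $c>0$; sum with $g$ and rescale inside the cone) is literally the paper's argument (where, incidentally, the phrase ``clearly (1) implies (2)'' in the paper's proof is a slip for the converse: the displayed computation there is exactly your contrapositive, and the ``clear'' direction is (2)$\Rightarrow$(1) via $-\mathbbm{1}_R\in\bnegative$, the fact you make explicit in your (3)$\Rightarrow$(1)). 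For (2)$\Rightarrow$(3), however, the paper perturbs along the constant direction: it identifies $\cl\posi(\assess\cup\bnonnegative)$ with $\{f: f+\delta\in\posi(\assess\cup\bnonnegative)\ \forall\delta>0\}$ and uses that membership in $\bnegative$ survives a small positive constant shift (its gloss on Lemma~\ref{lem:closure}), whereas you perturb in norm, using that $\bnegative$ is open in $\gambles_R$, so a closure point of the cone lying in $\bnegative$ already forces an actual cone point into $\bnegative$. The two mechanisms are of comparable strength, and neither follows from the proposition's stated hypothesis alone: the paper's closure identification silently requires positive constants to be norm-interior to $\bnonnegative$, and your route requires $\bnegative$ to be open; both hold in the paper's two instances, where $\gambles_R$ is finite-dimensional and $\bnegative$ is the image of the open cone of negative-definite matrices under a surjective (hence open) linear map. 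Your formulation arguably isolates more cleanly what is really needed topologically; the paper's has the merit of staying order-theoretic (shifts by constants only), which is why it can be routed through Lemma~\ref{lem:closure}. One cosmetic caveat: your stated reason for the second bridging fact---``positive constant gambles are interior points of $\bnonnegative$''---is not quite the right invocation; what is used is that $-g$ itself is interior to $\bnonnegative$, i.e.\ $-g-c\in\bnonnegative$ for some $c>0$, exactly the paper's ``$-f$ is in the interior of $\bnonnegative$'', and your parenthetical Hermitian justification ($-G\geq cI$ whenever $G<0$) is the correct one.
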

 % If one of the previous conditions holds, then $N_B(\assess)$ is the smallest non degenerate closed convex cone that includes $\assess \cup \bnonnegative$. In particular $N_B(\assess)=\cl(\posi (\assess \cup \bnonnegative))$.
  \begin{proof}
We start by proving the equivalence between the last two points. For the remaining equivalences, first of all, notice that $\posi ( \assess \cup \bnonnegative) \subseteq \cl \posi ( \assess \cup \bnonnegative) \subseteq N_B(\assess)$. Hence, (3) implies (2). Assume that (2) holds, and assume $f + \delta \in \posi (\assess \cup \bnonnegative)$, for every $\delta > 0$. This means $f \in \cl(\posi (\assess \cup \bnonnegative))$.  Suppose $f <_B 0$. By Lemma \ref{lem:closure}, we have $f + \delta <_B 0$, for some $\delta > 0$, that is $f + \delta \in \bnegative$, a contradiction. We therefore conclude that $f \not<_B 0$, and that $\cl(\posi (\assess \cup \bnonnegative))$, which includes $\assess$, is coherent in $\btheory$. 
For the remaining equivalences, clearly (1) implies (2). Now, assume $f \in \bnegative$ and $f \in  \posi ( \assess \cup \bnonnegative)$. Hence, $-f$ is in the interior of $\bnonnegative$, meaning that for some $\delta > 0$, $-f - \delta = g \in \bnonnegative$. From this we get that $-\mathbbm{1}_R = \frac{g + f}{\delta} \in  \posi ( \assess \cup \bnonnegative)$.
%This also proves the last claim of the Proposition.
  \end{proof}
  
  Analogously to $\theory$, one can ask if and when $N_B$ is a closure operator whose class of non-trivial closed sets coincide with the P-coherent sets, or stated otherwise, if and when  $N_B$ associates to each $\bdomain \subseteq \gambles_R$ the intersection of all P-coherent sets that include $\bdomain$. It turns out that we need to add some conditions to the structural properties of $\btheory$ to obtain such property:
  
  \begin{proposition}\label{prop:eqq}
Assume that $\bnonnegative$ includes all positive constant gambles, and  moreover that
\begin{description}
\item[(*)] for every $f \in \gambles_R$, there is $\epsilon > 0$ such that $f+\epsilon \in \bnonnegative$.
\end{description}
Let $\assess \subseteq \gambles_R$ a set of assessments. The following are equivalent
\begin{enumerate}
 %   \item $\posi ( \assess \cup \bnonnegative) \cap \bnegative = \emptyset$
     \item $N_B(\assess)$ is P-coherent
     %\item $\assess$ is included in some set coherent  in $\btheory$
    \item $N_B(\assess) \neq \gambles_R$
\end{enumerate}
  \end{proposition}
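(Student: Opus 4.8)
The plan is to establish the two implications separately, noting at the outset that $N_B(\assess)=\cl\posi(\bnonnegative\cup\assess)$ is by construction a closed convex cone containing $\bnonnegative$; so the whole question is whether this cone avoids the P-negative gambles $\bnegative$ or instead swallows all of $\gambles_R$.

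For $(1)\Rightarrow(2)$ I would argue directly. P-coherence of $N_B(\assess)$ means $\bnegative\cap N_B(\assess)=\emptyset$, and since $\bnegative$ is non-empty this already forces $N_B(\assess)\neq\gambles_R$: were the two equal, the non-empty set $\bnegative\subseteq\gambles_R$ would meet $N_B(\assess)$, a contradiction. The only point needing care is the non-emptiness of $\bnegative$, which is a standing feature of the framework and is in any case recoverable from $(*)$: by positive homogeneity and a compactness argument on the unit sphere of the (finite-dimensional) space $\gambles_R$, hypothesis $(*)$ yields a single large constant $M\mathbbm{1}_R$ with $M\mathbbm{1}_R+f\in\bnonnegative$ for every unit $f$, so that $M\mathbbm{1}_R$ is an interior point of $\bnonnegative$ and hence $-M\mathbbm{1}_R\in\bnegative$.

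The substance is $(2)\Rightarrow(1)$, which I would prove by contraposition: assuming $N_B(\assess)$ is not P-coherent, I show $N_B(\assess)=\gambles_R$. By Proposition~\ref{prop:cohe}, failure of P-coherence is equivalent to $-\mathbbm{1}_R\in\posi(\assess\cup\bnonnegative)\subseteq N_B(\assess)$, so the Falsum already lies in the cone \emph{without} any closure bookkeeping. Then for an arbitrary $g\in\gambles_R$ I invoke $(*)$ to pick $\epsilon>0$ with $g+\epsilon\mathbbm{1}_R\in\bnonnegative\subseteq N_B(\assess)$; since $N_B(\assess)$ is a convex cone containing $-\mathbbm{1}_R$, it contains $-\epsilon\mathbbm{1}_R$ and therefore the sum $(g+\epsilon\mathbbm{1}_R)+(-\epsilon\mathbbm{1}_R)=g$. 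As $g$ was arbitrary, $N_B(\assess)=\gambles_R$, which is the contrapositive of the claim.

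The step I expect to be the crux is this last absorption: once the Falsum $-\mathbbm{1}_R$ is available, $(*)$ is exactly what lets every gamble be reached, by first lifting it into $\bnonnegative$ with a positive constant and then cancelling that constant against a multiple of $-\mathbbm{1}_R$. This is also where the hypothesis is indispensable — without $(*)$ an incoherent $N_B(\assess)$ need not exhaust $\gambles_R$, since large positive constants might fail to repair arbitrarily badly behaved gambles, and the equivalence would break. Routing the argument through Proposition~\ref{prop:cohe}, which deposits $-\mathbbm{1}_R$ in the \emph{un-closed} hull $\posi(\assess\cup\bnonnegative)$, is what keeps the absorption free of topological subtleties.
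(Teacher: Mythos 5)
Your proof of $(2)\Rightarrow(1)$ is exactly the paper's argument: contraposition, Proposition~\ref{prop:cohe} to place $-\mathbbm{1}_R$ in $\posi(\assess\cup\bnonnegative)\subseteq N_B(\assess)$, then hypothesis (*) plus closure under positive linear combinations to absorb an arbitrary $g$ via $g=(g+\epsilon\mathbbm{1}_R)+(-\epsilon\mathbbm{1}_R)$; the paper's own proof does precisely this, and like you it treats $(1)\Rightarrow(2)$ as immediate.

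The one blemish is your justification that $\bnegative\neq\emptyset$ in the $(1)\Rightarrow(2)$ direction. Your compactness argument assumes $\gambles_R$ is finite-dimensional, which is nowhere among the hypotheses (in the paper's Appendix setting $\gambles_R$ is just a closed subspace of $\gambles$ containing the constants), and even granting finite dimension, (*) only supplies a pointwise $\epsilon_f$ for each unit vector $f$: extracting a uniform constant $M$ from compactness of the sphere needs an additional regularity fact (e.g.\ that the gauge $e(f)=\inf\{\epsilon\geq 0: f+\epsilon\mathbbm{1}_R\in\bnonnegative\}$ is a finite convex, hence continuous, function), which you do not establish. None of this machinery is needed: since $\bnonnegative$ contains all positive constant gambles, $(1-\delta)\mathbbm{1}_R\in\bnonnegative$ for any $\delta\in(0,1)$, so $-\mathbbm{1}_R+\delta\mathbbm{1}_R\leq_B 0$, i.e.\ $-\mathbbm{1}_R\in\bnegative$ in exactly the sense used in the proof of Proposition~\ref{prop:cohe} --- and this uses only the standing hypothesis, not (*). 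With that one-line replacement your proof is complete and coincides with the paper's.
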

\begin{proof}
Since (1) implies (2), we need to verify the other direction. Assume $N_B(\assess)$ is not P-coherent. By Proposition \ref{prop:cohe}, $-1 \in N_B(\assess)$, and thus $- \epsilon \in N_B(\assess)$, for every $\epsilon \geq 0$. Let $f \in \gambles_R$. By (*) there is $\epsilon > 0$ such that $f+\epsilon \in \bnonnegative \subseteq N_B(\assess)$.  Hence, by closure under linear combinations,  $f + \epsilon + (- \epsilon) = f \in  N_B(\assess)$.
\end{proof}
  Notice that condition (*) is satisfied by the P-coherent model $\btheory$ of QM introduced in Section \ref{sec:coheqm}, as well as by the model of Section \ref{sec:ent_not_only}. On the other hand, (*) is not the only condition to force the equivalence between the two claims in Proposition \ref{prop:eqq}. One could have indeed added the condition that $\leq$ coincides on $\bnonnegative$ with the order $\leq_B$\footnote{Stated otherwise, for every $f, g \in \bnonnegative$, if $f \leq g $ then $f \leq_B g$.} We plan for future work to study the natural condition on $\bnonnegative$ related to the property for $N_B$ to be the closure operator induced by P-coherent sets.

    In what follows, we prove the main theorem of the paper.
  
 \begin{proofof}{Theorems \ref{th:fundamental}}
  Assume that $\bnonnegative$ includes all positive constant gambles and that it is closed (in $\gambles_R$). 
Let $\bdomain \subseteq \gambles_R$ be a P-coherent set of  desirable gambles. We have to verify that the following statements are equivalent:
\begin{enumerate}
   \item $\bdomain$ includes a negative gamble that is not in $\bnegative$.
\item $\posi(\nonnegative\cup \mathcal{G})$ is incoherent, and thus $\mathcal{P}$ is empty
    \item $\bdomain^{\circ}$ is not (the restriction to  $\gambles_R$ of) a closed convex set of mixtures of classical evaluation functionals. 
    \item The extension   $ \bdomain^\bullet$ of $\bdomain^{\circ}$ in the space $\mathcal{M}$ of all charges in $\pspace$ includes only signed charges (quasi-probabilities).
\end{enumerate}
     % \begin{proof}
  First of all, notice that the restriction to $\gambles_R$ of the set of all normalised charges that correspond to a bounded linear functionals coincides with $\bdomain^{\bullet}$. Given this, the equivalence between (3) and (4) is immediate, whereas the equivalence between (2) and (4) is given by Theorem \ref{cor:noncoherent}.  We finally verify the equivalence between (1) and (3). In this case, the direction from left to right being obvious, the other direction is due to the fact that $g \leq f$, for every $g \in \bdomain$ and $f \in \posi(\nonnegative\cup \bdomain)\setminus \bdomain$.
  \end{proofof}
  
  The next result provides a necessary and  sufficient condition for the existence of a P-coherent set of  desirable gambles satisfying each claim of  Theorem \ref{th:fundamental}.
    \begin{proposition}\label{prop:funda}
  Assume that $\gambles_R$ includes all positive constant gambles and $\bnonnegative$ is closed (in $\gambles_R$). The following two claims are equivalent
  \begin{itemize}
  \item there is a P-coherent set of  desirable gambles $\bdomain \subseteq \gambles_R$ that includes a negative gamble that is not P-negative
  \item $\bnonnegative \subsetneq \nonnegative_R$
  \end{itemize}
  %If the second claim occurs, then all maximal $\bdomain \subseteq \gambles_R$ satisfying the first claims are given by $N_B(\bnonnegative \cup \{g\})$, for some $g \in \negative_R \setminus  \bnegative$.
  \end{proposition}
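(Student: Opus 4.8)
The plan is to prove the two implications separately, relying throughout on the two structural facts that make the theory sound: that P-nonnegativity implies genuine nonnegativity, i.e.\ $\bnonnegative \subseteq \nonnegative_R$, and that the P-negative cone is (the reflection of) the interior of the P-nonnegative cone, $\bnegative = \{f \in \gambles_R : -f \in \operatorname{int}\bnonnegative\}$ --- the same relation already exploited in the proof of Proposition~\ref{prop:cohe}. I will also use the elementary identity $\operatorname{int}\nonnegative_R = \positive_R$, valid for the supremum norm on $\gambles_R$ (a gamble is interior to $\nonnegative_R$ exactly when it is bounded below by a positive constant).

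For the direction ``a witnessing $\bdomain$ exists'' $\Rightarrow$ $\bnonnegative \subsetneq \nonnegative_R$ I would argue by contradiction. Since $\bnonnegative \subseteq \nonnegative_R$ always holds, strictness can fail only through equality. If $\bnonnegative = \nonnegative_R$, then $\operatorname{int}\bnonnegative = \operatorname{int}\nonnegative_R = \positive_R$, whence $\bnegative = \{f : -f \in \positive_R\} = \negative_R$. But hypothesis~(1) supplies a P-coherent $\bdomain$ containing a gamble $g \in \negative_R \setminus \bnegative$; with $\bnegative = \negative_R$ this forces $g \in \negative_R$ and $g \notin \negative_R$ at once, a contradiction. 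Hence $\bnonnegative \subsetneq \nonnegative_R$. This half is essentially bookkeeping once the two displayed identities are in place.

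The converse is the constructive and more delicate direction. Starting from $h_0 \in \nonnegative_R \setminus \bnonnegative$, I would first push it into the strictly positive region while keeping it outside $\bnonnegative$: because $\bnonnegative$ is closed, its complement is open, so for all sufficiently small $t>0$ the gamble $p := h_0 + t\mathbbm{1}_R$ still lies outside $\bnonnegative$, while $\inf p \ge t > 0$ gives $p \in \positive_R$. Setting $g := -p$ yields $g \in \negative_R$ (since $\sup g = -\inf p < 0$) and $g \notin \bnegative$ (since $-g = p \notin \bnonnegative \supseteq \operatorname{int}\bnonnegative$). It then remains to check that $\bdomain := N_B(\{g\})$ is P-coherent. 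By Proposition~\ref{prop:cohe} this amounts to $-\mathbbm{1}_R \notin \posi(\{g\}\cup\bnonnegative)$. Writing a putative representation $-\mathbbm{1}_R = b + \lambda g$ with $b \in \bnonnegative$, $\lambda \ge 0$ (legitimate because $\bnonnegative$ is a convex cone), the case $\lambda=0$ is impossible as $-\mathbbm{1}_R \notin \nonnegative_R \supseteq \bnonnegative$, and for $\lambda>0$ the equation rearranges to $p - \tfrac1\lambda\mathbbm{1}_R \in \bnonnegative$; adding the positive constant $\tfrac1\lambda\mathbbm{1}_R \in \bnonnegative$ and using closure of the convex cone $\bnonnegative$ under sums would force $p \in \bnonnegative$, contradicting the construction of $p$.

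I expect the crux to be precisely this last verification: simultaneously arranging that the witness $g$ is strictly negative, escapes $\bnegative$, and generates no Dutch book. These three requirements pull in different directions, and it is the combination of closedness of $\bnonnegative$ (to move $h_0$ off the boundary without re-entering $\bnonnegative$) with the convex-cone structure containing the positive constants (to rule out $-\mathbbm{1}_R$) that reconciles them. One subtlety worth flagging is that I must avoid assuming strictly positive gambles are automatically P-nonnegative (false for the polynomial instance of Section~\ref{sec:ent_not_only}); the argument above sidesteps this by perturbing only with constants and invoking closedness rather than interiority of $\bnonnegative$.
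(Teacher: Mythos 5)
The paper states Proposition~\ref{prop:funda} without any proof (it appears at the end of Appendix~C, immediately before the section on the remaining QT axioms), so there is no official argument to compare yours against; what you have written supplies the missing proof, and it is correct within the standing assumptions of Appendix~C. Your constructive direction is exactly the natural argument: closedness of $\bnonnegative$ lets you perturb $h_0\in\nonnegative_R\setminus\bnonnegative$ by $t\mathbbm{1}_R$ while staying outside $\bnonnegative$, and the P-coherence of $N_B(\{g\})$ then reduces, via Proposition~\ref{prop:cohe}, to excluding $-\mathbbm{1}_R=b+\lambda g$, which your cone computation ($p-\tfrac{1}{\lambda}\mathbbm{1}_R\in\bnonnegative$ plus $\tfrac{1}{\lambda}\mathbbm{1}_R\in\bnonnegative$ would force $p\in\bnonnegative$) handles cleanly. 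Your closing caveat --- that one must not assume strictly positive gambles are P-nonnegative, and should perturb only by constants --- is exactly the right subtlety for the polynomial instance of Section~\ref{sec:ent_not_only}.

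Two points deserve flagging. First, in the easy direction you invoke the full identity $\bnegative=\{f:-f\in\operatorname{int}\bnonnegative\}$, but the paper only ever uses the inclusion $\bnegative\subseteq-\operatorname{int}\bnonnegative$ (in the proof of Proposition~\ref{prop:cohe}); the reverse inclusion is never stated as an axiom of the abstract setting, though it holds in both instances ($G<0$ versus $G\geq0$). You can avoid it entirely: if $\bnonnegative=\nonnegative_R$ and a P-coherent (hence deductively closed) $\bdomain$ contains $g$ with $\sup g=-\delta<0$, then $-g-\delta\mathbbm{1}_R\in\nonnegative_R=\bnonnegative\subseteq\bdomain$, so $-\delta\mathbbm{1}_R=g+(-g-\delta\mathbbm{1}_R)\in\bdomain$ and hence $-\mathbbm{1}_R\in\bdomain$, contradicting P-coherence via the assumption $-\mathbbm{1}_R\in\bnegative$ --- the same assumption Proposition~\ref{prop:cohe} already needs for its ``clearly'' step. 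Second, the proposition as printed assumes only that $\gambles_R$ contains the positive constants, whereas your proof (like Lemma~\ref{lem:closure}, Proposition~\ref{prop:cohe} and Theorem~\ref{th:fundamental}) needs them in $\bnonnegative$; you silently read the hypothesis the stronger way, which is almost certainly the intended reading, but you should say so explicitly.
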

  
   \section{Technicalities for Subsection \ref{subsec:otherax}}\label{app:otherax}
   In this section we discuss how to exploit the correspondence with the system introduced in  \cite{benavoli2016quantum} in the aim of deriving the remaining three axioms of QT.
   
   Recall that a subset $ \He^{n\times n}$ is  said to be coherent if it is a  convex cone $\mathscr{C}$ containing the set  $\{G\in\He^{n\times n}:G\gneq0\}$ of all PSD matrices in $\He^{n\times n}$ and disjoint from the interior of $\{G\in\He^{n\times n}:G \leq 0\}$. We know that there is a bijective correspondence between closed convex sets of density matrices and coherent subsets of of $\He^{n\times n}$, but also between closed convex sets of density matrices and P-coherent sets of gambles. 
By looking at such correspondences, it is then immediate to verify that:
   
   \begin{proposition}
   The map $\mathfrak{f}: \bdomain \mapsto \{ G \in \He^{n\times n} \mid  x^\dagger G x \in \bdomain\}$ is a bijection between  coherent subsets of $\He^{n\times n}$ and P-coherent sets of gambles.
   \end{proposition}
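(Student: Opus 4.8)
The plan is to exhibit the map $\Phi\colon \He^{n\times n}\to\gambles_R$ defined by $\Phi(G):=(\otimes_{j=1}^m x_j)^\dagger G (\otimes_{j=1}^m x_j)$ as a linear isomorphism of real vector spaces, and then to observe that $\mathfrak{f}$ is simply the inverse-image map under $\Phi$, restricted to the two families of cones. First I would argue that $\Phi$ is injective: by Lemma~\ref{lem:uniqueX} the matrix $(\otimes_{j=1}^m x_j)(\otimes_{j=1}^m x_j)^\dagger$ carries no symmetry beyond being self-adjoint, so two Hermitian matrices inducing the same quadratic form must coincide. Surjectivity is immediate, since $\gambles_R$ is defined as the image of $\He^{n\times n}$ under $\Phi$. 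Hence $\Phi$ is a linear bijection between finite-dimensional real vector spaces, thus a homeomorphism, and $\mathfrak{f}(\bdomain)=\Phi^{-1}(\bdomain)$ is automatically a set-level bijection; it only remains to check that it identifies the two notions of coherence.

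Next I would verify that $\Phi^{-1}$ carries the defining conditions of P-coherence exactly onto those of coherence in $\He^{n\times n}$. Since $\Phi$ is linear, $\bdomain$ is a convex cone if and only if $\mathfrak{f}(\bdomain)$ is one; since $\Phi$ is a homeomorphism, $\bdomain$ is closed if and only if $\mathfrak{f}(\bdomain)$ is closed. For the order conditions, note that by definition $\bnonnegative=\{\Phi(G):G\geq0\}$, so $\bnonnegative\subseteq\bdomain$ if and only if every PSD matrix lies in $\mathfrak{f}(\bdomain)$; likewise $\bnegative=\{\Phi(G):G<0\}$, and the set of negative-definite matrices is precisely the interior of $\{G\in\He^{n\times n}:G\leq0\}$, so $\bnegative\cap\bdomain=\emptyset$ if and only if $\mathfrak{f}(\bdomain)$ is disjoint from that interior. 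Assembling these equivalences shows that $\bdomain$ is P-coherent exactly when $\mathfrak{f}(\bdomain)$ is coherent, while the inverse assignment $\mathscr{C}\mapsto\Phi(\mathscr{C})=\{x^\dagger G x:G\in\mathscr{C}\}$ sends coherent cones back to P-coherent ones, giving the asserted bijection.

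The main obstacle, and indeed the only genuinely non-formal step, is the injectivity of $\Phi$, which rests entirely on Lemma~\ref{lem:uniqueX}. This is exactly the feature separating the quantum model from the polynomial model of Section~\ref{sec:ent_not_only}, where the analogue of $\Phi$ has a nontrivial kernel (the dual matrix acquires extra symmetries), so distinct Hermitian matrices represent the same gamble and no such clean matrix-level bijection exists. A secondary technical point worth spelling out is the topological identification used above: that $\Phi$ is a homeomorphism, so that closedness transfers in both directions, and that the interior of the negative-semidefinite cone coincides with the open cone of negative-definite matrices. Both are standard facts about $\He^{n\times n}$ in its norm topology, and once they are granted the remainder is a routine transport of structure along $\Phi$.
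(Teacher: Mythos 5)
Your proof is correct, but it takes a genuinely different route from the paper's. The paper never argues on the primal side at all: it obtains $\mathfrak{f}$ by composing two duality maps, namely the bijection of Theorem~\ref{th:dualSOS} between P-coherent sets of gambles $\bdomain$ and closed convex sets of density matrices $\mathcal{Q}$, and the bijection from \cite{benavoli2016quantum} (with the openness axiom (S3) replaced by the closedness axiom (S3'), as stipulated in Section~\ref{subsec:otherax}) between such sets $\mathcal{Q}$ and coherent cones $\mathscr{C}\subseteq\He^{n\times n}$; the proposition is then declared ``immediate'' by inspecting these correspondences. You instead work directly with the linear map $\Phi\colon G\mapsto (\otimes_{j=1}^m x_j)^\dagger G(\otimes_{j=1}^m x_j)$, show it is an isomorphism onto $\gambles_R$, and transport structure along $\Phi^{-1}$: convex cones to convex cones, closed sets to closed sets, $\bnonnegative$ to the PSD cone, and $\bnegative$ to the interior of the negative-semidefinite cone. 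Your argument is more elementary and self-contained --- it needs neither the dual state-space objects nor the external reference --- and it isolates exactly where the quantum model is special, as your remark about Section~\ref{sec:ent_not_only} correctly notes (there the analogue of $\Phi$ has a nontrivial kernel, precisely because Lemma~\ref{lem:uniqueX} fails, so no matrix-level bijection exists). What the paper's route buys, by contrast, is that the compatibility of $\mathfrak{f}$ with the duals $\mathcal{Q}$ comes for free, and that compatibility is what is actually exploited downstream to transfer L\"uder's rule and the time-evolution postulate.

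One point deserves a line of care in your write-up: Lemma~\ref{lem:uniqueX} as stated only asserts pairwise distinctness of the upper-triangular entries of $X=(\otimes_{j=1}^m x_j)(\otimes_{j=1}^m x_j)^\dagger$ as functions, whereas injectivity of $\Phi$ requires their linear independence, i.e.\ that $(\otimes_{j=1}^m x_j)^\dagger G(\otimes_{j=1}^m x_j)\equiv 0$ on the product of unit spheres forces $G=0$. This does hold --- by bihomogeneity in each factor the form vanishes for all $x_j\in\complex^{n_j}$, and then complex polarization applied factor by factor kills each block of $G$ --- and it is the same implicit strengthening the paper itself relies on when it identifies $\gambles_R^*$ with $\complex^d$, $d=n(n+1)/2$. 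So this is not a gap that distinguishes your proof from the paper's, but making the polarization step explicit would render the injectivity claim airtight rather than an appeal to the lemma's letter.
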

   Based on this correspondence, we can thus exploit the results in  \cite{benavoli2016quantum} to derive  L\"{u}der's rule (measurement updating) and Schr\"odinger's rule (time evolution).
   
  \subsection{L\"uder's rule}
It states the following:
   
\begin{itemize}
\item
Quantum projection measurements are described by a collection  
$\{\Pi_i\}_{i=1}^n$  of
projection operators that satisfy the completeness equation $\sum_{i=1}^n \Pi_i =I$. These are operators acting on the 
state space of the
system being measured. If the state of the quantum system is $\rho$ immediately
before the measurement then  the state after the measurement is
$$
\hat{\rho}=\dfrac{\Pi_i \rho \Pi_i}{Tr(\Pi_i \rho \Pi_i)},
$$
provided that $Tr(\Pi_i \rho \Pi_i)>0$ and the probability that result $i$ occurs is given by
$p_i=Tr(\Pi_i \rho \Pi_i)$.
\end{itemize}

A projection-valued measurement $\Pi^{*}=\{\Pi_i^*\}_{i=1}^n$ can be seen as a partition of unity $\{x^\dagger\Pi_i^*x \}_{i=1}^n$. Thus an event ``indicated'' by a certain projector $\Pi_i$
in $\Pi=\{\Pi_i\}_{i=1}^n$ can also be seen as the function $\pi_i(x):=x^\dagger\Pi_i^*x$. The information it represents is: an experiment $\Pi$ is performed and the event indicated by $\Pi_i$ happens.\footnote{
We assume  that the quantum measurement device is a ``perfect meter'' (an ideal common  assumption in QM), i.e., there are not observational errors -- Alice can trust the
received information.} Under this assumption, Alice can focus on gambles that are contingent on the event $\Pi_i$: these are the gambles such that ``outside'' $\Pi_i$ no utile is received or due -- status quo is maintained --; in other words, they represent gambles that are called off if the outcome of the experiment is not $\Pi_i$. 
Mathematically, we define Alice's conditional set of desirable gambles as follows.

\begin{definition}
 Let  $\bdomain$ be an P-coherent set of gambles, the set obtained as
\begin{equation}
\label{eq:condition}
\bdomain_{\Pi_i}=\left\{g(x) \in \gambles_R \mid  %g(x) \in  \bnonnegative \textit{ or }
x^\dagger \Pi_i G \Pi_i  x\in \bdomain \right\}
\end{equation} 
is  called the {\bf set of desirable gambles conditional} on  $\Pi_i$. \end{definition}

Notice that $x^\dagger \Pi_i G \Pi_i  x= \alpha \pi_i(x)$.

We can also compute the dual of  $\bdomain_{\Pi_i}$, i.e.,
$\mathcal{Q}_{\Pi_i}$, and thus obtain the
\begin{description}
\item[Subjective formulation of L\"uder's rule:]~\\
Given a closed convex set of states $\mathcal{Q}$, the corresponding conditional set on $\Pi_i$ is obtained as
\begin{equation}
\label{eq:rhobayes}
 \mathcal{Q}_{\Pi_i}=\left\{ \dfrac{\Pi_i Z \Pi_i}{Tr(\Pi_i Z \Pi_i)} \Big|  z \in 
\mathcal{Q}\right\},
\end{equation} 
 provided that $Tr(\Pi_i Z \Pi_i)>0$  for every $z \in \mathcal{Q}$. Note that the latter  condition  implies that $\pi_j(x) \notin \mathcal{Q}$
 for any $j\neq i$.
\end{description} 

The following diagram gives the relationships among  $\bdomain,\mathcal{Q}, \bdomain_{\Pi_i},\mathcal{Q}_{\Pi_i}$.
$$
\begin{tikzpicture}
  \matrix (m) [matrix of math nodes,row sep=3em,column sep=6em,minimum width=2em]
  {
     \bdomain & \,\bdomain_{\Pi_i} \\
     \mathcal{Q} & \mathcal{Q}_{\Pi_i} \\};
  \path[-stealth]
    (m-1-1) edge [<->] node [left] {dual} (m-2-1)
            edge [double] node [below] {conditioning} (m-1-2)
    (m-2-1.east|-m-2-2) edge [double]  node [below] {conditioning}
          (m-2-2)
    (m-2-2) edge [<->] node [right] {dual} (m-1-2);
\end{tikzpicture}
$$

\subsection{Time evolution postulate}
It states the following:
   
\begin{itemize}\item
The evolution of a closed quantum system is described by a unitary
transformation. That is, the state $\rho$ of the system at time $t_0$ is related 
to the state
$\rho'$ of the system at time $t_1>t_0$ by a unitary operator $U$ which depends only 
on the
times $t_0$ and $t_1$, $\rho' = U \rho U^\dagger$.
\end{itemize}

Let us consider the dynamics of sets of gambles at present time $t_0$ and future time $t_1$ under the assumption that no information at all is received during such an interval of time (i.e., we have a closed quantum system). The focus is on characterising the coherence of sets of gambles in this time. 

To this end, we add the following temporal postulate:
\begin{enumerate}[label=\upshape B3.,ref=\upshape B3]
\item\label{eq:b3}  A temporal P-coherent transformation is a map $\phi(\cdot,t_1,t_0)$ from $\gambles_R$ to itself that satisfies the following properties: 
 \begin{itemize}
 \item[(i)] $\phi(\cdot,t_0,t_0)$ is the identity map; 
 \item[(ii)]  $\phi(\cdot,t_1,t_0)$ is onto; 
 \item[(iii)] $\phi(\gambles^+,t_1,t_0)=\gambles^+$;
\item[(iv)] $\phi(\cdot,t_1,t_0)$ is linear and constant preserving.
\end{itemize}
 \end{enumerate}
The rationale behind these conditions is the following.

Condition~(i) is obvious. 

Condition~(ii) is a way of stating that sets of desirable gambles are only established at present time $t_0$, since
any gamble $g_1$ at time $t_1$ corresponds to an element $g_0$  at time $t_0$. 

Condition~(iii)  means that no further information is received from time  $t_0$ to time  $t_1$.

Finally, condition~(iv) states in particular once again that the utility scale is linear.

By \cite{benavoli2016quantum}[Theorem A.9] and the fact that $\mathfrak{f}$ is a bijection preserving coherence, the temporal P-coherence postulate \eqref{eq:b3} leads to the following:
   
\begin{description}
\item[Subjective formulation of the time evolution postulate of QT:]~\\
(1) All the transformations $\phi(\cdot,t_1,t_0)$ defined above
are of the following form
$$
g(x)  \xhookrightarrow{\phi} h(x)=x^\dagger (U^\dagger G U) x ,
$$
for some unitary or anti-unitary matrix $U \in \He^{n\times n}$, which only depends on the times $t_1,t_0$ and is equal to the identity
for $t_1=t_0$.\\
(2)  The transformation $\phi(\cdot,t_1,t_0)$  preserves P-coherence:\vspace{0.3cm}\\
\centerline{if $\bdomain_0 $ is P-coherent, then $\bdomain_1=\{g(x) \in \gambles_R \mid x^\dagger (U^\dagger G U) x \in \bdomain_0\}$ is  also P-coherent.}
\end{description}

By exploiting duality, we can also reformulate the above results  in terms of sets of states and derive the time evolution postulate as a direct consequence of temporal P-coherence.
  
\end{appendix}

\bibliographystyle{ieeetr}
\bibliography{biblio}

\end{document}